\newif\ifTAC

\TACfalse

\documentclass[journal,twoside,web]{ieeecolor}
\usepackage{generic}

\usepackage{textcomp}
\usepackage{balance}
\usepackage{algorithm}
\usepackage{algpseudocode}
\usepackage{enumerate}
\usepackage{amsmath} 
\usepackage{amssymb}  
\usepackage{graphicx} 
\usepackage{amsfonts}
\usepackage{cite}
\usepackage[font=small,skip=1pt]{subcaption} 
\usepackage[font=small,skip=1pt]{caption}
\usepackage{url}
\usepackage{booktabs}
\usepackage{bm}

\usepackage{soul}  
\usepackage{comment} 
\usepackage{mathtools} 

\usepackage{amsthm} 
\usepackage[utf8]{inputenc}
\usepackage[english]{babel}

\usepackage{siunitx}    

\newtheorem{remark}{Remark}
\newtheorem{theorem}{Theorem}
\newtheorem{proposition}{Proposition}
\newtheorem{definition}{Definition}[section]
\newtheorem{lemma}{Lemma}
\newtheorem{assumption}{Assumption}

\usepackage{ifthen} 

\newboolean{conference}
\setboolean{conference}{true}  

\DeclareMathOperator*{\diag}{diag}
\DeclareMathOperator*{\blkdiag}{blkdiag}

\DeclareMathOperator*{\argmin}{\arg\!\min}

\DeclareMathOperator*{\col}{col}

\DeclareMathOperator*{\fix}{Fix}
\newcommand{\dist}[1]{\ensuremath{\mathrm{d}_{#1}}}

\newcommand{\nR}[1]{\mathbb{R}^{#1}}		

\newcommand{\vect}[1]{\ensuremath{\bm{#1}}}		
\newcommand{\matr}[1]{\ensuremath{\bm{#1}}}		

\newcommand{\define}{\coloneqq}			
\newcommand{\norm}[1]{\left\lVert#1\right\rVert}

\newcommand{\operator}[1]{\mathrm{#1}}
\newcommand{\ideop}{\operator{Id}}

\newcommand{\verticalstack}[4]{\col(#1_{#2})_{#3 \in #4}}
\newcommand{\blkdiagconcat}[4]{\blkdiag(#1_{#2})_{#3 \in #4}}

\newcommand{\bPhi}{\boldsymbol{\Phi}}
\newcommand{\bGamma}{\boldsymbol{\Gamma}}

\newcommand{\eye}[1]{\matr{I}_{#1}}
\newcommand{\zeros}[1]{\matr{0}_{#1}}
\newcommand{\ones}[1]{\matr{1}_{#1}}

\newcommand{\pos}{\vect{p}}				
\newcommand{\vel}{\vect{v}}				

\newcommand{\state}{\boldsymbol{x}}
\newcommand{\sysinput}{\boldsymbol{u}}

\newcommand{\Nrob}{\ensuremath{N}}

\newcommand{\graph}{\ensuremath{\mathcal{G}}}
\newcommand{\edges}{\ensuremath{\mathcal{E}}}
\newcommand{\vertices}{\ensuremath{\mathcal{V}}}
\newcommand{\neigh}{\ensuremath{\mathcal{N}}}

\newcommand{\stateest}{\hat{\state}}

\newcommand{\meas}{\boldsymbol{y}}

\newcommand{\measmat}{\boldsymbol{H}}

\newcommand{\statemat}{\boldsymbol{A}}
\newcommand{\inputmat}{\boldsymbol{B}}

\newcommand{\statetrans}{\bPhi}
\newcommand{\obsgram}{\boldsymbol{G}}

\newcommand{\ffmatrix}{\bGamma}

\newcommand{\infovec}{\boldsymbol{z}}
\newcommand{\infomat}{\boldsymbol{S}}

\newcommand{\covmat}{\boldsymbol{P}}
\newcommand{\outcov}{\boldsymbol{R}}

\newcommand{\ffactor}{\gamma}
\newcommand{\correction}{\boldsymbol{\xi}}
\newcommand{\estcorrection}{\hat{\correction}}

\newcommand{\procnoisemat}{\boldsymbol{Q}}

\newcommand{\knownterm}{\boldsymbol{b}}

\newcommand{\localcorrection}[2]{\correction_{#1}^{(#2)}}
\newcommand{\extloccorr}[1]{\correction_{\neigh_{#1}}}
\newcommand{\estextloccorr}[2]{\hat{\correction}_{\neigh_{#1}, #2}}
\newcommand{\estlocalcorrection}[2]{\hat{\correction}_{#1}^{(#2)}}

\newcommand{\iteradmm}{{h}}
\newcommand{\maxiteradmm}{{H}}

\newcommand{\frozeniteradmm}{{{k}}}

\newcommand{\estate}{\tilde{\state}}
\newcommand{\ecorr}{\tilde{\correction}}

\newcommand{\smallgain}{{\varepsilon}}

\title{\LARGE \bf 
ADMM-Based Distributed Kalman-like Observer \\ with Applications to Cooperative Localization}

\author{
Nicola De Carli, Nicola Bastianello, Dimos V. Dimarogonas
\thanks{
This work was supported in part by the Wallenberg AI, Autonomous Systems and Software Program (WASP) funded by the 
Knut and Alice Wallenberg (KAW) Foundation, the ERC LEAFHOUND Project, the Swedish Research Council (VR), Digital Futures and the Swedish Foundation for Strategic Research project.
}
\thanks{
Nicola De Carli, Nicola Bastianello and Dimos V. Dimarogonas are with the Department of Decision and Control Systems, KTH Royal Institute of Technology, Stockholm, Sweden. E-mail:
\{ndc,nicolba,dimos\}@kth.se 
}
}

\ifTAC
    \def\BibTeX{{\rm B\kern-.05em{\sc i\kern-.025em b}\kern-.08em
        T\kern-.1667em\lower.7ex\hbox{E}\kern-.125emX}}
\else
    \def\BibTeX{{\rm B\kern-.05em{\sc i\kern-.025em b}\kern-.08em
        T\kern-.1667em\lower.7ex\hbox{E}\kern-.125emX}}
\fi

\begin{document}

\maketitle

\begin{abstract}
This paper addresses distributed state estimation for multi-agent systems with local and relative measurements, motivated by cooperative localization problems in which the global state dimension scales with the size of the network. We consider a Kalman-like observer in information form and introduce a sparsity-preserving prediction step based on an exponential forgetting factor, thereby avoiding the dense Riccati recursion of the standard information filter. The correction step is recast as a strongly convex quadratic program with structure induced by the sensing graph, which enables a distributed solution based on the \emph{alternating direction method of multipliers} (ADMM). In the resulting scheme, each agent updates local copies of its own correction variable and those of its neighbors using only local communication, thus avoiding centralized matrix inversion and consensus over full global-state quantities. A two-time-scale stability analysis is developed for the interconnected observer: the reduced estimation-error dynamics are shown to be uniformly exponentially stable, the ADMM dynamics define an exponentially stable fast subsystem, and these properties are combined to establish uniform exponential stability of the overall distributed observer. Numerical simulations in a multi-agent cooperative localization scenario illustrate the performance of the proposed distributed observer.
\end{abstract}

\begin{IEEEkeywords}
Multi-agent systems, distributed optimization, cooperative localization, observer design
\end{IEEEkeywords}

\section{Introduction}
Distributed cooperative state estimation is a fundamental problem in networked systems, with applications including cooperative localization \cite{luft2018recursive, kia2016cooperative, roumeliotis2002distributed, chang2021resilient, de2024adaptive} and distributed power network state-estimation \cite{gomez2011taxonomy, yang2021distributed}. In such scenarios, agents may have access to local measurements, depending only on their own state, as well as relative measurements involving neighboring agents. It could be the case, however, that local measurements are available only to a subset of the agents, while the others must estimate their state through relative sensing and information exchange. While these sensing modalities provide rich information, they also induce estimation problems whose dimension grows with the size of the network, making centralized approaches increasingly difficult to scale.


Distributed estimation strategies are thus particularly appealing. The Kalman filter is the standard solution for state estimation, and its information form is well suited to distributed sensing since measurement information is aggregated additively~\cite{battistelli2014consensus, kamal2013information, sebastian2023eco}.

A large body of work has addressed distributed Kalman filtering and information fusion in sensor networks \cite{battistelli2014consensus, kamal2013information, sebastian2023eco}, often relying on consensus mechanisms to aggregate information terms. These approaches are effective when the network estimates a global state of fixed dimension, as in target tracking \cite{kamal2013information}. In contrast, in cooperative localization and related multi-agent estimation problems, the state dimension grows with the number of agents, while each robot typically needs an accurate estimate only of its own state and, at most, of those of nearby agents. As a result, reconstructing the full state of the entire network may be unnecessary and computationally inefficient.
In this setting, the direct use of the information filter faces two main limitations. First, the standard prediction step generally destroys the sparsity of the information matrix, since the Riccati recursion introduces dense couplings even when the underlying sensing interactions are local. Second, the correction step requires the inversion of a global information matrix, which is incompatible with fully distributed implementations and scales poorly with the number of agents.

In the cooperative localization literature, a number of approaches have been proposed to distribute Kalman filtering while coping with the dense covariance matrix and the cross-correlations induced by inter-agent measurements
\cite{luft2018recursive,kia2016cooperative,jung2020decentralized,roumeliotis2002distributed,chang2021resilient}. Broadly speaking, these methods can be grouped into three main categories: $(i)$ centralized-equivalent schemes
\cite{roumeliotis2002distributed,kia2016cooperative}, $(ii)$ approximation-based schemes that modify the covariance propagation or update
\cite{luft2018recursive,jung2020decentralized,roumeliotis2002distributed}, and $(iii)$ covariance-intersection (CI) based methods, which are designed to fuse estimates under unknown correlations
\cite{arambel2001covariance,carrillo2013decentralized,klingner2019fault,chang2021resilient}.

Centralized-equivalent approaches
\cite{roumeliotis2002distributed,kia2016cooperative}
aim at reproducing exactly the estimate that would be obtained by a centralized Kalman filter. Their main advantage is that they preserve the optimality and consistency properties of the centralized estimator. However, this is typically achieved at the expense of substantial communication and bookkeeping requirements, often involving all-to-all information exchange or the propagation of cross-covariance terms across the network. As a result, these methods become increasingly difficult to scale as the number of agents grows.

A second line of work seeks scalability by approximating the covariance recursion. In particular, the method in \cite{luft2018recursive}, on which \cite{jung2020decentralized} further builds, introduces approximations in the prediction and correction steps so as to preserve a distributed structure while remaining close, in a suitable sense, to the nominal Riccati iteration. These methods have been shown to yield good empirical performance in cooperative localization problems
\cite{luft2018recursive,jung2020decentralized,de2021online},
and they offer an attractive compromise between estimation quality and computational tractability. Nevertheless, because the exact covariance recursion is modified, the resulting estimates are in general not consistent, and rigorous stability properties of the overall observer are typically not established.

A third class of approaches relies on covariance intersection and related conservative fusion rules
\cite{arambel2001covariance,carrillo2013decentralized,klingner2019fault,chang2021resilient}. These methods are particularly appealing when the cross-correlations among local estimates are unknown or difficult to track, since they allow multiple estimates to be fused while preserving consistency. Their main drawback, however, is conservatism: to guarantee consistency without explicit correlation information, they often underuse the information provided by neighboring agents, which may lead to degraded estimation accuracy.

Overall, most existing distributed cooperative localization methods primarily focus on preventing overconfidence when fusing information across agents, or on approximating the centralized covariance recursion in a scalable way. By contrast, the question of establishing stability of the resulting distributed observer is often left open. This gap is one of the main motivations for the present work, which seeks to combine a fully distributed implementation with a stability guarantee while still exploiting information from all agents, so as to asymptotically recover the corresponding centralized Kalman-like observer.

In our previous work \cite{de2025distributed}, building on the Kalman-like design of \cite{bernard2020semi}, we proposed a continuous-time distributed Kalman-like observer in which the process-noise term is replaced by a forgetting factor, thereby preserving the sparsity of the information matrix, while the correction step is computed through a distributed dynamic inversion scheme. That approach relies on a continuous-time version of the Richardson iteration \cite{bertsekas2015parallel}, which can be used to solve sparse positive-definite linear systems in a distributed manner \cite{de2024distributed}. Although this method can provide good estimation performance, its convergence may become slow when the matrix to be inverted is ill-conditioned. Moreover, in the discrete-time setting, the admissible Richardson step size depends on spectral properties of the time-varying information matrix. In practice, distributed implementations typically require a fixed step size that remains valid over time; however, deriving a sufficiently tight offline bound may be conservative, while online adaptation is difficult to realize in a fully distributed way. 

To address these limitations, this paper proposes a distributed discrete-time Kalman-like observer that integrates a sparsity-preserving prediction step with an optimization-based correction stage. By employing an exponential forgetting factor \cite{cticlea2013exponential,bernard2020semi}, the prediction maintains the sparsity of the sensing topology and avoids the computational burden of standard Riccati recursions. The correction step is then recast as a structured quadratic program and solved via a partition-based ADMM \cite{neal2011distributed, bastianello2018partition}. This architecture allows agents to maintain and update local copies of neighboring variables using only one-hop communication, effectively removing the need for centralized matrix inversions. Notably, the framework enables the closed-form integration of local measurements, reserving the distributed optimization layer exclusively for the residual correction induced by relative sensing. This allows agents to exploit private information immediately while minimizing the communication overhead associated with the collaborative estimation task.

The main contributions of this paper are:
\begin{enumerate}
\item The formulation of a discrete-time Kalman-like observer in information form that preserves the sparsity of the sensing graph during prediction, bypassing the dense Riccati recursions typical of standard information filters.

\item A distributed correction architecture based on a partition-based ADMM. This scheme asymptotically recovers the centralized estimate while requiring only local communication, avoids consensus over the full global state, and allows for the direct, closed-form integration of local sensing.

\item A rigorous stability analysis of the resulting interconnected observer. Using a two-time-scale framework and small gain arguments, we establish the uniform global exponential stability (UGES) of the full system by showing that the fast ADMM dynamics track the slow estimation-error dynamics.
\end{enumerate}

\paragraph*{Notation}
Scalars, vectors, matrices, and sets are denoted by lowercase letters, bold lowercase letters, bold uppercase letters, and calligraphic symbols, respectively. The identity and zero matrices are denoted by $\eye{n}$ and $\zeros{n}$, with dimensions omitted when clear from the context. The operator $\col(\cdot)$ denotes vertical concatenation. In particular, given a finite index set $\vertices$, the expression $\col(\state_{i,k})_{i\in\vertices}$ denotes the vector obtained by stacking the blocks $\state_{i,k}$ for all $i\in\vertices$. For instance, if $\vertices=\{1,\dots,\Nrob\}$, then
{\small $
\col(\state_{i,k})_{i\in\vertices}
=
\begin{bmatrix}
\state_{1,k}^\top & \cdots & \state_{\Nrob,k}^\top
\end{bmatrix}^\top.
$}
Likewise, $\blkdiag(\cdot)$ denotes block-diagonal concatenation, so that $\blkdiag(\boldsymbol{A}_i)_{i\in\vertices}$ is the block-diagonal matrix whose diagonal blocks are the matrices $\boldsymbol{A}_i$, arranged according to the same ordering of $\vertices$. The symbol $\otimes$ denotes the Kronecker product. For symmetric matrices $\boldsymbol{A}$ and $\boldsymbol{B}$, $\boldsymbol{A}\succeq \boldsymbol{B}$ means that $\boldsymbol{A}-\boldsymbol{B}$ is positive semidefinite. For $\boldsymbol{M}\succ \boldsymbol{0}$, $\norm{\boldsymbol{x}}_{\boldsymbol{M}}^2:=\boldsymbol{x}^\top \boldsymbol{M}\boldsymbol{x}$ denotes the weighted squared norm. Finally, $\lambda_{\max}(\boldsymbol{A})$ denotes the maximum eigenvalue of $\boldsymbol{A}$.

\section{System Model}
Consider a network of $\Nrob$ agents with decoupled discrete-time linear dynamics
{\small
\begin{equation}
    \state_{i,k+1}
    =
    \statemat_{i,k}\state_{i,k}
    +
    \inputmat_{i,k}\sysinput_{i,k},
    \qquad i \in \vertices := \{1,\dots,\Nrob\},
    \label{eq:local_dyn}
\end{equation}
}%
where $\state_{i,k} \in \nR{d}$, $\sysinput_{i,k} \in \nR{m}$, and the matrices
$\statemat_{i,k} \in \nR{d \times d}$ and
$\inputmat_{i,k} \in \nR{d \times m}$ may be time-varying.\footnote{The extension to heterogeneous agent dimensions is straightforward and omitted for brevity. Similarly, for heterogeneous measurements.}



Agent interactions are described by: $(i)$ a directed \emph{sensing graph} $\graph_s = (\vertices,\edges_s)$, and $(ii)$ an undirected \emph{communication graph} $\graph_c = (\vertices,\edges_c)$.
%
A directed edge $(i,j)\in\edges_s$ indicates that agent $i$ obtains a relative measurement involving agent $j$. The communication graph is assumed to be the undirected counterpart
of the sensing graph, i.e. if $(i,j)\in\edges_s$ or $(j,i)\in\edges_s$, then $\{i,j\}\in\edges_c$,
ensuring bidirectional information exchange.
We denote the set of communication neighbors of agent $i$ by
$
\neigh_i := \{j \in \vertices \mid \{i,j\}\in\edges_c\}.
$



Each agent may obtain: $(i)$ a \emph{local measurement} ${\meas_{i,k}^{\ell} \in \nR{q_{\ell}}}$
{\small
\begin{equation}
        \meas_{i,k}^{\ell}
        =
        \measmat_{i,k}^{\ell} \state_{i,k} \in \nR{q_{\ell}},
        \qquad i \in \vertices_{\ell} \subseteq \vertices,
        \label{eq:private_meas}
    \end{equation}
    }%
$(ii)$ a \emph{relative measurement} $\meas_{ij,k}^r \in \nR{q_r}$ 
for each $(i,j)\in\edges_s$
{\small
\begin{equation}
    \meas_{ij,k}^r
    =
    \measmat_{ij,i,k}^{r} \state_{i,k}
    +
    \measmat_{ij,j,k}^{r} \state_{j,k} \in \nR{q_r}.
    \label{eq:relative_meas}
\end{equation}
}%
Here $\vertices_{\ell}$ denotes the subset of agents equipped with sensors providing local measurements.



Define the stacked state and input vectors
$\state_k := \verticalstack{\state}{i}{k}{\vertices}$, and
$\sysinput_k := \verticalstack{\sysinput}{i,k}{i}{\vertices}$.
%
Let
$\statemat_k := \blkdiagconcat{\statemat}{i,k}{i}{\vertices}$, and
$\inputmat_k := \blkdiagconcat{\inputmat}{i,k}{i}{\vertices}$.

Then the network dynamics compactly read
{\small
\begin{equation}
    \state_{k+1}
    =
    \statemat_k \state_k
    +
    \inputmat_k \sysinput_k.
    \label{eq:global_dyn}
\end{equation}
}%
The stacked measurement vector is
{\small
\begin{equation}
\meas_k :=
\begin{bmatrix}
\verticalstack{\meas^\ell}{i,k}{i}{\vertices_{\ell}} \\
\verticalstack{\meas^r}{ij,k}{(i,j)}{\edges_{s}}
\end{bmatrix}= \measmat_k \state_k.
\label{eq:global_meas}
\end{equation}
}%
Here, $\measmat_k$ is partitioned into block rows associated with individual measurements and block columns associated with agents. Its sparsity reflects the measurement structure: each block row has nonzero blocks only in the columns corresponding to the agents involved in that measurement. Accordingly, a local measurement of agent $i$ contributes a single nonzero block in column $i$, while a relative measurement between agents $i$ and $j$ contributes nonzero blocks only in columns $i$ and $j$. Hence, $\measmat_k$ has the sparsity pattern of a block incidence matrix of the measurement graph.


The subsequent analysis relies on the following standard assumptions \cite{cticlea2013exponential} for time-varying linear systems and Kalman-type observers and additional bounds on the variation of the system matrices.

\begin{assumption}[Uniform boundedness]
\label{ass:boundedness}
There exist constants $\bar a,\bar b,\bar h,\bar a_{\Delta}, \bar h_\Delta>0$ such that, for all $k$,
{\small
\[
\begin{aligned}
    &\|\statemat_k\|\le \bar a,
\qquad
\|\inputmat_k \sysinput_k\|\le \bar b,
\qquad 
\|\measmat_k\|\le \bar h
\\
&\|\statemat_{k+1} - \statemat_{k}\|\le \bar a_{\Delta}
\qquad
\|\measmat_{k+1} - \measmat_{k}\|\le \bar h_{\Delta}.
\end{aligned}
\]
}%
\end{assumption}

As in standard Kalman filtering, an appropriate uniform observability condition is
required. To this end, let $\statetrans(k,k_0)$ denote the state transition
matrix of \eqref{eq:global_dyn}, namely
{\small
\[
\statetrans(k,k_0)
\define
\begin{cases}
\statemat_{k-1}\statemat_{k-2}\cdots \statemat_{k_0}, & k>k_0,\\[1mm]
\eye{\Nrob d}, & k=k_0.
\end{cases}
\]
}%
Given an integer $K\geq 1$, define the observability Gramian over $K$ samples as
{\small
\begin{equation}
\obsgram(k,K)
\define
\sum_{t=k-K}^{k}
\statetrans(t,k-K)^\top
\measmat_t^\top \measmat_t
\statetrans(t,k-K).
\label{eq:obs_gramian}
\end{equation}
}%
\begin{assumption}[Complete uniform observability]
\label{ass:compobs}
There exist an integer $K\geq 1$ and constants $\alpha_1,\alpha_2>0$ such that,
for all $k\geq K$,
{\small
\[
\alpha_1 \eye{\Nrob d}
\preceq
\obsgram(k,K)
\preceq
\alpha_2 \eye{\Nrob d}.
\]
}%
\end{assumption}

Finally, since the proposed information prediction step involves
$\statemat_k^{-1}$, we require the state-transition matrices to be uniformly
invertible.

\begin{assumption}[Uniform invertibility]
\label{ass:invertibility}
For all $k$, the matrix $\statemat_k$ is invertible, and there exists a constant
$\bar a_{\mathrm{inv}}>0$ such that
\[
\|\statemat_k^{-1}\| \le \bar a_{\mathrm{inv}}, \qquad \forall k.
\]

\end{assumption}
%

\section{Information-Form Observer and Distributed Correction}
\label{sec:observer}
%
This section develops the proposed distributed Kalman-like observer in information form. We first exploit the additive measurement-update structure to obtain a graph-structured information update and combine it with a sparsity-preserving prediction step based on exponential forgetting. We then rewrite the correction step as a sparse symmetric positive definite linear system, or equivalently as a structured strongly convex quadratic program, which enables the distributed ADMM-based implementation developed later in the section.

Let $\covmat_{k|k-1}$ and $\stateest_{k|k-1}$ denote the prior covariance and state estimate, and define the corresponding prior information matrix and vector as
{\small
\begin{equation}
    \infomat_{k|k-1} := \covmat_{k|k-1}^{-1},
    \qquad
    \infovec_{k|k-1} := \infomat_{k|k-1}\stateest_{k|k-1}.
\end{equation}
}%
Since the global state $\state_k=\col(\state_{1,k},\dots,\state_{\Nrob,k})$ is stacked agent-wise, the matrix $\infomat_{k|k-1}\in\mathbb R^{\Nrob d\times \Nrob d}$ is naturally partitioned into $\Nrob\times \Nrob$ blocks, where the block $\infomat_{ij,k|k-1}\in\mathbb R^{d\times d}$ captures the coupling between the information corresponding to the states of agents $i$ and $j$.

\begin{assumption}
\label{ass:sparse_init}
The initial information matrix $\infomat_{0|0}$ is symmetric positive definite and has sparsity pattern consistent with the communication graph $\graph_c=(\vertices,\edges_c)$, in the sense that
{\small
\[
\infomat_{ij,0|0}=0,
\qquad
\forall\, i\neq j \text{ with } \{i,j\}\notin \edges_c,
\]
}%
where $\infomat_{ij,0|0}$ denotes the $(i,j)$ block of $\infomat_{0|0}$. Hence, nonzero off-diagonal blocks are allowed only between neighboring agents.
\end{assumption}

\subsection{Centralized information update in additive form}
\label{subsec:central_info}

Let $\boldsymbol{E}_i\in\mathbb R^{\Nrob d\times d}$ denote the block selection matrix such that
$\boldsymbol{E}_i^\top \state = \state_i$ and $\state = \sum_{i=1}^{\Nrob} \boldsymbol{E}_i \state_i$.
For an edge $(i,j)\in\edges_s$, define $\boldsymbol{E}_{ij}:=[\boldsymbol{E}_i\;\;\boldsymbol{E}_j]\in\mathbb R^{\Nrob d\times 2d}$.

For each local measurement $\meas^{\ell}_{i,k} = \measmat^{\ell}_{i,k}\state_{i,k}$, define the corresponding global measurement matrix $\measmat^{\ell}_{i,k}\boldsymbol{E}_i^\top$. For each relative measurement $\meas^r_{ij,k} = \measmat_{ij,i,k}^{r}\state_{i,k}+\measmat_{ij,j,k}^{r}\state_{j,k}$, define
{\small
\[
\measmat^r_{ij,k} :=
\begin{bmatrix}
\measmat_{ij,i,k}^{r} & \measmat_{ij,j,k}^{r}
\end{bmatrix}
\in\mathbb R^{q_r\times 2d},
\qquad
\measmat^r_{ij,k}\boldsymbol{E}_{ij}^\top \state
=
\meas^r_{ij,k}.
\]
}%

We express the measurement update by explicitly separating local and relative measurements, as in Section~II. Let $\vertices_{\ell} \subseteq \vertices$ be the set of agents equipped with local measurements \eqref{eq:private_meas}. For each $(i,j)\in\edges_s$, let the relative measurement be given by \eqref{eq:relative_meas}. Assume positive definite measurement information gains $\outcov^{\ell}_{i,k}\succeq \underline{r}^{\ell}\eye{q_\ell}$ for $i\in\vertices_{\ell}$, $\outcov^r_{ij,k}\succeq \underline{r}^r\eye{q_r}$ for $(i,j)\in\edges_s$, and
{\small
\[
\outcov_k = \blkdiag\!\bigl(\{\outcov^{\ell}_{i,k}\}_{i \in \vertices_{\ell}}, \{\outcov^r_{ij,k}\}_{(i,j) \in \edges_s}\bigr),
\]
}%
with bounded variation $\norm{\outcov_{k+1} - \outcov_k}\le \bar{r}_{\Delta}$.

Then the centralized information update can be written as
{\small
\begin{subequations}\label{eq:info_additive}
\begin{align}
\infomat_{k|k}
&=
\infomat_{k|k-1}
+
\smallgain
\measmat_k^\top \outcov_k
\measmat_k \nonumber
\\
&=
\infomat_{k|k-1}
+
\smallgain
\sum_{i\in\vertices_{\ell}}
\boldsymbol{E}_i \big(\measmat^{\ell}_{i,k}\big)^\top 
\outcov^{\ell}_{i,k}
\measmat^{\ell}_{i,k} \boldsymbol{E}_i^\top
\nonumber\\
&\hspace{12mm}+
\smallgain
\sum_{(i,j)\in\edges_s}
\boldsymbol{E}_{ij}\big(\measmat^r_{ij,k}\big)^\top 
\outcov^r_{ij,k}
\measmat^r_{ij,k} \boldsymbol{E}_{ij}^\top,
\label{eq:info_additive_S}
\\
\infovec_{k|k}
&=
\infovec_{k|k-1}
+
\smallgain
\sum_{i\in\vertices_{\ell}}
\boldsymbol{E}_i \big(\measmat^{\ell}_{i,k}\big)^\top 
\outcov^{\ell}_{i,k}
\meas^{\ell}_{i,k}
\nonumber\\
&\hspace{12mm}+
\smallgain
\sum_{(i,j)\in\edges_s}
\boldsymbol{E}_{ij}\big(\measmat^r_{ij,k}\big)^\top 
\outcov^r_{ij,k}
\meas^r_{ij,k}
\label{eq:info_additive_z}
\\
\stateest_{k|k} &= \infomat_{k|k}^{-1}\infovec_{k|k}.
\label{eq:info_inverse_mapping}
\end{align}
\end{subequations}
}%
Here, $\smallgain\in(0,1]$ is a correction gain that will later be used to tune the observer while preserving the additive structure of the update. For $\smallgain=1$, \eqref{eq:info_additive} reduces to the standard information-form Kalman update.

Equation \eqref{eq:info_additive_S} also makes explicit that the measurement update preserves the graph-induced sparsity pattern. This observation is formalized next.

\begin{proposition}[Sparsity preservation of the measurement update]
\label{prop:update_sparsity}
If $\infomat_{k|k-1}$ has sparsity pattern consistent with the communication graph $\graph_c$, then the additive measurement update \eqref{eq:info_additive_S} preserves the same sparsity pattern.
\end{proposition}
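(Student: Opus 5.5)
The plan is to examine the update \eqref{eq:info_additive_S} block by block. Since it is a sum, it suffices to show that each added term has its nonzero blocks only on the diagonal or at positions $(i,j)$ with $\{i,j\}\in\edges_c$. Then the sum of $\infomat_{k|k-1}$, which has the right pattern by hypothesis, and these terms keeps the same pattern. The reason is that the set of matrices whose $(p,q)$ block is zero whenever $p\neq q$ and $\{p,q\}\notin\edges_c$ is a linear subspace, closed under addition and scaling by $\smallgain$.

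First consider the local terms. For $i\in\vertices_{\ell}$, the matrix $\boldsymbol{E}_i \boldsymbol{M} \boldsymbol{E}_i^\top$ with $\boldsymbol{M}=(\measmat^{\ell}_{i,k})^\top\outcov^{\ell}_{i,k}\measmat^{\ell}_{i,k}\in\nR{d\times d}$ has $(p,q)$ block $\boldsymbol{E}_p^\top\boldsymbol{E}_i\boldsymbol{M}\boldsymbol{E}_i^\top\boldsymbol{E}_q$. Because $\boldsymbol{E}_p^\top\boldsymbol{E}_i=\delta_{pi}\eye{d}$, this block vanishes unless $p=q=i$. The local terms therefore touch only the diagonal block $(i,i)$, which is always allowed.

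Next consider the relative terms. For $(i,j)\in\edges_s$, write $\boldsymbol{M}=(\measmat^r_{ij,k})^\top\outcov^r_{ij,k}\measmat^r_{ij,k}\in\nR{2d\times 2d}$ in $2\times2$ blocks $\boldsymbol{M}_{ab}$. Using $\boldsymbol{E}_{ij}=[\boldsymbol{E}_i\;\boldsymbol{E}_j]$, we get
$\boldsymbol{E}_{ij}\boldsymbol{M}\boldsymbol{E}_{ij}^\top=\boldsymbol{E}_i\boldsymbol{M}_{11}\boldsymbol{E}_i^\top+\boldsymbol{E}_i\boldsymbol{M}_{12}\boldsymbol{E}_j^\top+\boldsymbol{E}_j\boldsymbol{M}_{21}\boldsymbol{E}_i^\top+\boldsymbol{E}_j\boldsymbol{M}_{22}\boldsymbol{E}_j^\top$.
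By the same selection identity, the only possibly nonzero blocks are $(i,i)$, $(j,j)$, $(i,j)$ and $(j,i)$. The off-diagonal positions are allowed because the communication graph is the undirected counterpart of $\graph_s$: $(i,j)\in\edges_s$ implies $\{i,j\}\in\edges_c$.

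Combining the two cases with the closure of the pattern subspace under sums gives the claim. There is no real obstacle here. The only point needing care is the bookkeeping identity $\boldsymbol{E}_p^\top\boldsymbol{E}_q=\delta_{pq}\eye{d}$, which follows directly from the definition of the selection matrices, together with the observation that a relative measurement creates fill only along a sensing edge, which is contained in $\edges_c$ by the model assumption.
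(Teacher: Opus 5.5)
Your proposal is correct and follows essentially the same route as the paper. Both arguments inspect \eqref{eq:info_additive_S} term by term: local contributions touch only the diagonal block $(i,i)$, and relative contributions touch only the blocks indexed by $i$ and $j$. You additionally spell out two points the paper leaves implicit, namely the selection identity $\boldsymbol{E}_p^\top\boldsymbol{E}_q=\delta_{pq}\eye{d}$ and the fact that $(i,j)\in\edges_s$ implies $\{i,j\}\in\edges_c$.
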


\begin{proof}
From \eqref{eq:info_additive_S}, each local contribution is premultiplied and postmultiplied by $\boldsymbol{E}_i$, and therefore affects only the diagonal block associated with agent $i$. Likewise, each relative contribution associated with an edge $(i,j)\in\edges_s$ is premultiplied and postmultiplied by $\boldsymbol{E}_{ij}$ and $\boldsymbol{E}_{ij}^\top$, and therefore affects only the $2\times 2$ block submatrix indexed by agents $i$ and $j$. Hence, no nonzero blocks are created outside the graph-induced sparsity pattern.
\end{proof}

The additive structure in \eqref{eq:info_additive} has been widely used in sensor networks estimating a \emph{global} state of fixed dimension (e.g., target tracking) \cite{battistelli2014consensus, kamal2013information,sebastian2023eco}, where the sums can be computed via average consensus.
However, in cooperative localization the state dimension grows with $\Nrob$ and each agent ultimately requires only its \emph{local} estimate.
Thus, neither centralized inversion of $\infomat_{k|k}$ (as required by \eqref{eq:info_inverse_mapping}) nor consensus over global quantities is scalable.

\subsection{Sparsity-preserving prediction}
\label{subsec:sparsity_pred}

A key difficulty in applying information filtering to the considered setup is that the standard prediction step
{\small
\[
    \infomat_{k+1|k} = \big(\statemat_k\covmat_{k|k}\statemat_k^\top + \procnoisemat_k\big)^{-1}
\]
}%
typically becomes dense due to the additive process noise covariance $\procnoisemat_k$.
To preserve sparsity, similarly to~\cite{de2025distributed}, we adopt a modified prediction step based on exponential forgetting, namely
{\small
\begin{equation}
    \infomat_{k+1|k}
    =
    \ffactor\,
    \statemat_k^{-\top}\infomat_{k|k}\statemat_k^{-1},
    \qquad 0<\ffactor<1.
    \label{eq:forgetting_prediction_clean}
\end{equation}
}%
Since $\statemat_k$ is block diagonal, \eqref{eq:forgetting_prediction_clean} preserves the sparsity pattern of $\infomat_{k|k}$.

\begin{proposition}[Sparsity preservation of the information prediction]
\label{prop:sparsity}
For the multi-agent system \eqref{eq:local_dyn}--\eqref{eq:global_meas}, assume that $\infomat_{k|k}$ has sparsity pattern consistent with the communication graph $\graph_c$. Then the modified prediction step \eqref{eq:forgetting_prediction_clean} preserves the same sparsity pattern.
\end{proposition}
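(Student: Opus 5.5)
The plan is a direct blockwise computation, exploiting that $\statemat_k$ is block diagonal. By Assumption~\ref{ass:invertibility}, each $\statemat_{i,k}$ is invertible. Since $\statemat_k=\blkdiag(\statemat_{i,k})_{i\in\vertices}$, its inverse is $\statemat_k^{-1}=\blkdiag(\statemat_{i,k}^{-1})_{i\in\vertices}$, and likewise $\statemat_k^{-\top}=\blkdiag(\statemat_{i,k}^{-\top})_{i\in\vertices}$. Using the selection matrices $\boldsymbol{E}_i$, this can be written as $\statemat_k^{-1}=\sum_{i}\boldsymbol{E}_i\statemat_{i,k}^{-1}\boldsymbol{E}_i^\top$, with the analogous expression for the transpose.

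Next I would compute the $(i,j)$ block of the predicted matrix \eqref{eq:forgetting_prediction_clean}. Using $\boldsymbol{E}_i^\top\boldsymbol{E}_m=\delta_{im}\eye{d}$, only one term of each sum survives, and
\[
\infomat_{ij,k+1|k}=\boldsymbol{E}_i^\top\infomat_{k+1|k}\boldsymbol{E}_j=\ffactor\,\statemat_{i,k}^{-\top}\,\infomat_{ij,k|k}\,\statemat_{j,k}^{-1}.
\]
Multiplying on both sides by block-diagonal factors therefore rescales each block individually and never mixes blocks.

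The conclusion then follows directly. If $i\neq j$ and $\{i,j\}\notin\edges_c$, the hypothesis gives $\infomat_{ij,k|k}=0$, so $\infomat_{ij,k+1|k}=0$. Hence the sparsity pattern is preserved. In addition, $\infomat_{k+1|k}$ remains symmetric, and it is positive definite when $\infomat_{k|k}$ is, because it is a congruence scaled by $\ffactor>0$.

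No step is a real obstacle. The only point needing care is the justification that the inverse of a block-diagonal matrix is block diagonal with the inverted blocks. This holds because $\blkdiag(\statemat_{i,k}^{-1})$ multiplied by $\statemat_k$ gives the identity blockwise, which uses Assumption~\ref{ass:invertibility}.
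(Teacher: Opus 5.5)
Your proposal is correct and follows the paper's proof: since $\statemat_k$ is block diagonal, so is $\statemat_k^{-1}$, and therefore $\infomat_{ij,k+1|k}=\ffactor\,\statemat_{i,k}^{-\top}\infomat_{ij,k|k}\statemat_{j,k}^{-1}$, which vanishes whenever $\infomat_{ij,k|k}$ does. Your justification of the block-diagonal inverse via Assumption~\ref{ass:invertibility} and your remarks on symmetry and positive definiteness are valid additions that the paper leaves implicit.
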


\begin{proof}
In the considered setup, the dynamics matrix $\statemat_k$ 
is block diagonal. Hence, $\statemat_k^{-1}$ is block diagonal as well. Therefore, the $(i,j)$ block of $\infomat_{k+1|k}$ is
{\small
\[
\infomat_{ij,k+1|k}
=
\ffactor
\statemat_{i,k}^{-\top}
\infomat_{ij,k|k}
\statemat_{j,k}^{-1}
.
\]
}%
It follows that $\infomat_{ij,k+1|k}=0$ whenever
$\infomat_{ij,k|k}=0$. Hence, the sparsity pattern is preserved.
\end{proof}

A diagonal matrix forgetting factor can also be considered, namely
{\small
\[
\begin{aligned}
\ffmatrix&=\blkdiag(\ffmatrix_i)_{i\in\vertices},
\qquad
\ffmatrix_i=\diag(\ffactor_{i,n})_{n=1}^d,
\\
&0<\ffactor_{i,n}<1,\qquad \forall n\in\{1,\dots,d\},
\end{aligned}
\]
}%
together with the prediction
{\small
\begin{equation}
    \infomat_{k+1|k}
    =
    \statemat_k^{-\top}\ffmatrix\,\infomat_{k|k}\,\ffmatrix\,\statemat_k^{-1}.
    \label{eq:forgetting_prediction_matrix}
\end{equation}
}%
This variant may be desirable in practice, as it allows different rates of forgetting to be tuned differently across agents and/or state components while still preserving sparsity. Since the corresponding stability conditions are more involved and generally conservative, the stability of the matrix-forgetting case is discussed separately in Appendix~\ref{app:matrixffuges} and here we focus on the scalar case \eqref{eq:forgetting_prediction_clean}.

By Assumption~\ref{ass:sparse_init} together with Propositions~\ref{prop:update_sparsity} and~\ref{prop:sparsity}, the information matrix retains a sparsity pattern consistent with $\graph_c$ for all $k$.

\begin{lemma}
\label{lemma:conditioning}
    Under Assumptions~\ref{ass:boundedness}--\ref{ass:invertibility} and choosing $\ffactor\le 1 / a_{\mathrm{inv}}^2$ 
    (see \cite[Lemma~1]{cticlea2013exponential}), the matrix $\infomat_{k|k}$ is such that
    {\small
    \begin{equation}
        \zeros{\Nrob d \times \Nrob d} \prec \underline{s}_\smallgain
        \eye{\Nrob d} \preceq \infomat_{k|k} \preceq \bar{s}_\smallgain
        \eye{\Nrob d}
    \end{equation}
    }
    with $\underline{s}_\smallgain = \smallgain \underline{s}$, $\bar{s}_\smallgain = \smallgain\bar{s}$ and $\underline{s},\bar{s}>0$.
\end{lemma}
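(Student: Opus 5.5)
The plan is to unroll the recursion and bound it from both sides with the observability Gramian, following \cite[Lemma~1]{cticlea2013exponential}. Combining \eqref{eq:info_additive_S} with \eqref{eq:forgetting_prediction_clean} gives the closed recursion
\[
\infomat_{k+1|k+1} = \ffactor\,\statemat_k^{-\top}\infomat_{k|k}\statemat_k^{-1} + \smallgain\,\measmat_{k+1}^\top\outcov_{k+1}\measmat_{k+1}.
\]
Iterating from $k_0$ to $k$ yields
\[
\infomat_{k|k} = \ffactor^{k-k_0}\statetrans(k,k_0)^{-\top}\infomat_{k_0|k_0}\statetrans(k,k_0)^{-1} + \smallgain\sum_{t=k_0+1}^{k}\ffactor^{k-t}\statetrans(k,t)^{-\top}\measmat_t^\top\outcov_t\measmat_t\statetrans(k,t)^{-1}.
\]
Every term is positive semidefinite. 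I also need an upper bound on $\outcov_t$. I would obtain it from the bounded variation of $\outcov_k$ on a finite window, or more simply take $\norm{\outcov_k}\le\bar r$ as an implicit standing bound, which I would state explicitly.

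\textbf{Lower bound.} For $k\ge K$, take $k_0=k-K$ and drop the first term. Write $\statetrans(k,t)^{-1}=\statetrans(t,k-K)\,\statetrans(k,k-K)^{-1}$. Bound $\ffactor^{k-t}\ge\ffactor^K$ and $\outcov_t\succeq\underline r\,\eye{}$ with $\underline r=\min(\underline r^\ell,\underline r^r)$. Up to the boundary term $t=k-K$, the sum becomes $\statetrans(k,k-K)^{-\top}\obsgram(k,K)\statetrans(k,k-K)^{-1}$. That boundary term is handled either by shifting the window by one sample or by redefining the Gramian index range; this costs only constants. Assumption~\ref{ass:compobs} together with $\norm{\statetrans(k,k-K)}\le\bar a^K$ then gives
\[
\infomat_{k|k}\succeq\smallgain\,\ffactor^K\underline r\,\alpha_1\bar a^{-2K}\eye{}.
\]
For $k<K$ there are finitely many steps. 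Each step satisfies $\infomat_{k+1|k+1}\succeq\ffactor\bar a^{-2}\infomat_{k|k}$, so a bound follows from $\infomat_{0|0}\succ0$. To keep the $\smallgain$-proportional form in that transient, I would assume $\infomat_{0|0}$ scales with $\smallgain$, or absorb it by taking the minimum of the constants. This initialization dependence is the point I would flag explicitly.

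\textbf{Upper bound.} This is the main obstacle, and it is where the condition $\ffactor\le 1/\bar a_{\mathrm{inv}}^2$ enters. Each step multiplies by $\ffactor\norm{\statemat_k^{-1}}^2\le\ffactor\bar a_{\mathrm{inv}}^2\le1$, so with only this condition the series need not decay geometrically. I would therefore group the sum into windows of length $K+1$ and use the lower bound just established, in the spirit of the Kalman-filter boundedness argument. Concretely, I would compare $\infomat_{k|k}$ with the Gramian-type sum over the last window. If that grouping fails, I would fall back on requiring the strict inequality $\ffactor\bar a_{\mathrm{inv}}^2<1$, which I expect is the intended reading of \cite[Lemma~1]{cticlea2013exponential}. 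The geometric series then gives
\[
\infomat_{k|k}\preceq(\ffactor\bar a_{\mathrm{inv}}^2)^{k}\norm{\infomat_{0|0}}\eye{}+\smallgain\,\frac{\bar h^2\bar r}{1-\ffactor\bar a_{\mathrm{inv}}^2}\eye{}.
\]
The first term is again made proportional to $\smallgain$ through the initialization, as in the lower bound. Setting $\bar s$ and $\underline s$ as the resulting $\smallgain$-independent constants gives the claim.
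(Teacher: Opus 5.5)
Your proof is correct once you commit to the strict condition $\ffactor\bar a_{\mathrm{inv}}^2<1$, and it is more self-contained than the paper's. The paper unrolls the recursion only to note that it is linear in $(\infomat_{0|0},\smallgain)$. Choosing $\infomat_{0|0}=\smallgain\bar{\infomat}_{0|0}$ then gives $\infomat_{k|k}=\smallgain\bar{\infomat}_{k|k}$, where $\bar{\infomat}_{k|k}$ is the $\smallgain=1$ recursion, and the bounds on $\bar{\infomat}_{k|k}$ are taken wholesale from \cite[Lemma~1]{cticlea2013exponential]}. You instead carry $\smallgain$ through the unrolled sum and prove both bounds directly: a Gramian lower bound on a window of $K+1$ samples, a finite transient, and a geometric-series upper bound. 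The paper's route is shorter and makes the $\smallgain$-scaling a one-line observation; yours shows which hypotheses the cited result actually needs. Your treatment of the initialization matches the paper's implicit choice. $\bar s$ is $\smallgain$-independent only if $\infomat_{0|0}$ scales with $\smallgain$, whereas for $\underline s$ taking a minimum suffices because $\smallgain\le 1$.

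Some of your hedges should be settled rather than left open.

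\emph{Window grouping.} Drop the attempt to get the upper bound under $\ffactor\bar a_{\mathrm{inv}}^2\le 1$ by grouping into windows. It cannot succeed, because the non-strict condition is genuinely insufficient. Take $\Nrob=d=1$, $\statemat_k=1/2$, $\measmat_k=1$, $\outcov_k=1$ and $\ffactor=1/4=1/\bar a_{\mathrm{inv}}^2$. Assumptions~\ref{ass:boundedness}--\ref{ass:invertibility} hold (the Gramian lies in $[1,4/3]$), yet $\infomat_{k+1|k+1}=\infomat_{k|k}+\smallgain$ diverges. So the strict inequality is a necessary correction of the hypothesis, not a convenience.

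\emph{Bound on $\outcov_k$.} Bounded variation of $\outcov_k$ only gives $\norm{\outcov_k}\le\norm{\outcov_0}+k\bar r_\Delta$, which is not uniform in $k$. You therefore need $\sup_k\norm{\outcov_k}\le\bar r$ as a standing assumption, which is your second option.

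\emph{Transient step.} The inequality $\statemat_k^{-\top}\infomat\,\statemat_k^{-1}\succeq\bar a^{-2}\infomat$ is false in general; take $\statemat_k$ a rotation and $\infomat$ anisotropic. What holds, and all you actually use, is $\lambda_{\min}(\infomat_{k+1|k+1})\ge\ffactor\bar a^{-2}\lambda_{\min}(\infomat_{k|k})$.
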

\begin{proof}
    Denote by $\bar{\infomat}_{k|k}$ the information matrix generated by the same recursion as $\infomat_{k|k}$ with $\smallgain=1$ and initialization $\bar{\infomat}_{0|0}$ such that it holds $\infomat_{0|0}
    = \smallgain \bar{\infomat}_{0|0}$. Then, using recursively the prediction step \eqref{eq:forgetting_prediction_clean} and the additive update
    \eqref{eq:info_additive_S}, one obtains
    {\small
    \begin{equation}
    \label{eq:S_expansion}
    \begin{aligned}
    \infomat_{k|k}
    =
    \smallgain&\!\left[
    \ffactor^k\,\boldsymbol{\Phi}(0,k)^\top \bar{\infomat}_{0|0}\boldsymbol{\Phi}(0,k)
    \right.
    \\
    &\left.+
    \sum_{h=0}^{k}
    \ffactor^{k-h}\,
    \boldsymbol{\Phi}(h,k)^\top
    \measmat_h^\top \outcov_h \measmat_h
    \boldsymbol{\Phi}(h,k)
    \right]
    \end{aligned}
    \end{equation}
    }%
    This shows that $\infomat_{k|k} = \smallgain \bar{\infomat}_{k|k}$, i.e. the resulting information matrix is a scaled version of the one obtained when $\smallgain=1$. By \cite[Lemma~1]{cticlea2013exponential}, there exist constants $\underline{s},\bar{s}>0$ such that
    {\small
    \[
        \zeros{} < \underline{s}\,\eye{\Nrob d}
        \leq \bar{\infomat}_{k|k} \leq \bar{s}\,\eye{\Nrob d}.
    \]
    }%
    Multiplying by $\smallgain$ yields the stated result.
\end{proof}

Hence, under Assumption~\ref{ass:sparse_init}, the correction step and the modified prediction preserve the graph-induced sparsity pattern of $\infomat_{k|k}$, while Lemma~\ref{lemma:conditioning} guarantees that $\infomat_{k|k}$ remains symmetric positive definite. Consequently, as it will be shown in the next section, the posterior update can be recast as the solution of a sparse symmetric positive definite linear system, which is the starting point for the distributed observer developed next.


\subsection{Correction form and local observer dynamics}
\label{subsec:correction_linear}

Define the correction $\correction_k := \frac{1}{\smallgain}(\stateest_{k|k}-\stateest_{k|k-1})$.
The Kalman update can be written in correction form as
{\small
\begin{equation}
    \stateest_{k|k}
    =
    \stateest_{k|k-1}
    +
    \smallgain
    \correction_k,
    \qquad
    \infomat_{k|k}\correction_k = \knownterm_k,
    \label{eq:correction_linear_clean}
\end{equation}
}%
where the right-hand side $\knownterm_k$ collects all innovation terms:
{\small
\begin{equation}
\label{eq:b_def_clean}
\begin{aligned}
    \knownterm_k &:=
    \measmat_k^\top \outcov_k
    (\meas_k - \measmat_{k}\stateest_{k|k-1}).
\end{aligned}
\end{equation}
}%
Since this expression is not the common form of the Kalman update, we report its derivation in Appendix~\ref{app:kalupdate} for the reader's convenience.

Equation \eqref{eq:correction_linear_clean} shows that computing the posterior estimate is equivalent to solving a sparse symmetric positive definite linear system. While this remains equivalent to inverting $\infomat_{k|k}$, the key advantage is that the sparsity of the system enables a distributed iterative solution, rather than requiring a centralized matrix inversion.

\begin{remark}[On the role of $\smallgain$]
The gain $\smallgain\in(0,1]$ does not change the nominal Kalman-like observer dynamics. In fact, since $\infomat_{k|k}=\smallgain\,\bar{\infomat}_{k|k}$ as per Lemma~\ref{lemma:conditioning}, the correction equation
\[
\infomat_{k|k}\correction_k=\knownterm_k
\]
implies $\correction_k=\smallgain^{-1}\bar{\infomat}_{k|k}^{-1}\knownterm_k$, so that the update term actually applied to the estimate is
\[
\smallgain\,\correction_k=\bar{\infomat}_{k|k}^{-1}\knownterm_k,
\]
which is the same as for $\smallgain=1$. Hence, $\smallgain$ should be interpreted not as a modification of the nominal observer correction, but rather as a parameter that regulates the interaction with the distributed correction dynamics.
\end{remark}

Before introducing the distributed algorithm used to solve \eqref{eq:correction_linear_clean}, we make the agent-wise structure of the observer explicit. 
To this end, define the indicators 
{\small\begin{equation}\label{eq:meas_indicators}
\delta_i^{\ell} \define
\begin{cases}
1, & i\in\vertices_{\ell},\\
0, & \text{otherwise},
\end{cases}
\qquad
\delta_{ij}^r \define
\begin{cases}
1, & (i,j)\in\edges_s,\\
0, & \text{otherwise}.
\end{cases}
\end{equation}}%
Then the $i$-th block of $\knownterm_k$ depends only on local quantities, namely
{\small
\begin{equation}
\label{eq:b_local_block}
\begin{aligned}
    \knownterm_{i,k}
    &=
    \delta_i^{\ell}
    (\measmat^{\ell}_{i,k})^\top 
    \outcov^{\ell}_{i,k}
    \bigl(\meas^{\ell}_{i,k} - \measmat^{\ell}_{i,k}\stateest_{i,k|k-1}\bigr)
    \\
    &\quad+
    \sum_{j \in \neigh_i}
    \delta_{ij}^r
    (\measmat_{ij,i,k}^{r})^\top 
    \outcov^r_{ij,k}
    \bigl(\meas^r_{ij,k} - \measmat^r_{ij,k}\boldsymbol{E}_{ij}^\top\stateest_{k|k-1}\bigr)
    \\
    &\quad+
    \sum_{j \in \neigh_i}
    \delta_{ji}^r
    (\measmat_{ji,i,k}^{r})^\top 
    \outcov^r_{ji,k}
    \bigl(\meas^r_{ji,k} - \measmat^r_{ji,k}\boldsymbol{E}_{ji}^\top\stateest_{k|k-1}\bigr).
\end{aligned}
\end{equation}
}%

We next make explicit the observer dynamics from the perspective of agent $i$. Define the local and relative information increments as
{\small
\begin{equation*}
\begin{aligned}
    \Delta\infomat^{\ell}_{i,k}
    &\define
    \smallgain\delta_i^{\ell}
    (\measmat^{\ell}_{i,k})^\top 
    \outcov^{\ell}_{i,k}
    \measmat^{\ell}_{i,k},
    \\
    \Delta\infomat^{r}_{ij,ii,k}
    &\define
    \smallgain\delta^r_{ij}
    (\measmat_{ij,i,k}^{r})^\top 
    \outcov^r_{ij,k}
    \measmat_{ij,i,k}^{r},
    \\
    \Delta\infomat^{r}_{ji,ii,k}
    &\define
    \smallgain\delta^r_{ji}
    (\measmat_{ji,i,k}^{r})^\top 
    \outcov^r_{ji,k}
    \measmat_{ji,i,k}^{r},
    \\
    \Delta\infomat^{r}_{ij,ij,k}
    &\define
    \smallgain\delta^r_{ij}
    (\measmat_{ij,i,k}^{r})^\top 
    \outcov^r_{ij,k}
    \measmat_{ij,j,k}^{r},
    \\
    \Delta\infomat^{r}_{ji,ij,k}
    &\define
    \smallgain\delta^r_{ji}
    (\measmat_{ji,i,k}^{r})^\top 
    \outcov^r_{ji,k}
    \measmat_{ji,j,k}^{r} .
\end{aligned}
\end{equation*}
}%
Then, from \eqref{eq:correction_linear_clean} and \eqref{eq:info_additive_S}, the local update can be written as
{\small
\begin{equation}
\label{eq:local_update}
\begin{aligned}
    \stateest_{i,k|k}
    &=
    \stateest_{i,k|k-1}
    +
    \smallgain
    \correction_{i,k},
    \\
    \infomat_{ii,k|k}
    &=
    \infomat_{ii,k|k-1}
    +
    \Delta\infomat^{\ell}_{i,k}
    +
    \sum_{j\in\neigh_i}
    \bigl(
    \Delta\infomat^{r}_{ij,ii,k}
    +
    \Delta\infomat^{r}_{ji,ii,k}
    \bigr)
    ,
    \\
    \infomat_{ij,k|k}
    &=
    \infomat_{ij,k|k-1}
    +
    \Delta\infomat^{r}_{ij,ij,k}
    +
    \Delta\infomat^{r}_{ji,ij,k}
    ,
    \qquad \forall j\in\neigh_i.
\end{aligned}
\end{equation}
}%

The local prediction step can be written as
{\small
\begin{equation}
\label{eq:local_prediction}
\begin{aligned}
    \stateest_{i,k+1|k}
    &=
    \statemat_{i,k}\stateest_{i,k|k}
    +
    \inputmat_{i,k}\sysinput_{i,k},
    \\
    \infomat_{ij,k+1|k}
    &=
    \ffactor
    \statemat_{i,k}^{-\top}
    \infomat_{ij,k|k}
    \statemat_{j,k}^{-1}
    \qquad \forall j \in \{i\}\cup\neigh_i.
\end{aligned}
\end{equation}
}%




\subsection{Quadratic program and separable structure}
\label{subsec:qp_structure}

Equation~\eqref{eq:correction_linear_clean} shows that computing the correction amounts to solving a sparse symmetric positive definite linear system. Rather than explicitly forming $\infomat_{k|k}^{-1}$, we seek $\correction_k$ through a distributed iterative scheme.

\begin{remark}[Richardson iteration]
As it was shown in \cite{de2025distributed} in a continuous-time setting, a solution of \eqref{eq:correction_linear_clean} can be obtained in a distributed way using the Richardson iteration, which we write here in a discrete-time setting as
{\small
\begin{equation}
    \estcorrection_{k+1}
    =
    \estcorrection_k
    -
    \alpha_{R,k}\bigl(\infomat_{k|k}\estcorrection_k-\knownterm_k\bigr),
    \qquad
    0<\alpha_{R,k}<\frac{2}{\lambda_{\max}(\infomat_{k|k})}.
    \label{eq:richardson_clean}
\end{equation}
}%
Since $\infomat_{k|k}$ is sparse and positive definite, \eqref{eq:richardson_clean} can be implemented in a distributed way and converges for fixed $\knownterm_k$. However, its convergence may deteriorate when $\infomat_{k|k}$ is ill-conditioned, and selecting the step size is not straightforward, because reliable upper bounds on $\lambda_{\max}(\infomat_{k|k})$ are typically unavailable a priori and may vary substantially, e.g., with the size of the network. This motivates the ADMM-based distributed correction proposed next.
\end{remark}

By Lemma~\ref{lemma:conditioning}, the information matrix satisfies $\infomat_{k|k}\succeq \underline{s}_\smallgain \eye{\Nrob d}$ with $\underline{s}_\smallgain>0$. Hence, the linear system \eqref{eq:correction_linear_clean} admits a unique solution and can be equivalently characterized as the first-order optimality condition of the strongly convex quadratic program
{\small
\begin{equation}
\label{eq:qp_clean}
    \correction_k
    =
    \arg\min_{\correction\in\mathbb R^{\Nrob d}}
    \left\{
        \frac12\,\correction^\top \infomat_{k|k}\correction
        -
        \correction^\top \knownterm_k
    \right\}.
\end{equation}
}%
This reformulation is natural for two reasons. First, the sparsity of $\infomat_{k|k}$ induces a decomposition of the objective into local and pairwise coupling terms. Second, \eqref{eq:qp_clean} is equivalent, up to an additive constant, to minimizing the weighted squared residual of \eqref{eq:correction_linear_clean}, namely
{\small
\[
\frac12\,\norm{\infomat_{k|k}\correction-\knownterm_k}_{\infomat_{k|k}^{-1}}^2.
\]
}%
The weighting by $\infomat_{k|k}^{-1}$ is crucial. Let $J$ denote the objective in \eqref{eq:qp_clean}. Then
{\small
\[
\nabla J(\correction)=\infomat_{k|k}\correction-\knownterm_k,
\]
}%
where we denoted by $J$ the cost in \eqref{eq:qp_clean} and whose $i$-th block depends only on $\correction_i$ and on the corrections of the neighbors of agent $i$. By contrast, minimizing the unweighted residual norm
{\small 
\[
\norm{\infomat_{k|k}\correction-\knownterm_k}_2^2
\]
}%
would produce a gradient involving $\infomat_{k|k}^2$, which generally introduces couplings over two-hop neighborhoods. Moreover, the Richardson iteration \eqref{eq:richardson_clean} can be interpreted as gradient descent applied to \eqref{eq:qp_clean}.

We now exploit the sparsity structure of $\infomat_{k|k}$ to express \eqref{eq:qp_clean} as a sum of local and edge terms.
To obtain a distributed formulation yielding positive-semidefinite local costs, we exploit the additive structure of the information and keep separate the information contributions due to local measurements and those due to relative measurements. 
Specifically, for each agent $i\in\vertices$, we introduce the local information contribution $\infomat_{i,k|k}^{\ell}\in\nR{d\times d}$ and the corresponding innovation term $\knownterm_{i,k}^{\ell}\in\nR{d}$, updated as
{\small
\begin{equation}\label{eq:loc_info_update}
\begin{aligned}
\infomat_{i,k|k}^{\ell}
&\define
\infomat_{i,k|k-1}^{\ell} +
\Delta\infomat^{\ell}_{i,k},
\\
\knownterm_{i,k}^{\ell}
&\define
\delta_i^\ell
 (\measmat_{i,k}^{\ell})^\top
\outcov_{i,k}^{\ell}
\bigl(
\meas_{i,k}^{\ell}-\measmat_{i,k}^{\ell}\stateest_{i,k|k-1}
\bigr).
\end{aligned}
\end{equation}
}%

Likewise, for each undirected edge $\{i,j\}\in\edges_c$, we introduce the pairwise information contribution 
{\small 
\[
\infomat_{ij,k|k}^{e} \define \begin{bmatrix}
   \infomat_{ij,ii,k|k}^e
   & \infomat_{ij,ij,k|k}^e
   \\
   \infomat_{ij,ji,k|k}^e
   & \infomat_{ij,jj,k|k}^e
\end{bmatrix}  \in\nR{2d\times 2d}
\]
}%
and the corresponding innovation term 
{\small
\[
\knownterm_{ij,k}^{e} \define
\begin{bmatrix}
    \knownterm_{ij,i,k} \\ \knownterm_{ij,j,k}
\end{bmatrix}
\in\nR{2d},
\]
}%
computed as
{\small
\begin{equation}\label{eq:rel_info_update}
\begin{aligned}
    \infomat_{ij,k|k}^{e}
&=
\infomat_{ij,k|k-1}^{e}
\\
&+
\begin{bmatrix}
\Delta\infomat^{r}_{ij,ii,k} + \Delta\infomat^{r}_{ji,ii,k} & 
\Delta\infomat^{r}_{ij,ij,k} + \Delta\infomat^{r}_{ji,ij,k}
\\
\Delta\infomat^{ij}_{ji,k} + \Delta\infomat^{ji}_{ji,k}
&
\Delta\infomat^{ij}_{jj,k} + \Delta\infomat^{ji}_{jj,k}
\end{bmatrix}
\\
\knownterm_{ij,k}^{e}
&=
\delta_{ij}^r
    \begin{bmatrix}
        (\measmat_{ij,i,k}^{r})^\top \\ (\measmat_{ij,j,k}^{r})^\top
    \end{bmatrix}
    \outcov^r_{ij,k}
    \bigl(\meas^r_{ij,k} - \measmat^r_{ij,k}\boldsymbol{E}_{ij}^\top\stateest_{k|k-1}\bigr)
\\
&+
\delta_{ji}^r
    \begin{bmatrix}
        (\measmat_{ji,i,k}^{r})^\top \\ (\measmat_{ji,j,k}^{r})^\top
    \end{bmatrix} 
    \outcov^r_{ji,k}
    \bigl(\meas^r_{ji,k} - \measmat^r_{ji,k}\boldsymbol{E}_{ji}^\top\stateest_{k|k-1}\bigr)    
\end{aligned}
\end{equation}
}%
where the indicator $\delta_{ij}^r$ was introduced in \eqref{eq:meas_indicators}.
Here, the pairwise term associated with the undirected edge $\{i,j\}$ collects the information contributions of both oriented relative measurements $(i,j)$ and $(j,i)$, whenever present. In particular, the indicators $\delta_{ij}^r$ and $\delta_{ji}^r$ allow either contribution to be present independently, so the construction does not require both directed edges to belong to $\mathcal E_s$. Moreover, $\infomat_{ij,k|k}^{e}$ and $\infomat_{ji,k|k}^{e}$ denote the pairwise information matrices associated with the same undirected edge, but written with opposite ordering of the state blocks. By construction, $\infomat_{i,k|k}^{\ell}\succeq 0$ and $\infomat_{ij,k|k}^{e}\succeq 0$.

The corresponding prediction step is obtained by applying the same sparsity-preserving forgetting-factor recursion introduced above to each of the information contributions $\infomat_{i,k|k}^{\ell}$ and $\infomat_{ij,k|k}^{e}$.

Define the quadratic cost associated to the local measurements as
{\small
\begin{equation}
\label{eq:private_qp_term_corrected}
    \phi_i(\correction_i)
    :=
    \frac12\,\correction_i^\top \infomat_{i,k|k}^{\ell}\correction_i
    -
    \correction_i^\top \knownterm_{i,k}^{\ell},
\end{equation}
}%
and for each edge $(i,j)\in\edges_c$, define the pairwise quadratic cost associated with relative measurements as
{\small
\begin{equation}
\label{eq:edge_qp_term_corrected}
    \phi_{ij}(\correction_i,\correction_j)
    :=
    \frac12
    \begin{bmatrix}
    \correction_i\\
    \correction_j
    \end{bmatrix}^\top
    \infomat_{ij,k|k}^{e}
    \begin{bmatrix}
    \correction_i\\
    \correction_j
    \end{bmatrix}
    -
    \begin{bmatrix}
    \correction_i\\
    \correction_j
    \end{bmatrix}^\top
    \knownterm_{ij,k}^{e}.
\end{equation}
}%

With these definitions, the global objective in \eqref{eq:qp_clean} can be written exactly as
{\small
\begin{equation}
\label{eq:qp_decomp_corrected}
\begin{aligned}
    J(\correction)
    &=
    \sum_{i=1}^{\Nrob}
        \left[
            \phi_i(\correction_i)
            + 
            \frac12 \sum_{j \in \neigh_i}
        \phi_{ij}(\correction_i,\correction_j)\right].
\end{aligned}
\end{equation}
}%
The first sum collects the contributions due to local measurements, while the second sum contains the full quadratic terms induced by relative measurements. The factor $\frac12$ compensates for the fact that each undirected edge appears twice in the neighbor sums. This decomposition is exact and preserves the positive-semidefinite structure of each individual information contribution. 

As a final remark, the above decomposition only changes the internal representation of the information variables, not the observer itself. Indeed, the global information matrix and innovation vector used in the original correction form are recovered by summing the local contributions and the pairwise edge contributions introduced above. Therefore, the state update and prediction steps in \eqref{eq:local_update} and \eqref{eq:local_prediction} are preserved, with the only modification that the information associated with local measurements and relative measurements is now stored and propagated separately.


\subsection{Distributed solution via partition-based ADMM}
\label{subsec:admm_details}

Problem \eqref{eq:qp_clean} is a strongly convex quadratic program
whose objective decomposes into local and pairwise coupling terms
as shown in \eqref{eq:qp_decomp_corrected}.  
Each agent $i$ is coupled only with its neighbors
$j\in\neigh_i$ through the off-diagonal blocks
$\infomat_{ij,k|k}$.
This structure enables a distributed solution using
the partition-based ADMM method~\cite{bastianello2018partition,shorinwa2020scalable}.

Each agent $i$ maintains a local copy of its own correction
$\correction_i$ and local copies of the corrections of its neighbors.
Specifically, agent $i$ stores
\[
\extloccorr{i}
=
\begin{bmatrix}
\localcorrection{i}{i}
\\
\col(\localcorrection{j}{i})_{j \in \neigh_i}
\end{bmatrix},
\]
where $\localcorrection{i}{i}$ denotes the local estimate of $\correction_i$
and $\localcorrection{j}{i}$ denotes agent $i$'s local copy of $\correction_j$.

Using the decomposition \eqref{eq:qp_decomp_corrected},
the local cost function at agent $i$ can be written as
{\small
\begin{equation}
\label{eq:local_cost}
\begin{aligned}
J_i(\extloccorr{i})
&=
\phi_i(\localcorrection{i}{i})
            + 
            \frac12 \sum_{j \in \neigh_i}
         \phi_{ij}(\localcorrection{i}{i},\localcorrection{j}{i}).
\end{aligned}
\end{equation}
}%
To ensure consistency among local copies, consensus is imposed by introducing equality constraints that require neighboring agents to agree on their shared variables, namely
{\small
\begin{equation}
\label{eq:consensus_constraints}
\localcorrection{i}{i} = \localcorrection{i}{j},
\qquad
\localcorrection{j}{i} = \localcorrection{j}{j},
\qquad
\forall \{i,j\}\in\edges_c .
\end{equation}
}%
With these consensus constraints, the optimization problem can be written in a form suitable for distributed computation as
{\small
\begin{equation}
\label{eq:admm_global_problem}
\begin{aligned}
\min_{\{\extloccorr{i}\}_{i=1}^{\Nrob}} \quad &
\sum_{i=1}^{\Nrob} J_i(\extloccorr{i})
&
\text{s.t.}\quad &
\eqref{eq:consensus_constraints}.
\end{aligned}
\end{equation}
}%
Problem~\eqref{eq:admm_global_problem} reveals that the correction step has the standard structure of a distributed optimization problem: the cost is separable across agents, and coupling is captured only by consistency constraints on shared variables. Consequently, once the problem is cast in the form \eqref{eq:admm_global_problem}, a variety of distributed optimization algorithms could in principle be employed. In this paper, we focus on ADMM, due to its favorable properties in terms of convergence and robustness.

Following~\cite[Prop.~1]{bastianello2018partition}, the relaxed ADMM iterations consist of two steps.

\textbf{1) Local primal update.}
Each agent solves the quadratic subproblem
{\small
\begin{equation}
\label{eq:admm_primal}
\begin{aligned}
    \estextloccorr{i}{k,\iteradmm}
&=
\argmin_{\extloccorr{i}}
\left\{
J_i(\extloccorr{i})
-
\sum_{j\in\neigh_i}
\big(
\boldsymbol{q}_{ij,i,k, \iteradmm}^\top \localcorrection{i}{i}
+
\boldsymbol{q}_{ij,j,k, \iteradmm}^\top \localcorrection{j}{i}
\big)
\right.
\\
&\left.+
\frac{\rho}{2}
\left(
|\neigh_i|\,\|\localcorrection{i}{i}\|^2
+
\sum_{j\in\neigh_i}\|\localcorrection{j}{i}\|^2
\right)
\right\}.
\end{aligned}
\end{equation}
}%

\textbf{2) Dual update.}
For every $\{i,j\}\in\edges_c$, the dual variables are updated as
{\small
\begin{equation}\label{eq:aux_vars_update}
\begin{aligned}
\boldsymbol{q}_{ij,i,k,\iteradmm+1}
&=
(1-\alpha)\boldsymbol{q}_{ij,i,k,\iteradmm}
+
\alpha\big(
-\boldsymbol{q}_{ji,i,k,\iteradmm}
+
2\rho\,\estlocalcorrection{i,k,\iteradmm}{j}
\big),
\\
\boldsymbol{q}_{ij,j,k,\iteradmm+1}
&=
(1-\alpha)\boldsymbol{q}_{ij,j,k,\iteradmm}
+
\alpha\big(
-\boldsymbol{q}_{ji,j,k,\iteradmm}
+
2\rho\,\estlocalcorrection{j,k,\iteradmm}{j}
\big),
\end{aligned}
\end{equation}
}%
where $\rho>0$ is the ADMM penalty parameter, $\alpha\in(0,1)$ is a relaxation factor, {and $\iteradmm$ denotes the ADMM iteration index. In particular, at each observer step we allow for the execution of multiple ADMM iterations, denoted by $\maxiteradmm\geq 1$. 
}

Because $J_i$ is quadratic and convex, and the ADMM penalty term makes \eqref{eq:admm_primal} strongly convex, the corresponding Hessian $\boldsymbol{H}_{\rho,i,k}$ is symmetric positive definite, and therefore invertible for all $i$ and $k$. As a result, the primal update admits the closed-form solution
{\small
\begin{equation}\label{eq:closed_form_primal}
\estextloccorr{i}{k,\iteradmm}
=
\boldsymbol{H}_{\rho,i,k}^{-1}\left( \bar{\knownterm}_{i,k} + \boldsymbol{A}_{qi}^\top \boldsymbol{q}_{i,k,\iteradmm} \right),
\end{equation}
}%
where
{\small
\begin{equation}\label{eq:def_qk_barbk}
\begin{aligned}
\bar{\knownterm}_{i,k}
&=
\begin{bmatrix}
    \knownterm_{i,k}^{\ell} + \frac12 \sum_{j \in \neigh_i} \knownterm_{ij,i,k}
    \\
    \col(\knownterm_{ij,j,k})_{j \in \neigh_i}
\end{bmatrix}
\\
\boldsymbol{q}_{i,k, \iteradmm}
&=
\begin{bmatrix}
\col(\boldsymbol{q}_{ij,i,k, \iteradmm})_{j\in\neigh_i}
\\
\col(\boldsymbol{q}_{ij,j,k,\iteradmm})_{j\in\neigh_i}
\end{bmatrix} .
\end{aligned}
\end{equation}
}%
Define {\small$\infomat_{\neigh_{i}, k|k}^{\times} \define \col(\infomat_{ij,ji,k|k}^{e})_{j \in \neigh_i}$}, {\small$\infomat_{\neigh_{i}, k|k}^{\diag} \define \blkdiag(\infomat_{ij,jj,k|k}^{e})_{j \in \neigh_i}$} and {\small$\infomat_{i,k|k}^{\mathrm{agg}} \define \infomat_{i,k|k}^{\ell} + 1/2 \sum_{j\in \neigh_i}\infomat_{ij,ii,k|k}^{e}$}, then the Hessian matrix of the quadratic cost \eqref{eq:admm_primal}, denoted $\boldsymbol{H}_{\rho,i,k}$, is given by
{\small
\begin{equation}\label{eq:H_decomp}%
\boldsymbol{H}_{\rho,i,k}
=
\rho \boldsymbol{H}_{i}^0 
+ 
\boldsymbol{H}_{i,k}^1,
\end{equation}
} 
with 
{
\small
\begin{equation*}
    \begin{aligned}
        \boldsymbol{H}_{i}^0 &= 
        \blkdiag(|\neigh_i|\eye{d}, \eye{|\neigh_i|d})
        \\
        \boldsymbol{H}_{i,k}^1
        &=
        \begin{bmatrix}
        \infomat_{i,k|k}^{\mathrm{agg}}
        &
        \frac{1}{2}
        (\infomat_{\neigh_{i}, k|k}^{\times})^\top
        \\[1mm]
        \frac{1}{2}
        \infomat_{\neigh_{i}, k|k}^{\times}
        &
        \frac12 \infomat_{\neigh_{i}, k|k}^{\diag}
        \end{bmatrix},
    \end{aligned}
\end{equation*}
}%
where it is important to note that, by Lemma~\ref{lemma:conditioning}, we can write $\boldsymbol{H}_{i,k}^1 = \smallgain \bar{\boldsymbol{H}}_{i,k}^1$, with $\bar{\boldsymbol{H}}_{i,k}^1$ independent of $\smallgain$. 
Finally, $\boldsymbol{A}_{qi}^\top$ is the matrix selecting the dual terms in the linear part of \eqref{eq:admm_primal}, namely
{\small
\begin{equation}\label{eq:defAqi}
\boldsymbol{A}_{qi}
=
\begin{bmatrix}
\ones{|\neigh_i|}\otimes \eye{d} &  \zeros{|\neigh_i|\cdot d \times |\neigh_i|\cdot d}
\\[1mm]
\zeros{|\neigh_i|\cdot d \times d} & \eye{|\neigh_i|\cdot d} 
\end{bmatrix},
\end{equation}
}%
where $\boldsymbol{A}_{qi}$ is such that
{\small
\[
\boldsymbol{A}_{qi}^\top \boldsymbol{q}_{i,k,\iteradmm}
=
\begin{bmatrix}
\sum_{j\in\neigh_i}\boldsymbol{q}_{ij,i,k,\iteradmm}
\\[1mm]
\verticalstack{\boldsymbol{q}}{ij,j,k,\iteradmm}{j}{\neigh_i}
\end{bmatrix}.
\]
}%

The complete distributed observer is summarized in Algorithm~\ref{alg:dist_observer_short}.

\begin{remark}[On graph connectivity and observability]
We stress that connectivity of neither the sensing graph nor the communication graph is required. The standing assumption is instead complete uniform observability, as stated in Assumption~\ref{ass:compobs}. In particular, the network may consist of multiple disconnected clusters. In this case, both the global observability Gramian and the information matrix have a block-diagonal structure, with one block associated with each connected component (cluster). Consequently, the optimization problem \eqref{eq:admm_global_problem} decouples into independent subproblems, one for each cluster. Assumption~\ref{ass:compobs} must then hold for each corresponding block. Hence, graph connectivity is not needed per se, provided that each disconnected cluster is uniformly observable through its own local and relative measurements.
\end{remark}

\begin{algorithm}[t]
\caption{Distributed Kalman-like observer}
\label{alg:dist_observer_short}
\begin{algorithmic}[1]
\Require {\small$\stateest_{i,0|0}$}, {\small$\infomat_{i,0|0}^{\ell}$}, {\small$\infomat_{ij,0|0}^{r}$} {\small$\forall j \in \neigh_i$}, $\ffmatrix$, $\rho$, $\alpha$
\For{$k=0,1,2,\dots$}
    \ForAll{$i\in\vertices$ \textbf{in parallel}}
        \State Acquire local and relative measurements
        \State Update {\small$\infomat_{i,k|k}^{\ell}$}, {\small$\knownterm_{i,k}^{\ell}$}, and {\small$\infomat_{ij,k|k}^{e}$}, {\small$\knownterm_{ij,k}^{e}$} for all $j\in\neigh_i$ according to \eqref{eq:loc_info_update} and \eqref{eq:rel_info_update}
        \State Set {\small$\boldsymbol{q}_{ij,i,k,0} = \boldsymbol{q}_{ij,i,k-1,\maxiteradmm}$} and {\small$\boldsymbol{q}_{ij,j,k,0} = \boldsymbol{q}_{ij,j,k-1,\maxiteradmm}$}, {\small$\forall j \in \neigh_i$}
        \For{{$\iteradmm=0,1,\dots, \maxiteradmm$}}
        \State Form the local correction subproblem and compute $\estextloccorr{i}{k,\iteradmm}$ as in \eqref{eq:closed_form_primal}
        \State Compute
        {\footnotesize
        \begin{equation*}
        \begin{aligned}
            \boldsymbol{\eta}_{i}^{i\to j}
            &\define
            -\boldsymbol{q}_{ij,i,k,\iteradmm}+2\rho\,\estlocalcorrection{i,k,\iteradmm}{i},
            \\
            \boldsymbol{\eta}_{j}^{i\to j}
            &\define
            -\boldsymbol{q}_{ij,j,k,\iteradmm}+2\rho\,\estlocalcorrection{j,k,\iteradmm}{i},
        \end{aligned}
        \end{equation*}}
        \State transmit {\small$\boldsymbol{\eta}_{i}^{i\to j}$} and {\small$\boldsymbol{\eta}_{j}^{i\to j}$} to each {\small$j\in\neigh_i$}
        \State Update dual variables using \eqref{eq:aux_vars_update}
        \EndFor
        \State Set {\small$\hat{\correction}_{i,k} =\estlocalcorrection{i,k,\maxiteradmm}{i}$} obtained from the primal step
        \State \textbf{Observer state update} as in \eqref{eq:local_update} using {\small$\hat{\correction}_{i,k}$}
        \State \textbf{Information prediction} of the local and pairwise information contributions analogous to \eqref{eq:local_prediction}
        \State \textbf{State prediction} as in \eqref{eq:local_prediction}
    \EndFor
\EndFor
\end{algorithmic}
\end{algorithm}


\subsection{Direct integration of local measurements}
\label{subsec:private_split}

As shown in Sec.~\ref{subsec:qp_structure}, the cost in \eqref{eq:qp_decomp_corrected} naturally decomposes into a component induced by local measurements and a component induced by relative measurements. This structure can be exploited to directly integrate the information provided by local measurements, which depend only on agent \(i\), without being affected by the additional dynamics associated with reaching consensus with neighboring agents.

Specifically, let
{\small
\[
\infomat_{k|k}=\infomat_{k|k}^\ell+\infomat_{k|k}^r,
\qquad
\knownterm_k=\knownterm_k^\ell+\knownterm_k^r,
\]
}%
where \(\infomat_{k|k}^\ell\) is block diagonal and collects the contributions due only to local measurements, while \(\infomat_{k|k}^r\) contains the contributions due to relative measurements. Then the correction equation
{\small
$
\infomat_{k|k}\correction_k=\knownterm_k
$
}%
can be written as
{\small
\begin{equation}
\label{eq:split_linear_system}
(\infomat_{k|k}^\ell+\infomat_{k|k}^r)\correction_k
=
\knownterm_k^\ell+\knownterm_k^r.
\end{equation}
}%
Equivalently, the corresponding quadratic cost can be decomposed as
{\small
\[
J(\correction)=J_\ell(\correction)+J_r(\correction),
\]
}%
where
{\small
\[
J_\ell(\correction)
=
\frac12\,\correction^\top \infomat_{k|k}^\ell \correction
-
\correction^\top \knownterm_k^\ell,
\qquad
J_r(\correction)
=
\frac12\,\correction^\top \infomat_{k|k}^r \correction
-
\correction^\top \knownterm_k^r.
\]
}%

Since \(\infomat_{k|k}^\ell\) is block diagonal, the minimization of \(J_\ell\) is fully decoupled across agents. Therefore, local measurements can be integrated immediately by solving the local problem
{\small
\begin{equation}
\label{eq:private_only_correction}
\correction_k^\ell
=
(\infomat_{k|k}^\ell)^{-1}\knownterm_k^\ell,
\end{equation}
}%
or, equivalently, agent-wise,
{\small
\[
\correction_{i,k}^\ell
=
(\infomat_{i,k|k}^{\ell})^{-1}\knownterm_{i,k}^\ell,
\qquad i\in\vertices.
\]
}%
This yields an instantaneous correction based only on local measurements, without requiring communication.

To account for the relative measurements without double counting the local information, one may then introduce the residual variable
{\small
\[
\correction_k=\correction_k^\ell+\boldsymbol{\eta}_k.
\]
}%
Substituting this expression into \eqref{eq:split_linear_system} gives
{\small
\[
(\infomat_{k|k}^\ell+\infomat_{k|k}^r)(\correction_k^\ell+\boldsymbol{\eta}_k)
=
\knownterm_k^\ell+\knownterm_k^r.
\]
}%
Using the identity
{\small
$
\infomat_{k|k}^\ell\correction_k^\ell=\knownterm_k^\ell,
$
}%
one obtains
{\small
\begin{equation}
\label{eq:residual_correction}
(\infomat_{k|k}^\ell+\infomat_{k|k}^r)\boldsymbol{\eta}_k
=
\underbrace{\knownterm_k^r-\infomat_{k|k}^r\correction_k^\ell}_{\define \knownterm_k^\eta}.
\end{equation}
}%
Hence, the residual correction solves \eqref{eq:residual_correction}
and the full correction is recovered as
{\small
\begin{equation}
\label{eq:full_split_correction}
\correction_k=\correction_k^\ell+\boldsymbol{\eta}_k.
\end{equation}
}%

Accordingly, when direct integration of local measurements is employed, the distributed optimization layer is no longer used to compute the full correction $\correction_k$, but rather the residual correction $\boldsymbol{\eta}_k$. The local term $\correction_k^\ell$ is computed independently by each agent using only its own measurements, whereas the distributed ADMM iterations are applied to \eqref{eq:residual_correction}. Once the residual correction has been computed, the full correction is recovered through \eqref{eq:full_split_correction}.

More precisely, the residual correction is the unique solution of the quadratic program
{\small
\begin{equation}
\label{eq:qp_residual}
\boldsymbol{\eta}_k
=
\argmin_{\boldsymbol{\eta}\in\mathbb{R}^{\Nrob d}}
\left\{
\frac12 \boldsymbol{\eta}^\top \infomat_{k|k}\boldsymbol{\eta}
-
\boldsymbol{\eta}^\top \knownterm_k^\eta
\right\},
\end{equation}
}%
where $\knownterm_k^\eta$ was defined in \eqref{eq:residual_correction}.
Equivalently, the distributed ADMM scheme developed in Sec.~\ref{subsec:admm_details} can be applied unchanged after replacing the global correction variable $\correction_k$ with the residual variable $\boldsymbol{\eta}_k$, and the linear term $\knownterm_k$ with $\knownterm_k^\eta$.

When direct integration of local measurements is used, Algorithm~\ref{alg:dist_observer_short} is modified as follows:
\begin{enumerate}
    \item after Line~4 of the algorithm, each agent first computes the local correction
    {\small
    \[
    \correction_{i,k}^{\ell}=(\infomat_{i,k|k}^{\ell})^{-1}\knownterm_{i,k}^{\ell};
    \]
    }%
    \item the distributed ADMM layer is then applied to the residual problem \eqref{eq:residual_correction}, using $\boldsymbol{\eta}_k$ as optimization variable, i.e.
    \begin{itemize}
        \item the primal step becomes: 
        {\small
        \begin{equation}
        \hat{\boldsymbol{\eta}}_{\neigh_{i},k,\iteradmm}
            =
            \boldsymbol{H}_{\rho,i,k}^{-1}\left( \bar{\knownterm}_{i,k}^\eta + \boldsymbol{A}_{qi}^\top \boldsymbol{q}_{i,k,\iteradmm} \right),
            \end{equation}
            }%
            where $\hat{\boldsymbol{\eta}}_{\neigh_{i},k,\iteradmm}$ and $\bar{\knownterm}_{i,k}^\eta$ are defined analogously to $\estextloccorr{i}{k,\iteradmm}$ and $\bar{\knownterm}_{i,k}$.
        \item the dual step is as in \eqref{eq:aux_vars_update} replacing $\estlocalcorrection{i,k,\iteradmm}{j}$ and $\estlocalcorrection{j,k,\iteradmm}{j}$ with $\hat{\boldsymbol{\eta}}_{i,k,\iteradmm}^{(j)}$ and $\hat{\boldsymbol{\eta}}_{j,k,\iteradmm}^{(j)}$
    \end{itemize}
    \item the full correction is reconstructed as
    {\small
    $
    \hat{\correction}_{i,k}=\correction_{i,k}^{\ell}+\hat{\boldsymbol{\eta}}_{i,k,\maxiteradmm}^{(i)}.
    $
    }%
\end{enumerate}
All subsequent observer and prediction updates are then performed using the correction $\hat{\correction}_{i,k}$.


For simplicity, in the following we focus on the algorithm introduced in the previous section. Nevertheless, the same analysis extends directly to the present decomposition.

\section{Stability Analysis}
\label{sec:stab_analysis}
This section establishes the stability of the proposed distributed observer. The analysis adopts a slow--fast viewpoint inspired by singular perturbation arguments \cite{carnevale2025admm, carnevale2025unifying}. After deriving the coupled error dynamics, we study separately the fast subsystem associated with the ADMM correction layer and the reduced estimation-error dynamics obtained for exact correction. These two ingredients are then combined through a comparison-based small-gain argument to establish uniform global exponential stability of the full interconnected system.

\subsection{Error dynamics}

Define {\small$\estcorrection_k \define \col(\hat{\correction}_{i,k})_{i \in \vertices}$} and the prior estimation error and correction error as
{\small
\begin{equation}
\label{eq:error_defs}
\estate_k \define \state_k-\stateest_{k|k-1},
\qquad
\ecorr_k \define \correction_k-\estcorrection_k .
\end{equation}
}%
Using the observer update
$\stateest_{k|k}
=
\stateest_{k|k-1}
+
\smallgain\estcorrection_{k}$,
together with the prediction
{\small
$\estate_{k+1}=\statemat_k(\state_k - \stateest_{k|k})$,
}%
and adding and subtracting the ideal correction term $\smallgain\correction_{k}$, the estimation error dynamics can be written as
%
{\small
\begin{equation}
\label{eq:err_dyn_full}
\begin{aligned}
\estate_{k+1}
&=
\statemat_k(\state_k-\stateest_{k|k})
\\
&=
\statemat_k\bigl(\state_k-\stateest_{k|k-1}-\smallgain\estcorrection_k\bigr)
\\
&=
\statemat_k\bigl(\estate_k-\smallgain\correction_k+\smallgain(\correction_k-\estcorrection_k)\bigr)
\\
&=
\statemat_k\bigl(\estate_k-\smallgain\correction_k\bigr)
+
\smallgain\statemat_k\ecorr_k
\\
&=
\statemat_k
\left(
\estate_k
-
\smallgain\infomat_{k|k}^{-1}\knownterm_k
\right)
+
\smallgain\statemat_k\ecorr_k
\\
&=
\statemat_k
\left(
\estate_k
-
\smallgain\infomat_{k|k}^{-1}
\measmat_k^\top
\outcov_k
\measmat_k
\estate_k
\right)
+
\smallgain\statemat_k\ecorr_k
\\
&=
\statemat_k
\left(
\eye{\Nrob d}
-
\smallgain\infomat_{k|k}^{-1}
\measmat_k^\top
\outcov_k
\measmat_k
\right)
\estate_k
+
\smallgain\statemat_k\ecorr_k .
\end{aligned}
\end{equation}
}%

Using \eqref{eq:info_additive_S}, we can rewrite \eqref{eq:err_dyn_full} as
{\small
\begin{equation}\label{eq:new_xtildedyn}
\begin{aligned}
    \estate_{k+1}
&=
\statemat_k\left(
\eye{\Nrob d}
-
\infomat_{k|k}^{-1}
(\infomat_{k|k} - \infomat_{k|k-1})
\right) \estate_k
+
\smallgain\statemat_k
\ecorr_{k}
\\
&=
\underbrace{\statemat_k\infomat_{k|k}^{-1}\infomat_{k|k-1}}_{\define \boldsymbol{\Phi}_k}\estate_k
+
\smallgain\statemat_k
\ecorr_{k}.
\end{aligned}
\end{equation}
}%
%
%
%
%

We now express the ADMM iterations in \eqref{eq:aux_vars_update}--\eqref{eq:closed_form_primal} in compact vector form by stacking the variables of all agents.
The internal state of the ADMM is entirely described by the dual variables $\boldsymbol{q}_{k,\iteradmm} \define \verticalstack{\boldsymbol{q}}{i,k, \iteradmm}{i}{\vertices} \in \nR{2|\edges|d}$, with $\boldsymbol{q}_{i,k, \iteradmm}$ defined in \eqref{eq:def_qk_barbk}.
Likewise, define the stacked primal variable
{\small
\[
\hat{\boldsymbol \xi}_{\neigh,k,\iteradmm}
\define
\verticalstack{\hat{\boldsymbol{\xi}}}{\neigh_i,k,\iteradmm}{i}{\vertices},
\]
}%
together with
{\small
$
\boldsymbol H_{\rho,k}\define\blkdiagconcat{\boldsymbol{H}}{\rho,i,k}{i}{\vertices},
$ 
$
\bar{\knownterm}_k \define \verticalstack{\bar{\knownterm}}{i,k}{i}{\vertices},
$ and
$
\boldsymbol A_q \define \blkdiagconcat{\boldsymbol{A}}{qi}{i}{\vertices},
$
}%
where \(\boldsymbol{H}_{\rho,i,k}\), \(\bar{\knownterm}_{i,k}\), and
\(\boldsymbol{A}_{qi}\) are defined in \eqref{eq:H_decomp},
\eqref{eq:def_qk_barbk}, and \eqref{eq:defAqi}, respectively.
Collecting the dual updates in \eqref{eq:aux_vars_update} into a single stacked
equation for all agents and edges yields
{\small
\[
\boldsymbol q_{k,\iteradmm+1}
=
(1-\alpha)\boldsymbol q_{k,\iteradmm}
+
\alpha\left(
-\boldsymbol P\boldsymbol q_{k,\iteradmm}
+
2\rho\,\boldsymbol A_q \hat{\boldsymbol \xi}_{\neigh,k,\iteradmm}
\right),
\]
}%
where \(\boldsymbol P\) is the permutation matrix exchanging the dual variables
associated with the two endpoints of each edge. On the other hand, by stacking
the closed-form primal update \eqref{eq:closed_form_primal}, one obtains
{\small
\[
\hat{\boldsymbol \xi}_{\neigh,k,\iteradmm}
=
\boldsymbol H_{\rho,k}^{-1}
\left(
\bar{\knownterm}_k+\boldsymbol A_q^\top \boldsymbol q_{k,\iteradmm}
\right).
\]
}%
Substituting this expression into the stacked dual update gives
{\small
\[
\begin{aligned}
\boldsymbol q_{k,\iteradmm+1}
&=
(1-\alpha)\boldsymbol q_{k,\iteradmm}
+
\alpha\Bigl(
-\boldsymbol P\boldsymbol q_{k,\iteradmm}
\\
&\quad 
+
2\rho\,\boldsymbol A_q
\boldsymbol H_{\rho,k}^{-1}
\bigl(
\bar{\knownterm}_k+\boldsymbol A_q^\top \boldsymbol q_{k,\iteradmm}
\bigr)
\Bigr)
\\
&=
(1-\alpha)\boldsymbol q_{k,\iteradmm}
-\alpha \boldsymbol P\boldsymbol q_{k,\iteradmm}
+
2\alpha\rho\,\boldsymbol A_q
\boldsymbol H_{\rho,k}^{-1}
\boldsymbol A_q^\top \boldsymbol q_{k,\iteradmm}
\\
&\quad
+
2\alpha\rho\,\boldsymbol A_q
\boldsymbol H_{\rho,k}^{-1}\bar{\knownterm}_k
\\
&=
\Bigl(
(1-\alpha)\eye{2|\edges|d}
-\alpha \boldsymbol P
+
2\alpha\rho\,\boldsymbol A_q\boldsymbol H_{\rho,k}^{-1}\boldsymbol A_q^\top
\Bigr)\boldsymbol q_{k,\iteradmm}
\\
&\quad
+
2\alpha\rho\,\boldsymbol A_q\boldsymbol H_{\rho,k}^{-1}\bar{\knownterm}_k.
\end{aligned}
\]
}%
Therefore, defining
{\small
\begin{equation}\label{eq:Fdef}
\boldsymbol F_k
\define
\eye{2|\edges|d}
+
\boldsymbol P
-
2\rho\,\boldsymbol A_q\boldsymbol H_{\rho,k}^{-1}\boldsymbol A_q^\top,
\end{equation}
}%
the internal ADMM dynamics can be written in compact form as
{\small
\begin{equation}\label{eq:vect_qdyn}
\boldsymbol q_{k,\iteradmm+1}
=
\underbrace{\left(\eye{2|\edges|d}-\alpha \boldsymbol F_k\right)}_{\define \boldsymbol T_k}
\boldsymbol q_{k,\iteradmm}
+
\alpha\underbrace{2\rho\,\boldsymbol A_q\boldsymbol H_{\rho,k}^{-1}\bar{\knownterm}_k}_{\define \boldsymbol c_k}.
\end{equation}
}%

%
From \eqref{eq:vect_qdyn}, the set of equilibria at time $k$ for the dynamics of ADMM is 
{\small
\[
\mathcal{Q}_{\mathrm{eq},k}
=
\left\{
\boldsymbol{q}_{k}
\;\middle|\;
\boldsymbol{F}_k\boldsymbol{q}_{k}
=
\boldsymbol{c}_k
\right\},
\]
}%
where the linear system in general admits multiple solutions, induced by the edge-based formulation, see \cite[Lemma~2]{bastianello2020asynchronous}.
%
The projection of $\boldsymbol{q}_{k,\iteradmm}$ onto the affine set $\mathcal{Q}_{\mathrm{eq},k}$ is (\cite[Sec.6.2.2]{parikh2014proximal})
{\small
\begin{equation}\label{eq:qeq}
\boldsymbol{q}_{\mathrm{eq},k,\iteradmm} = \boldsymbol{q}_{k,\iteradmm} - \boldsymbol{F}_k^{\dagger}(\boldsymbol{F}_k\boldsymbol{q}_{k,\iteradmm} - \boldsymbol{c}_k)
\end{equation}
}%
where $(\cdot)^\dagger$ denotes the Moore-Penrose pseudo-inverse. Accordingly, define $\tilde{\boldsymbol{q}}_{k,\iteradmm} = \boldsymbol{q}_{\mathrm{eq},k,\iteradmm} - \boldsymbol{q}_{k,\iteradmm}$.
We define the distance from the set of equilibria as 
{\small
\begin{equation}\label{eq:dist_eqset}
    \dist{\mathcal{Q}_{\mathrm{eq},k}}(\boldsymbol{q}_{k,\iteradmm})
\define
\inf_{\boldsymbol{q}_{\ast}\in\mathcal{Q}_{\mathrm{eq},k}}\|\boldsymbol{q}_{k,\iteradmm}-\boldsymbol{q}_{\ast}\|
= \norm{\tilde{\boldsymbol{q}}_{k,\iteradmm}}.
\end{equation}
}%

We now characterize the correction term injected into the observer as an output
of the ADMM dynamics. Specifically, define
{\small
$
\boldsymbol{\Sigma} \define \blkdiagconcat{\boldsymbol{\Sigma}}{i}{i}{\vertices},
$
}%
where \(\boldsymbol{\Sigma}_i\) is a selection matrix such that
{\small
$
\hat{\correction}_{i,k}
=
\boldsymbol{\Sigma}_i
\hat{\boldsymbol{\xi}}_{\neigh_i,k,\maxiteradmm}.
$
}%
Then, by stacking the local corrections, one obtains
{\small
\[
\estcorrection_k
=
\boldsymbol{\Sigma}\hat{\boldsymbol \xi}_{\neigh,k,\maxiteradmm}
=
\boldsymbol{\Sigma}\boldsymbol H_{\rho,k}^{-1}
\left(
\bar{\knownterm}_k+\boldsymbol A_q^\top \boldsymbol q_{k,\maxiteradmm}
\right).
\]
}%

Likewise, if \(\boldsymbol q_{\mathrm{eq},k,\maxiteradmm}\in\mathcal Q_{\mathrm{eq},k}\)
is an equilibrium point of the ADMM dynamics, the corresponding primal variable is
{\small
\[
{\boldsymbol \xi}_{\neigh,k,\mathrm{eq}}
=
\boldsymbol H_{\rho,k}^{-1}
\left(
\bar{\knownterm}_k+\boldsymbol A_q^\top \boldsymbol q_{\mathrm{eq},k,\maxiteradmm}
\right).
\]
}%
By construction of the ADMM reformulation, the equilibrium primal variable
coincides with the exact optimizer of the correction problem. Therefore, the
ideal correction satisfies
{\small
\[
\correction_k
=
\boldsymbol{\Sigma}{\boldsymbol \xi}_{\neigh,k,\mathrm{eq}}
=
\boldsymbol{\Sigma}\boldsymbol H_{\rho,k}^{-1}
\left(
\bar{\knownterm}_k+\boldsymbol A_q^\top \boldsymbol q_{\mathrm{eq},k,\maxiteradmm}
\right).
\]
}%
Recalling the definition of $\ecorr_k$
we obtain
{\small
\[
\begin{aligned}
\ecorr_k
&=
\boldsymbol{\Sigma}\boldsymbol H_{\rho,k}^{-1}
\left(
\bar{\knownterm}_k+\boldsymbol A_q^\top \boldsymbol q_{\mathrm{eq},k,\maxiteradmm}
\right)
-
\boldsymbol{\Sigma}\boldsymbol H_{\rho,k}^{-1}
\left(
\bar{\knownterm}_k+\boldsymbol A_q^\top \boldsymbol q_{k,\maxiteradmm}
\right)
\\
&=
\boldsymbol{\Sigma}\boldsymbol H_{\rho,k}^{-1}\boldsymbol A_q^\top
\tilde{\boldsymbol q}_{k,\maxiteradmm}.
\end{aligned}
\]
}%

The exact coupled dynamics are most conveniently written in terms of the pair
\((\estate_k,\boldsymbol{q}_{k,\iteradmm})\). 
Then the coupled error system is described by
{\small
\begin{subequations}\label{eq:coupled}%
\begin{align}
    \estate_{k+1}
    &=
    \boldsymbol{\Phi}_k\estate_k+\smallgain\statemat_k\ecorr_k,
    \label{eq:err_dyn_compact}
    \\
    \boldsymbol{q}_{k,0}
    &= \boldsymbol{q}_{k-1,\maxiteradmm}
    \\
    \boldsymbol{q}_{k,\iteradmm+1}
    &=
    \boldsymbol{T}_k
    \boldsymbol{q}_{k,\iteradmm}+
    \alpha\boldsymbol{c}_k,
    \label{eq:fast_full}
    \\
    \boldsymbol{q}_{\mathrm{eq},k,\iteradmm}
    &=
    \boldsymbol{q}_{k,\iteradmm}
    -
    \boldsymbol{F}_k^\dagger(\boldsymbol{F}_k\boldsymbol{q}_{k,\iteradmm}-\boldsymbol{c}_k),
    \label{eq:qeq_dyn}
    \\
    \ecorr_k
    &=
    \boldsymbol{\Sigma}\boldsymbol{H}_{\rho,k}^{-1}\boldsymbol{A}_q^\top
    \tilde{\boldsymbol{q}}_{k,\maxiteradmm}.
    \label{eq:output_fast}
\end{align}
\end{subequations}
}

\subsection{ADMM subsystem}

Here, we analyze the stability of the ADMM subsystem while keeping the time
index of the Kalman observer frozen. We denote the frozen observer time by
$\frozeniteradmm$, while the ADMM iterations evolve in the index $\iteradmm$.
Under this assumption, $\bar{\knownterm}_{\frozeniteradmm}$ and
$\infomat_{\frozeniteradmm|\frozeniteradmm}$ are constant, and, from \eqref{eq:vect_qdyn}, the frozen ADMM dynamics reduce to
{\small
\begin{equation}
\label{eq:fast_subsys}
\boldsymbol{q}_{k,\iteradmm+1}
=
\underbrace{
\boldsymbol{T}_{\frozeniteradmm}
\boldsymbol{q}_{k,\iteradmm}+
\alpha\boldsymbol{c}_{\frozeniteradmm}}_{\define \operator{T}_{\frozeniteradmm}(\boldsymbol{q}_{k,\iteradmm})},
\end{equation}
}%
To establish convergence, we recall the following operator-theoretic notions; see, e.g., \cite{ryu2016primer}.
\begin{definition}
    An operator $\operator{T}:\nR{n} \to \nR{n}$ is nonexpansive if  
    {\small $\norm{\operator{T}(\boldsymbol{x})  -\operator{T}(\boldsymbol{y})}
    \leq \norm{\boldsymbol{x} - \boldsymbol{y}}$, $\forall \boldsymbol{x}, \boldsymbol{y} \in \nR{n}
    $}
\end{definition}

\begin{definition}
An operator $\operator{G}:\nR{n}\to\nR{n}$ is \emph{$\alpha$-averaged}, with
$\alpha\in(0,1)$, if there exists a nonexpansive operator
$\operator{T}:\nR{n}\to\nR{n}$ such that
{\small
$
\operator{G}=(1-\alpha)\ideop+\alpha \operator{T}
$,
}
with $\ideop$ being the identity operator.
\end{definition}
For an $\alpha$-averaged operator it holds \cite[Prop.~4.35]{bauschke2011convex}:
{\small
\begin{equation}\label{eq:alpha_avg}
    \norm{\operator{G}\boldsymbol{x} - \operator{G}\boldsymbol{y}}^2 \leq \norm{\boldsymbol{x} - \boldsymbol{y}}^2 - \frac{1-\alpha}{\alpha} \norm{(\ideop-\operator{G})\boldsymbol{x} - (\ideop-\operator{G})\boldsymbol{y}}^2
\end{equation}
}%
\begin{definition}
An operator $\operator{G}:\nR{n}\to\nR{n}$ is said to be \emph{metrically subregular} if there exists $\sigma>0$ such that
{\small
\[
\begin{aligned}
&\fix(\operator{G})\define \{\boldsymbol z\in\nR{n}:\operator{G}(\boldsymbol z)=\boldsymbol z\},
\\
&\dist{\fix(\operator{G})}(\boldsymbol x)
=
\inf_{\boldsymbol z\in\fix(\operator{G})}\|\boldsymbol x-\boldsymbol z\|
\le
\sigma\,\|\boldsymbol x-\operator{G}(\boldsymbol x)\|,
\quad \forall \boldsymbol x\in\nR{n}.
\end{aligned}
\]
}%
\end{definition}
Before proceeding with the stability proof of the ADMM subsystem, we state the following two lemmas, which will be used therein.

The following lemma shows that the nullspace of $\boldsymbol{F}_k$ is purely structural, being determined by the graph-induced permutation matrix $\boldsymbol{P}$ and the edge-based constraint matrix $\boldsymbol{A}_{q}$, and is therefore independent of the time-varying positive definite weights.
\begin{lemma}[Structural invariance of $\ker(\boldsymbol{F}_k)$]
\label{lem:Fk_nullspace}
Let
{\small
$\boldsymbol{F}_k
=
\eye{}+\boldsymbol{P}
-
2\rho\,\boldsymbol{P}\boldsymbol{A}_q\boldsymbol{H}_{\rho,k}^{-1}\boldsymbol{A}_q^\top$
}%
as in \eqref{eq:Fdef}, where $\boldsymbol{P}$ and $\boldsymbol{A}_q$ are constant matrices induced by the graph structure, and ${\boldsymbol{H}_{\rho,k}\succ0}$ is block diagonal for all $k$. Then
{\small
\[
\ker(\boldsymbol{F}_k)
=
\ker(\eye{}+\boldsymbol{P})\cap\ker(\boldsymbol{A}_q^\top),
\qquad \forall k.
\]
}%
In particular, $\ker(\boldsymbol{F}_k)$ is independent of $k$.
\end{lemma}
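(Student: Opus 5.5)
The plan is to prove the two inclusions separately. The inclusion ``$\supseteq$'' is immediate. The inclusion ``$\subseteq$'' will follow from a norm-equality argument that exploits the structure $\boldsymbol{H}_{\rho,k}=\rho\boldsymbol{H}^0+\boldsymbol{H}^1_k$ in \eqref{eq:H_decomp}.

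For ``$\supseteq$'': if $(\eye{}+\boldsymbol{P})\boldsymbol{q}=0$ and $\boldsymbol{A}_q^\top\boldsymbol{q}=0$, then both terms of $\boldsymbol{F}_k\boldsymbol{q}$ vanish, for every $k$ and every $\boldsymbol{H}_{\rho,k}\succ0$. This holds for either placement of $\boldsymbol{P}$ in front of the last term.

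For ``$\subseteq$'', the first step is to record the spectral facts.

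\textbf{(i)} $\boldsymbol{P}$ is a symmetric involutive permutation. Hence it is orthogonal, and $\eye{}+\boldsymbol{P}\succeq0$ with eigenvalues in $\{0,2\}$.

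\textbf{(ii)} From \eqref{eq:defAqi} one checks blockwise that $\boldsymbol{A}_{qi}^\top\boldsymbol{A}_{qi}=\boldsymbol{H}_i^0$. Hence $\boldsymbol{A}_q^\top\boldsymbol{A}_q=\boldsymbol{H}^0$, and $\boldsymbol{A}_q$ has full column rank. Since $\boldsymbol{H}^1_k\succeq0$, we have $\boldsymbol{H}_{\rho,k}\succeq\rho\boldsymbol{A}_q^\top\boldsymbol{A}_q$. Therefore
\[
\boldsymbol{M}_k\define2\rho\boldsymbol{A}_q\boldsymbol{H}_{\rho,k}^{-1}\boldsymbol{A}_q^\top\preceq2\boldsymbol{A}_q(\boldsymbol{A}_q^\top\boldsymbol{A}_q)^{-1}\boldsymbol{A}_q^\top=2\boldsymbol{\Pi},
\]
where $\boldsymbol{\Pi}$ is the orthogonal projector onto $\operatorname{range}(\boldsymbol{A}_q)$.

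Consequently $\boldsymbol{R}_k\define\boldsymbol{M}_k-\eye{}$ is symmetric with spectrum in $[-1,1]$, i.e.\ $\|\boldsymbol{R}_k\|\le1$. Moreover, the eigenspace of $\boldsymbol{R}_k$ for eigenvalue $-1$ is $\ker\boldsymbol{M}_k=\ker\boldsymbol{A}_q^\top$, because $\boldsymbol{H}_{\rho,k}^{-1}\succ0$.

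Next, rewrite the kernel condition as a fixed-point condition for a nonexpansive map. With the lemma's form, $\boldsymbol{F}_k=\eye{}-\boldsymbol{P}\boldsymbol{R}_k$. (With \eqref{eq:Fdef} one similarly gets $\eye{}+\boldsymbol{P}-\boldsymbol{M}_k$, and the same argument applies to $-\boldsymbol{P}+\boldsymbol{M}_k$ after the analogous rewriting.) Then $\boldsymbol{F}_k\boldsymbol{q}=0$ means $\boldsymbol{q}=\boldsymbol{P}\boldsymbol{R}_k\boldsymbol{q}$, so $\|\boldsymbol{q}\|=\|\boldsymbol{R}_k\boldsymbol{q}\|$. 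Since $\|\boldsymbol{R}_k\|\le1$, equality forces $\boldsymbol{q}=\boldsymbol{q}_-+\boldsymbol{q}_+$. Here $\boldsymbol{q}_-\in\ker\boldsymbol{A}_q^\top$ lies in the $(-1)$-eigenspace, and $\boldsymbol{q}_+$ lies in the $(+1)$-eigenspace, i.e.\ $\boldsymbol{M}_k\boldsymbol{q}_+=2\boldsymbol{q}_+$. If $\boldsymbol{q}_+=0$, then $\boldsymbol{A}_q^\top\boldsymbol{q}=0$ and $\boldsymbol{q}=-\boldsymbol{P}\boldsymbol{q}$, which is exactly membership in $\ker(\eye{}+\boldsymbol{P})\cap\ker(\boldsymbol{A}_q^\top)$.

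The main obstacle is excluding a nonzero $\boldsymbol{q}_+$. Equality $\boldsymbol{M}_k\boldsymbol{q}_+=2\boldsymbol{q}_+$ forces equality in the bound of (ii). So $\boldsymbol{q}_+=\boldsymbol{A}_q\boldsymbol{y}$ with $\boldsymbol{H}_{\rho,k}^{-1}\boldsymbol{A}_q^\top\boldsymbol{A}_q\boldsymbol{y}=\boldsymbol{y}/\rho$, which amounts to $\boldsymbol{H}^1_k\boldsymbol{y}=0$.

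I would then use the remaining relation $\boldsymbol{q}_+-\boldsymbol{q}_-=\boldsymbol{P}\boldsymbol{q}$, together with $\boldsymbol{P}$ exchanging edge endpoints, to show that the local copies encoded in $\boldsymbol{y}$ agree across neighbors, i.e.\ $\boldsymbol{y}$ is a consensus vector $\boldsymbol{y}=\col(\correction_i,\col(\correction_j)_{j\in\neigh_i})_i$. For a consensus vector, $\boldsymbol{y}^\top\boldsymbol{H}^1_k\boldsymbol{y}=\correction^\top\infomat_{k|k}\correction$ by \eqref{eq:qp_decomp_corrected}. By Lemma~\ref{lemma:conditioning} this quantity is strictly positive unless $\correction=0$, which gives $\boldsymbol{q}_+=0$.

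If showing the consensus property directly from $\boldsymbol{P}$ proves awkward, I would instead use the characterization of the ADMM fixed points of \cite[Lemma~2]{bastianello2020asynchronous} to identify $\operatorname{range}(\boldsymbol{A}_q)\cap\{\boldsymbol{q}:\boldsymbol{q}=\boldsymbol{P}\boldsymbol{R}_k\boldsymbol{q}\}$ with consensus directions. Once $\boldsymbol{q}_+=0$ is established, the kernel is described purely by $\boldsymbol{P}$ and $\boldsymbol{A}_q$, which are constant, so it is independent of $k$.
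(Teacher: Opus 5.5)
Your argument is correct, and the one step you left as a plan does go through. Your route differs from the paper's. The paper gets the crucial inclusion $\ker(\eye{}-\boldsymbol{T}_k)\subseteq\ker(\boldsymbol{A}_q^\top)$ in one line from \cite[Lemma~2]{bastianello2020asynchronous}, after which $\boldsymbol{F}_k\boldsymbol{v}=(\eye{}+\boldsymbol{P})\boldsymbol{v}$ closes the proof. You instead derive that inclusion from scratch, using $\boldsymbol{A}_q^\top\boldsymbol{A}_q=\boldsymbol{H}^0$, the bound $\boldsymbol{M}_k\preceq 2\boldsymbol{\Pi}$, the norm equality $\|\boldsymbol{q}\|=\|\boldsymbol{R}_k\boldsymbol{q}\|$, and $\infomat_{k|k}\succ0$.

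The consensus step is a direct entrywise check. Write $\boldsymbol{y}_a^{(b)}$ for the block of $\boldsymbol{y}$ that is agent $b$'s copy of $\correction_a$. For neighbors $a,b$, the permutation $\boldsymbol{P}$ swaps $\boldsymbol{q}_{ab,a}$ with $\boldsymbol{q}_{ba,a}$. The latter entry is agent $b$'s dual for its copy of $\correction_a$, so it vanishes on $\ker(\boldsymbol{A}_q^\top)$. Moreover, $(\boldsymbol{A}_q\boldsymbol{y})_{ab,a}=\boldsymbol{y}_a^{(a)}$ and $(\boldsymbol{A}_q\boldsymbol{y})_{ba,a}=\boldsymbol{y}_a^{(b)}$. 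Reading $\boldsymbol{P}\boldsymbol{q}=\boldsymbol{q}_+-\boldsymbol{q}_-$ at the entries $(ab,a)$ and $(ba,a)$ gives
\[
\boldsymbol{y}_a^{(b)}=\boldsymbol{y}_a^{(a)}-(\boldsymbol{q}_-)_{ab,a},
\qquad
\boldsymbol{y}_a^{(a)}+(\boldsymbol{q}_-)_{ab,a}=\boldsymbol{y}_a^{(b)}.
\]
Hence $\boldsymbol{y}$ is a consensus vector, and also $\boldsymbol{q}_-=\zeros{}$. In fact both sides of the lemma equal $\{\zeros{}\}$ in this partition-based formulation: $\ker(\boldsymbol{A}_q^\top)$ kills every cross entry $\boldsymbol{q}_{ba,a}$, and $\ker(\eye{}+\boldsymbol{P})$ then forces $\boldsymbol{q}_{ab,a}=-\boldsymbol{q}_{ba,a}=\zeros{}$.

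What your route buys is transparency about hypotheses. The positivity $\infomat_{k|k}\succ0$ from Lemma~\ref{lemma:conditioning}, which you invoke, is genuinely needed. The lemma's stated assumption, $\boldsymbol{H}_{\rho,k}\succ0$ block diagonal, is not enough. For instance, take $\boldsymbol{H}^1_k=\zeros{}$. Then every $\boldsymbol{q}=\boldsymbol{A}_q\boldsymbol{y}$ with consensus $\boldsymbol{y}\neq\zeros{}$ satisfies $\boldsymbol{P}\boldsymbol{q}=\boldsymbol{q}$ and $\boldsymbol{M}_k\boldsymbol{q}=2\boldsymbol{q}$, hence $\boldsymbol{F}_k\boldsymbol{q}=\zeros{}$. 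Yet $\boldsymbol{A}_q^\top\boldsymbol{q}=\boldsymbol{H}^0\boldsymbol{y}\neq\zeros{}$. The paper's one-line citation leaves this requirement implicit; its advantage is brevity.

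Finally, the mismatch between \eqref{eq:Fdef} and the lemma's form of $\boldsymbol{F}_k$ is harmless, and can be seen more directly than in your parenthetical. Since $\boldsymbol{P}^2=\eye{}$, the lemma's matrix equals $\boldsymbol{P}$ times the one in \eqref{eq:Fdef}, so the two kernels coincide.
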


\begin{proof}
Consider $\boldsymbol{T}_k$ defined as in \eqref{eq:vect_qdyn}. One has
{\small
$\ker(\boldsymbol{F}_k)=\ker(\eye{}-\boldsymbol{T}_k)$.
}%
By \cite[Lemma~2]{bastianello2020asynchronous},
{\small
\[
\ker(\eye{}-\boldsymbol{T}_k)\subseteq \ker(\boldsymbol{A}_q^\top).
\]
}%
Hence, if $\boldsymbol{v}\in\ker(\boldsymbol{F}_k)$, then
$\boldsymbol{A}_q^\top\boldsymbol{v}=0$. Substituting into the definition of
$\boldsymbol{F}_k$ yields
{\small
\[
\boldsymbol{0}
=
\boldsymbol{F}_k\boldsymbol{v}
=
(\eye{}+\boldsymbol{P})\boldsymbol{v}
-
2\rho\,\boldsymbol{P}\boldsymbol{A}_q\boldsymbol{H}_{\rho,k}^{-1}\boldsymbol{A}_q^\top\boldsymbol{v}
=
(\eye{}+\boldsymbol{P})\boldsymbol{v},
\]
}%
so that $\boldsymbol{v}\in\ker(\eye{}+\boldsymbol{P})$. Therefore,
{\small
\[
\ker(\boldsymbol{F}_k)
\subseteq
\ker(\eye{}+\boldsymbol{P})\cap\ker(\boldsymbol{A}_q^\top).
\]
}%

Conversely, let
$\boldsymbol{v}\in\ker(\eye{}+\boldsymbol{P})\cap\ker(\boldsymbol{A}_q^\top)$.
Then
{\small
\[
(\eye{}+\boldsymbol{P})\boldsymbol{v}=0,
\qquad
\boldsymbol{A}_q^\top\boldsymbol{v}=0,
\]
}%
and thus
{\small
\[
\boldsymbol{F}_k\boldsymbol{v}
=
(\eye{}+\boldsymbol{P})\boldsymbol{v}
-
2\rho\,\boldsymbol{P}\boldsymbol{A}_q\boldsymbol{H}_{\rho,k}^{-1}\boldsymbol{A}_q^\top\boldsymbol{v}
=
\boldsymbol{0}.
\]
}%
Hence,
{\small
\[
\ker(\eye{}+\boldsymbol{P})\cap\ker(\boldsymbol{A}_q^\top)
\subseteq
\ker(\boldsymbol{F}_k).
\]
}%
Combining the two inclusions proves the claim. Since neither
$\boldsymbol{P}$ nor $\boldsymbol{A}_q$ depends on $k$, the right-hand side is independent of $k$, and therefore so is $\ker(\boldsymbol{F}_k)$.
\end{proof}
\begin{lemma}[Uniform boundedness of the pseudoinverse of $\boldsymbol F_k$]
\label{lem:Fdagger_bound}
Suppose that the penalty parameter $\rho>0$ is fixed. Under
Assumptions~\ref{ass:boundedness}--\ref{ass:invertibility}, there exists a constant $\bar F^\dagger>0$ such that
\begin{equation}
\label{eq:Fdagger_uniform_bound}
\|\boldsymbol F_k^\dagger\|
\le
\bar F^\dagger,
\qquad \forall k.
\end{equation}
Equivalently, the smallest nonzero singular value of $\boldsymbol F_k$
admits a uniform positive lower bound.
\end{lemma}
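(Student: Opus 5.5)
The plan is a compactness argument. The key inputs are Lemma~\ref{lem:Fk_nullspace}, which says $\ker(\boldsymbol F_k)$ is a fixed subspace $\mathcal K$, and Lemma~\ref{lemma:conditioning}, which bounds the data entering $\boldsymbol F_k$.

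First I identify the set in which $\boldsymbol H_{\rho,k}$ lives. By \eqref{eq:H_decomp}, $\boldsymbol H_{\rho,i,k}=\rho\boldsymbol H_i^0+\boldsymbol H^1_{i,k}$, and $\boldsymbol H^1_{i,k}$ is a sum of the positive semidefinite contributions $\infomat^{\ell}_{i,k|k}$ and $\frac12\infomat^{e}_{ij,k|k}$, embedded appropriately. Hence $\boldsymbol H^1_{i,k}\succeq 0$.

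These contributions are PSD and add up, through the $\boldsymbol E_i,\boldsymbol E_{ij}$ embeddings, to $\infomat_{k|k}\preceq\bar s_\smallgain\eye{}$. So each of them is bounded by $\bar s_\smallgain$ in spectral norm, and therefore $\|\boldsymbol H^1_{i,k}\|\le c$ for a constant $c$ depending only on $\bar s_\smallgain$ and the maximum degree. Consequently $\boldsymbol H_{\rho,k}$ belongs to the set
\[
\mathcal H\define\{\boldsymbol H=\blkdiag(\boldsymbol H_i)_{i\in\vertices}:\ \rho\boldsymbol H_i^0\preceq \boldsymbol H_i\preceq \rho\boldsymbol H_i^0+c\,\eye{}\}.
\]
This set is compact, and every element satisfies $\boldsymbol H\succeq\rho\,\eye{}$ (the degree is at least one on non-isolated nodes). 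In particular, the map $\boldsymbol H\mapsto \boldsymbol F(\boldsymbol H)=\eye{}+\boldsymbol P-2\rho\boldsymbol A_q\boldsymbol H^{-1}\boldsymbol A_q^\top$ is continuous on $\mathcal H$.

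Second, I use that Lemma~\ref{lem:Fk_nullspace} applies to every $\boldsymbol H\in\mathcal H$, not only to the realized $\boldsymbol H_{\rho,k}$. Its proof uses only block-diagonality and positive definiteness, so $\ker\boldsymbol F(\boldsymbol H)=\mathcal K$ for all $\boldsymbol H\in\mathcal H$. The smallest nonzero singular value of $\boldsymbol F(\boldsymbol H)$ then equals
\[
\sigma(\boldsymbol H)\define\min\{\|\boldsymbol F(\boldsymbol H)\boldsymbol v\|:\ \boldsymbol v\in\mathcal K^\perp,\ \|\boldsymbol v\|=1\},
\]
because $\mathcal K^\perp$ is the row space of $\boldsymbol F(\boldsymbol H)$. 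We also have $\|\boldsymbol F(\boldsymbol H)^\dagger\|=1/\sigma(\boldsymbol H)$.

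Third, consider $(\boldsymbol H,\boldsymbol v)\mapsto\|\boldsymbol F(\boldsymbol H)\boldsymbol v\|$. It is continuous on the compact set $\mathcal H\times(\mathcal K^\perp\cap\mathbb S)$ and strictly positive there, since $\boldsymbol v\notin\mathcal K=\ker\boldsymbol F(\boldsymbol H)$. It therefore attains a positive minimum $\underline\sigma$. Setting $\bar F^\dagger=1/\underline\sigma$ gives \eqref{eq:Fdagger_uniform_bound} for every $k$.

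The main obstacle is the first step: producing a $k$-independent compact set that contains all $\boldsymbol H_{\rho,k}$. This requires bounding each individual information contribution, not just their sum, and it is where the PSD splitting from Section~\ref{subsec:qp_structure} is essential. The constant-rank property from Lemma~\ref{lem:Fk_nullspace} is equally essential. Without it, the smallest nonzero singular value is not continuous, and the compactness argument would fail.
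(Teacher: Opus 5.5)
\textbf{There is a genuine gap in your second step.} The claim $\ker\boldsymbol F(\boldsymbol H)=\mathcal K$ for all $\boldsymbol H\in\mathcal H$ is false, so the positivity asserted in your third step fails.

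Your set $\mathcal H$ contains $\boldsymbol H=\rho\boldsymbol H^0$, where $\boldsymbol H^0\define\blkdiag(\boldsymbol H_i^0)_{i\in\vertices}=\boldsymbol A_q^\top\boldsymbol A_q$. Let $\boldsymbol C$ be the constant matrix mapping $\boldsymbol z\in\mathbb R^{\Nrob d}$ to the stacked local copies $\col(\boldsymbol z_i,\col(\boldsymbol z_j)_{j\in\neigh_i})_{i\in\vertices}$. Take $\boldsymbol v=\boldsymbol A_q\boldsymbol C\boldsymbol z$ with $\boldsymbol z\neq\boldsymbol 0$.
\begin{itemize}
\item Since the copies agree, $\boldsymbol P\boldsymbol v=\boldsymbol v$.
\item Since $\boldsymbol A_q^\top\boldsymbol A_q=\boldsymbol H^0$, one has $\rho\boldsymbol A_q(\rho\boldsymbol H^0)^{-1}\boldsymbol A_q^\top\boldsymbol v=\boldsymbol v$.
\item Hence $\boldsymbol F(\rho\boldsymbol H^0)\boldsymbol v=\boldsymbol v+\boldsymbol v-2\boldsymbol v=\boldsymbol 0$.
\item Yet $\boldsymbol v\in\ker(\eye{}-\boldsymbol P)\subseteq\mathcal K^\perp$.
\end{itemize}
Concretely, for two agents, one edge and $d=1$, one has $\boldsymbol A_q=\eye{4}$ and $\mathcal K=\{\boldsymbol 0\}$, yet $\boldsymbol F(\rho\boldsymbol H^0)=\boldsymbol P-\eye{4}$ is singular.

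This is not an isolated bad point. Set $\boldsymbol H^1_k\define\blkdiag(\boldsymbol H^1_{i,k})_{i\in\vertices}$. Along $\rho\boldsymbol H^0+t\boldsymbol H^1_k\in\mathcal H$ with $t\downarrow0$, the kernel is $\mathcal K$ at every $t>0$. Still, your continuous $\sigma(\cdot)$ tends to $\sigma(\rho\boldsymbol H^0)=0$, so $\|\boldsymbol F^\dagger\|$ is unbounded over $\mathcal H$.

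Your reading that Lemma~\ref{lem:Fk_nullspace} ``uses only block-diagonality and positive definiteness'' is therefore incorrect. Its proof imports the inclusion $\ker(\eye{}-\boldsymbol T_k)\subseteq\ker(\boldsymbol A_q^\top)$ from the ADMM literature. That inclusion holds only because the aggregated problem is strongly convex: by \eqref{eq:qp_decomp_corrected}, $\boldsymbol C^\top\boldsymbol H^1_k\boldsymbol C=\infomat_{k|k}\succ\boldsymbol 0$. The individual $\boldsymbol H^1_{i,k}$ are typically singular, since relative measurements only see differences. So this aggregate condition is genuinely extra information, and your $\mathcal H$, which keeps only $\boldsymbol H^1\succeq\boldsymbol 0$, discards it.

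\textbf{The repair is to build that information into the compact set.} Use
\[
\mathcal H_\star\define\{\boldsymbol H\in\mathcal H:\ \boldsymbol C^\top(\boldsymbol H-\rho\boldsymbol H^0)\boldsymbol C\succeq\underline s_\smallgain\eye{}\}.
\]
This set is still compact, since it adds one closed LMI; the quantitative bound matters, because $\succ\boldsymbol 0$ would not give a closed set. It contains every $\boldsymbol H_{\rho,k}$ by \eqref{eq:qp_decomp_corrected} and Lemma~\ref{lemma:conditioning}.

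On $\mathcal H_\star$, verify the kernel directly rather than citing the lemma:
\begin{itemize}
\item $\boldsymbol F(\boldsymbol H)\boldsymbol v=\boldsymbol 0$ gives $(\eye{}+\boldsymbol P)\boldsymbol v=2\boldsymbol M\boldsymbol v$, with $\boldsymbol M\define\rho\boldsymbol A_q\boldsymbol H^{-1}\boldsymbol A_q^\top$ and $\boldsymbol 0\preceq\boldsymbol M\preceq\eye{}$ because $\boldsymbol H\succeq\rho\boldsymbol A_q^\top\boldsymbol A_q$.
\item Split $\boldsymbol v=\boldsymbol v_s+\boldsymbol v_a$ into $\boldsymbol P$-symmetric and antisymmetric parts. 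Then $\boldsymbol M\boldsymbol v=\boldsymbol v_s$, hence $\boldsymbol v^\top\boldsymbol M\boldsymbol v=\boldsymbol v^\top\boldsymbol M^2\boldsymbol v$.
\item This forces $\boldsymbol A_q^\top\boldsymbol v_a=\boldsymbol 0$, and $\boldsymbol v_s=\boldsymbol A_q\boldsymbol C\boldsymbol z$ with $(\boldsymbol H-\rho\boldsymbol H^0)\boldsymbol C\boldsymbol z=\boldsymbol 0$.
\item The aggregate lower bound then gives $\boldsymbol z=\boldsymbol 0$.
\end{itemize}
With $\mathcal H_\star$ in place of $\mathcal H$, your second and third steps go through verbatim.

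For comparison, the paper runs the same compactness and constant-rank argument on the closure of $\{\boldsymbol F_k\}_k$, but simply asserts that the rank stays constant on the closure. That is true precisely because limits of $\boldsymbol H_{\rho,k}$ retain two bounds:
\begin{itemize}
\item your upper bound, which makes the limits of $\boldsymbol H_{\rho,k}^{-1}$ invertible; the paper's Step~1 alone does not give this;
\item the aggregate lower bound above.
\end{itemize}
Your explicit parametrization is a good way to make the paper's argument rigorous, but only once the set encodes $\infomat_{k|k}\succeq\underline s_\smallgain\eye{}$.
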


\begin{proof}
We proceed in four steps.

\emph{Step 1: uniform positive definiteness of $\boldsymbol H_{\rho,i,k}$.}
By construction,
{\small
$
\boldsymbol H_{\rho,i,k}
=
\rho \boldsymbol H_{0,i}
+
\varepsilon \bar{\boldsymbol H}_{1,i,k}.
$
}%
Since $\boldsymbol H_{0,i}\succ 0$ and the information contributions entering
$\bar{\boldsymbol H}_{1,i,k}$ are positive semidefinite, one has
{\small
$
\boldsymbol H_{\rho,i,k}\succeq \rho \boldsymbol H_{0,i}.
$
}%
Hence, letting
{\small
\[
\underline h_0 := \min_{i\in\mathcal V}\lambda_{\min}(\boldsymbol H_{0,i}) >0,
\]
}%
it follows that
{\small
\[
\boldsymbol H_{\rho,i,k}\succeq \rho \underline h_0 \, \eye{},
\qquad \forall i,\ \forall k,
\]
}%
and therefore
{\small
\begin{equation}
\label{eq:Hrho_inv_bound_local}
\|\boldsymbol H_{\rho,i,k}^{-1}\|
\le
\frac{1}{\rho \underline h_0},
\qquad \forall i,\ \forall k.
\end{equation}
}%
Since $\boldsymbol H_{\rho,k}$ is block diagonal, \eqref{eq:Hrho_inv_bound_local} implies
{\small
\begin{equation}
\label{eq:Hrho_inv_bound_global}
\|\boldsymbol H_{\rho,k}^{-1}\|
\le
\frac{1}{\rho \underline h_0},
\qquad \forall k.
\end{equation}
}%

\emph{Step 2: boundedness of the family $\{\boldsymbol F_k\}_k$.}
Using the definition of $\boldsymbol F_k$ and \eqref{eq:Hrho_inv_bound_global},
{\small
\[
\begin{aligned}
\|\boldsymbol F_k\|
&\le
\|\eye{}+\boldsymbol P\|
+
2\rho \|\boldsymbol P\|\,\|\boldsymbol A_q\|^2
\|\boldsymbol H_{\rho,k}^{-1}\|
\\
&\le
\|\eye{}+\boldsymbol P\|
+
\frac{2\|\boldsymbol P\|\,\|\boldsymbol A_q\|^2}{\underline h_0},
\end{aligned}
\]
}%
for all $k$. Therefore, the set $\mathcal F\define\{\boldsymbol F_k\}_k$ is bounded in
$\mathbb{R}^{2|\edges|d \times 2|\edges|d}$. Since this space is finite-dimensional,
its closure $\overline{\mathcal F}$ is compact.

\emph{Step 3: constant rank of $\boldsymbol F_k$.}
By Lemma~\ref{lem:Fk_nullspace}, one has
{\small
\[
\ker(\boldsymbol F_k)
=
\ker(\eye{}+\boldsymbol P)\cap\ker(\boldsymbol A_q^\top),
\qquad \forall k.
\]
}%
Hence $\ker(\boldsymbol F_k)$ is independent of $k$, and so
$\operatorname{rank}(\boldsymbol F_k)$ is constant over $\mathcal F$ and over
its closure $\overline{\mathcal F}$.

\emph{Step 4: uniform lower bound on the smallest nonzero singular value.}
Let $\sigma_{\min}^+(\boldsymbol F)$ denote the smallest nonzero singular value
of a matrix $\boldsymbol F$. On the set of matrices of fixed rank,
$\sigma_{\min}^+(\cdot)$ is continuous. Since every matrix in
$\overline{\mathcal F}$ has the same rank as $\boldsymbol F_k$, the map
{\small
\[
\boldsymbol F \mapsto \sigma_{\min}^+(\boldsymbol F)
\]
}%
is continuous on the compact set $\overline{\mathcal F}$. Moreover,
$\sigma_{\min}^+(\boldsymbol F)>0$ for every $\boldsymbol F\in\overline{\mathcal F}$
because the rank is constant. Therefore, by compactness, there exists
$\underline \sigma_F>0$ such that
{\small
\[
\sigma_{\min}^+(\boldsymbol F_k)\ge \underline \sigma_F,
\qquad \forall k.
\]
}%
Finally, using the standard identity
{\small
\[
\|\boldsymbol F_k^\dagger\|
=
\frac{1}{\sigma_{\min}^+(\boldsymbol F_k)},
\]
}%
we obtain
{\small
\[
\|\boldsymbol F_k^\dagger\|
\le
\frac{1}{\underline \sigma_F}
=: \bar F^\dagger,
\qquad \forall k,
\]
}%
which proves \eqref{eq:Fdagger_uniform_bound}.
\end{proof}


\begin{lemma}[Uniform exponential stability of the ADMM subsystem]
\label{lem:bl}
Under Assumptions~\ref{ass:boundedness}--\ref{ass:invertibility}, for every fixed
$\bar{\knownterm}_{\frozeniteradmm}$ and $\infomat_{\frozeniteradmm|\frozeniteradmm}$, the frozen ADMM
iteration \eqref{eq:fast_subsys} satisfies
{\small
\begin{equation}
\label{eq:qexpstab1step}
\dist{\mathcal{Q}_{\mathrm{eq},{\frozeniteradmm}}}(\boldsymbol{q}_{k,\iteradmm+1})
\le
\mu\,\dist{\mathcal{Q}_{\mathrm{eq},{\frozeniteradmm}}}(\boldsymbol{q}_{k,\iteradmm}),
\qquad
\mu
=
\sqrt{1-\frac{1-\alpha}{\alpha\sigma^2}}
\in(0,1).
\end{equation}
}%
In particular, after \(\maxiteradmm\geq 1\) ADMM iterations performed during observer step \(\frozeniteradmm\),
{\small
\begin{equation}\label{eq:qexpstab}
    \dist{\mathcal{Q}_{\mathrm{eq},{\frozeniteradmm}}}(\boldsymbol{q}_{k,\maxiteradmm})
\le
\mu^\maxiteradmm\,\dist{\mathcal{Q}_{\mathrm{eq},{\frozeniteradmm}}}(\boldsymbol{q}_{k,0})
\end{equation}
}%
\end{lemma}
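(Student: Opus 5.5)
The plan is to treat the frozen iteration \eqref{eq:fast_subsys} as a fixed-point iteration of an affine $\alpha$-averaged operator. I will then combine the averagedness inequality \eqref{eq:alpha_avg} with metric subregularity. The subregularity constant is $\sigma$, obtained from the uniform pseudoinverse bound of Lemma~\ref{lem:Fdagger_bound}.

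\emph{Step 1 (averagedness).} Write $\operator{T}_{\frozeniteradmm}=(1-\alpha)\ideop+\alpha\operator{R}_{\frozeniteradmm}$, where
\[
\operator{R}_{\frozeniteradmm}(\boldsymbol q)=-\boldsymbol P\boldsymbol q+2\rho\boldsymbol A_q\boldsymbol H_{\rho,\frozeniteradmm}^{-1}(\bar{\knownterm}_{\frozeniteradmm}+\boldsymbol A_q^\top\boldsymbol q).
\]
As in \cite{bastianello2018partition,bastianello2020asynchronous}, $\operator{R}_{\frozeniteradmm}$ is the Peaceman--Rachford operator associated with the dual of \eqref{eq:admm_global_problem}. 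It is the composition of the permutation/reflection $-\boldsymbol P$ with the reflected resolvent of the convex quadratic local costs, and is therefore nonexpansive. Hence $\operator{T}_{\frozeniteradmm}$ is $\alpha$-averaged, and $\fix(\operator{T}_{\frozeniteradmm})=\mathcal Q_{\mathrm{eq},\frozeniteradmm}$, since $\boldsymbol q-\operator{T}_{\frozeniteradmm}(\boldsymbol q)=\alpha(\boldsymbol F_{\frozeniteradmm}\boldsymbol q-\boldsymbol c_{\frozeniteradmm})$. I will invoke the cited results for nonexpansiveness rather than reprove it; this set is nonempty because the strongly convex QP has a solution satisfying the KKT conditions.

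\emph{Step 2 (metric subregularity, uniform in $\frozeniteradmm$).} From \eqref{eq:qeq} and \eqref{eq:dist_eqset},
\[
\dist{\mathcal Q_{\mathrm{eq},\frozeniteradmm}}(\boldsymbol q)=\|\boldsymbol F_{\frozeniteradmm}^\dagger(\boldsymbol F_{\frozeniteradmm}\boldsymbol q-\boldsymbol c_{\frozeniteradmm})\|\le \bar F^\dagger\|\boldsymbol F_{\frozeniteradmm}\boldsymbol q-\boldsymbol c_{\frozeniteradmm}\|=\tfrac{\bar F^\dagger}{\alpha}\|\boldsymbol q-\operator{T}_{\frozeniteradmm}\boldsymbol q\|.
\]
So $\operator{T}_{\frozeniteradmm}$ is metrically subregular with $\sigma=\bar F^\dagger/\alpha$, independent of $\frozeniteradmm$ by Lemma~\ref{lem:Fdagger_bound}. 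If needed I will enlarge $\sigma$ so that $\sigma^2\ge(1-\alpha)/\alpha$, which keeps $\mu\in(0,1)$. (This is automatic in fact, since the contraction below cannot beat $0$.)

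\emph{Step 3 (one-step contraction).} Let $\boldsymbol q^\ast=\boldsymbol q_{\mathrm{eq},k,\iteradmm}$ be the projection of $\boldsymbol q_{k,\iteradmm}$ onto $\mathcal Q_{\mathrm{eq},\frozeniteradmm}$. Apply \eqref{eq:alpha_avg} with $\boldsymbol x=\boldsymbol q_{k,\iteradmm}$ and $\boldsymbol y=\boldsymbol q^\ast$. Since $\boldsymbol y$ is a fixed point, $(\ideop-\operator{T})\boldsymbol y=0$, which gives
\[
\|\boldsymbol q_{k,\iteradmm+1}-\boldsymbol q^\ast\|^2\le \dist{}^2-\tfrac{1-\alpha}{\alpha}\|\boldsymbol q_{k,\iteradmm}-\operator{T}\boldsymbol q_{k,\iteradmm}\|^2\le\Bigl(1-\tfrac{1-\alpha}{\alpha\sigma^2}\Bigr)\dist{}^2,
\]
where $\dist{}$ denotes $\dist{\mathcal Q_{\mathrm{eq},\frozeniteradmm}}(\boldsymbol q_{k,\iteradmm})$. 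The last inequality uses Step 2. Since $\dist{\mathcal Q_{\mathrm{eq},\frozeniteradmm}}(\boldsymbol q_{k,\iteradmm+1})\le\|\boldsymbol q_{k,\iteradmm+1}-\boldsymbol q^\ast\|$, this yields \eqref{eq:qexpstab1step}. Iterating $\maxiteradmm$ times within the frozen step, where the equilibrium set does not change, gives \eqref{eq:qexpstab}.

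The main obstacle is Step 1: certifying nonexpansiveness of $\operator{R}_{\frozeniteradmm}$ directly from the matrix form requires showing $\|{-\boldsymbol P}+2\rho\boldsymbol A_q\boldsymbol H_{\rho}^{-1}\boldsymbol A_q^\top\|\le1$. I would first try to rely on the operator-splitting derivation of \cite{bastianello2018partition}. As a fallback I would verify directly that $2\rho\boldsymbol A_q\boldsymbol H_\rho^{-1}\boldsymbol A_q^\top-\boldsymbol I$ is a reflected resolvent. This uses $\boldsymbol A_q^\top\boldsymbol A_q=\boldsymbol H^0$ and $\boldsymbol H_\rho\succeq\rho\boldsymbol H^0$, so that $\boldsymbol 0\preceq\rho\boldsymbol A_q\boldsymbol H_\rho^{-1}\boldsymbol A_q^\top\preceq\boldsymbol I$. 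Its spectral norm is then at most $1$, and composing with the orthogonal $\boldsymbol P$ preserves nonexpansiveness.
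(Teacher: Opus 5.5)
Your Steps~2 and~3 coincide with the paper's proof. The paper likewise takes subregularity constant $\bar F^\dagger/\alpha$ from the pseudoinverse projection formula, evaluates the averagedness inequality \eqref{eq:alpha_avg} at the projection of $\boldsymbol q_{k,\iteradmm}$, and iterates. Your primary route for Step~1, citing the relaxed Peaceman--Rachford structure, is also what the paper does.

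What fails is the explicit operator you write down and the fallback built on it. Your $\operator{R}_{\frozeniteradmm}$ has linear part $-\boldsymbol P+2\boldsymbol M$ with $\boldsymbol M\define\rho\boldsymbol A_q\boldsymbol H_{\rho,\frozeniteradmm}^{-1}\boldsymbol A_q^\top$. This is a \emph{sum}, not the composition of $\boldsymbol P$ with a reflection. The inequality $\|-\boldsymbol P+2\boldsymbol M\|\le 1$, which you identify as the key obstacle, is false. For a single edge each agent has one neighbor, so $\boldsymbol A_q=\boldsymbol I$ and $\boldsymbol M=\rho\boldsymbol H_{\rho,\frozeniteradmm}^{-1}\succ 0$. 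For any $\boldsymbol v\neq\boldsymbol 0$ with $\boldsymbol P\boldsymbol v=-\boldsymbol v$ one gets $\boldsymbol v^\top(-\boldsymbol P+2\boldsymbol M)\boldsymbol v=\|\boldsymbol v\|^2+2\boldsymbol v^\top\boldsymbol M\boldsymbol v>\|\boldsymbol v\|^2$. Since $\boldsymbol P$ and $\boldsymbol M$ are symmetric, the resulting $\boldsymbol T_{\frozeniteradmm}$ would have an eigenvalue larger than one, so that iteration would not even converge.

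Your formula is the literal one from the stacked derivation leading to \eqref{eq:Fdef}, but it does not match \eqref{eq:aux_vars_update}. The update of $\boldsymbol q_{ij,i}$ uses $\estlocalcorrection{i,k,\iteradmm}{j}$, i.e.\ agent $j$'s copy, which sits in the $(ji,i)$ slot of $\boldsymbol A_q\hat{\boldsymbol{\xi}}_{\neigh,k,\iteradmm}$. The stacked update therefore carries $2\rho\boldsymbol P\boldsymbol A_q\hat{\boldsymbol{\xi}}_{\neigh,k,\iteradmm}$, i.e.
\[
\operator{R}_{\frozeniteradmm}(\boldsymbol q)=\boldsymbol P\bigl(2\rho\boldsymbol A_q\boldsymbol H_{\rho,\frozeniteradmm}^{-1}(\bar{\knownterm}_{\frozeniteradmm}+\boldsymbol A_q^\top\boldsymbol q)-\boldsymbol q\bigr).
\]
This is consistent with the form $\boldsymbol F_k=\boldsymbol I+\boldsymbol P-2\rho\boldsymbol P\boldsymbol A_q\boldsymbol H_{\rho,k}^{-1}\boldsymbol A_q^\top$ used in Lemma~\ref{lem:Fk_nullspace}, with $\boldsymbol c_k$ also acquiring a factor $\boldsymbol P$.

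With this correction your fallback goes through, and it is more self-contained than the paper, which only cites \cite{ryu2016primer} for averagedness. The identities $\boldsymbol A_q^\top\boldsymbol A_q=\blkdiag(\boldsymbol H_i^0)_{i\in\vertices}$ and $\boldsymbol H_{\rho,\frozeniteradmm}\succeq\rho\blkdiag(\boldsymbol H_i^0)_{i\in\vertices}$ give $\boldsymbol 0\preceq\boldsymbol M\preceq\boldsymbol I$. Hence $\|\boldsymbol P(2\boldsymbol M-\boldsymbol I)\|\le 1$, and $\operator{T}_{\frozeniteradmm}$ is $\alpha$-averaged by direct computation. Nothing downstream changes, since Steps~2 and~3 only use $\boldsymbol q-\operator{T}_{\frozeniteradmm}(\boldsymbol q)=\alpha(\boldsymbol F_{\frozeniteradmm}\boldsymbol q-\boldsymbol c_{\frozeniteradmm})$ and the bound on $\|\boldsymbol F_{\frozeniteradmm}^\dagger\|$. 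Your remarks on nonemptiness of $\mathcal Q_{\mathrm{eq},\frozeniteradmm}$ and on $\sigma^2\ge(1-\alpha)/\alpha$ make explicit points the paper leaves implicit.
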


\begin{proof}
The following argument is a specialization of \cite[Thm.~3]{bastianello2024robust}
to the present deterministic affine setting.

The operator $\operator{T}_{\frozeniteradmm}$ in \eqref{eq:fast_subsys} corresponds to the
relaxed Peaceman--Rachford splitting applied to the dual of
\eqref{eq:admm_global_problem} \cite{bastianello2020asynchronous}. Since the local
costs $J_i(\extloccorr{i})$ are convex, $\operatorname{T}_{\frozeniteradmm}$ is
$\alpha$-averaged \cite[Sec.~7.3]{ryu2016primer}. Moreover, by construction,
{\small$
\fix(\operatorname{T}_{\frozeniteradmm})=\mathcal{Q}_{\mathrm{eq},\frozeniteradmm}.
$}

We next show that $\operatorname{T}_{\frozeniteradmm}$ is metrically subregular.
Using \eqref{eq:qeq} and \eqref{eq:dist_eqset}, we obtain
{\small
\[
\dist{\mathcal Q_{\mathrm{eq},\frozeniteradmm}}(\boldsymbol{q}_{k,\iteradmm})
=
\|\boldsymbol{q}_{k,\iteradmm}-\boldsymbol{q}_{\mathrm{eq},\frozeniteradmm,\iteradmm}\|
=
\|\boldsymbol F_{\frozeniteradmm}^\dagger
(\boldsymbol F_{\frozeniteradmm}\boldsymbol{q}_{k,\iteradmm}-\boldsymbol c_{\frozeniteradmm})\|.
\]}%
Therefore, using the submultiplicativity of the norm gives
{\small
\[
\dist{\mathcal Q_{\mathrm{eq},\frozeniteradmm}}(\boldsymbol{q}_{k,\iteradmm})
\le
\|\boldsymbol F_{\frozeniteradmm}^\dagger\|
\,
\|\boldsymbol F_{\frozeniteradmm}\boldsymbol{q}_{k,\iteradmm}-\boldsymbol c_{\frozeniteradmm}\|.
\]
}%
Now, from the definition of $\operatorname{T}_{\frozeniteradmm}$ in \eqref{eq:fast_subsys} and the one of $\boldsymbol{T}_{\frozeniteradmm}$ in \eqref{eq:vect_qdyn}, {\small$\boldsymbol F_{\frozeniteradmm}\boldsymbol{q}_{k,\iteradmm} - \boldsymbol c_{\frozeniteradmm} = -\frac{1}{\alpha} (\operatorname{T}_{\frozeniteradmm}(\boldsymbol{q}_{k,\iteradmm})-\boldsymbol{q}_{k,\iteradmm})$}, and hence
{\small
\[
\dist{\mathcal Q_{\mathrm{eq},\frozeniteradmm}}(\boldsymbol{q}_{k,\iteradmm})
\le
\frac{\|\boldsymbol F_{\frozeniteradmm}^\dagger\|}{\alpha}
\|\boldsymbol{q}_{k,\iteradmm}-\operator{T}_{\frozeniteradmm}(\boldsymbol{q}_{k,\iteradmm})\|.
\]
}%
This shows that $\operatorname{T}_{\frozeniteradmm}$ is metrically subregular for some
{\small
$\sigma_{\frozeniteradmm} \ge \frac{\|\boldsymbol F_{\frozeniteradmm}^\dagger\|}{\alpha}$.
}
By Lemma~\ref{lem:Fdagger_bound}, the pseudoinverse of $\boldsymbol F_{\frozeniteradmm}$
is uniformly bounded. Hence a common subregularity constant can be chosen as
{\small
\[
\sigma
\ge
\frac{\bar F^\dagger}{\alpha},
\qquad
\bar F^\dagger
=
\sup_{\frozeniteradmm}\|\boldsymbol F_{\frozeniteradmm}^\dagger\|
<
\infty.
\]
}%
Therefore, the metric subregularity estimate holds uniformly with respect to the
frozen observer index~$\frozeniteradmm$.

Since $\operatorname{T}_{\frozeniteradmm}$ is $\alpha$-averaged, by \eqref{eq:alpha_avg},
for every $\boldsymbol{q}_{\ast,\frozeniteradmm}\in\mathcal Q_{\mathrm{eq},\frozeniteradmm}$, 
one has
{\small
\[
\|\operatorname{T}_{\frozeniteradmm}(\boldsymbol{q}_{k,\iteradmm})-\boldsymbol{q}_{\ast,\frozeniteradmm}\|^2
\le
\|\boldsymbol{q}_{k,\iteradmm}-\boldsymbol{q}_{\ast,\frozeniteradmm}\|^2
-
\frac{1-\alpha}{\alpha}
\|\operatorname{T}_{\frozeniteradmm}(\boldsymbol{q}_{k,\iteradmm})-\boldsymbol{q}_{k,\iteradmm}\|^2.
\]}%
Choosing $\boldsymbol q_{\ast,\frozeniteradmm}=\boldsymbol q_{\mathrm{eq},\frozeniteradmm}$ and
using the definition of the distance to the equilibrium set gives
{\small
\[
\dist{\mathcal Q_{\mathrm{eq},\frozeniteradmm}}(\boldsymbol q_{k,\iteradmm+1})^2
\le
\dist{\mathcal Q_{\mathrm{eq},\frozeniteradmm}}(\boldsymbol{q}_{k,\iteradmm})^2
-
\frac{1-\alpha}{\alpha}
\|\operatorname{T}_{\frozeniteradmm}(\boldsymbol{q}_{k,\iteradmm})-\boldsymbol{q}_{k,\iteradmm}\|^2.
\]
}%
Using metric subregularity,
{\small
$\|\operatorname{T}_{\frozeniteradmm}(\boldsymbol{q}_{k,\iteradmm})-\boldsymbol{q}_{k,\iteradmm}\|
\ge
\frac{1}{\sigma}\,
\dist{\mathcal Q_{\mathrm{eq},\frozeniteradmm}}(\boldsymbol{q}_{k,\iteradmm})$}.
Substituting into the previous inequality yields
{\small
\[
\dist{\mathcal Q_{\mathrm{eq},\frozeniteradmm}}(\boldsymbol{q}_{k,\iteradmm+1})^2
\le
\left(
1-\frac{1-\alpha}{\alpha\sigma^2}
\right)
\dist{\mathcal Q_{\mathrm{eq},\frozeniteradmm}}(\boldsymbol{q}_{k,\iteradmm})^2.
\]}%
where notice that, since $\alpha\in(0,1)$ and $\sigma$ can be chosen large enough, the term in parentheses takes values in $(0,1)$ for $\sigma$ sufficiently large. 
Taking square roots gives \eqref{eq:qexpstab1step}.

Finally, \eqref{eq:qexpstab} follows by iterating \eqref{eq:qexpstab1step} \(\maxiteradmm\) times.

\end{proof}

\subsection{Kalman-like subsystem}

We now consider the reduced system obtained for exact correction,
i.e., $\ecorr_k\equiv 0$ and expanding $\boldsymbol{\Phi}_k$:
{\small
\begin{equation}
\label{eq:reduced_system}
\estate_{k+1}=\statemat_k\infomat_{k|k}^{-1}
\infomat_{k|k-1}\estate_k.
\end{equation}
}%

\begin{lemma}[UGES of the Kalman-like subsystem]
\label{lem:reduced_uges}
Under Assumptions~\ref{ass:boundedness}, \ref{ass:compobs}, and
\ref{ass:invertibility}, the reduced system \eqref{eq:reduced_system} is
uniformly globally exponentially stable. In particular, the Lyapunov function
\[
\underline{s}_\smallgain\|\estate_k\|^2 \le V_k \define \estate_k^\top \infomat_{k|k-1}\estate_k \le \bar{s}_\smallgain\|\estate_k\|^2
\]
satisfies
\[
V_{k+1} - V_k \leq -(1-\ffactor )\underline{s}_\smallgain\norm{\estate_k}^2 .
\]
\end{lemma}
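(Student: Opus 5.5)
The plan is a direct one-step Lyapunov computation that exploits the algebraic structure of the forgetting-factor prediction, followed by a sandwich bound on $V_k$ taken from Lemma~\ref{lemma:conditioning}. Using \eqref{eq:forgetting_prediction_clean} written at step $k+1$, namely $\infomat_{k+1|k}=\ffactor\statemat_k^{-\top}\infomat_{k|k}\statemat_k^{-1}$, I will substitute the reduced dynamics \eqref{eq:reduced_system} into $V_{k+1}=\estate_{k+1}^\top\infomat_{k+1|k}\estate_{k+1}$. The factors $\statemat_k$ and $\statemat_k^{-1}$ cancel, which is well defined by Assumption~\ref{ass:invertibility}. One copy of $\infomat_{k|k}$ also cancels against $\infomat_{k|k}^{-1}$, leaving
\[
V_{k+1}=\ffactor\,\estate_k^\top\infomat_{k|k-1}\infomat_{k|k}^{-1}\infomat_{k|k-1}\estate_k .
\]

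The key matrix inequality comes from the additive update \eqref{eq:info_additive_S}. It gives $\infomat_{k|k}=\infomat_{k|k-1}+\smallgain\measmat_k^\top\outcov_k\measmat_k\succeq\infomat_{k|k-1}\succ 0$. By operator monotonicity of inversion on positive definite matrices, this yields $\infomat_{k|k}^{-1}\preceq\infomat_{k|k-1}^{-1}$. A congruence with the symmetric matrix $\infomat_{k|k-1}$ then gives $\infomat_{k|k-1}\infomat_{k|k}^{-1}\infomat_{k|k-1}\preceq\infomat_{k|k-1}$. Hence $V_{k+1}\le\ffactor V_k$, i.e. $V_{k+1}-V_k\le-(1-\ffactor)V_k$. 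Notably, this step needs neither observability nor boundedness, only the structure of the observer.

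To conclude, I need $\underline{s}_\smallgain\|\estate_k\|^2\le V_k\le\bar{s}_\smallgain\|\estate_k\|^2$, that is, uniform bounds on the \emph{prior} information matrix $\infomat_{k|k-1}$. Lemma~\ref{lemma:conditioning} bounds only the posterior $\infomat_{k|k}$, so I transfer its bounds through the prediction step. Using $\|\statemat_k\|\le\bar a$ from Assumption~\ref{ass:boundedness} and $\|\statemat_k^{-1}\|\le\bar a_{\mathrm{inv}}$ from Assumption~\ref{ass:invertibility}, I get
\[
\ffactor\underline{s}_\smallgain\bar a^{-2}\eye{}\preceq\infomat_{k+1|k}\preceq\ffactor\bar a_{\mathrm{inv}}^2\bar{s}_\smallgain\eye{}.
\]
Both bounds remain proportional to $\smallgain$. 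Alternatively, the prior bounds can be taken directly from \cite[Lemma~1]{cticlea2013exponential}, which covers both prior and posterior. Either way, the constants $\underline{s},\bar{s}$ are redefined to cover both matrices, and I will state this relabeling explicitly.

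Combining the two steps, $V_{k+1}-V_k\le-(1-\ffactor)V_k\le-(1-\ffactor)\underline{s}_\smallgain\|\estate_k\|^2$. Together with the quadratic sandwich, $V_k\le\ffactor^{k-k_0}V_{k_0}$, which gives $\|\estate_k\|\le\sqrt{\bar{s}_\smallgain/\underline{s}_\smallgain}\,\ffactor^{(k-k_0)/2}\|\estate_{k_0}\|$. This is UGES with rate $\sqrt{\ffactor}$, and the rate is independent of $\smallgain$ because the ratio $\bar{s}_\smallgain/\underline{s}_\smallgain=\bar{s}/\underline{s}$ does not depend on $\smallgain$.

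The main obstacle is the bookkeeping of the prior-versus-posterior constants, not the decrease estimate, which is exact. In particular, Assumption~\ref{ass:compobs} enters only through Lemma~\ref{lemma:conditioning}: it supplies the uniform lower bound that converts the decrease $-(1-\ffactor)V_k$ into decay of $\|\estate_k\|$.
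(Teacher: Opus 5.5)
Your proposal is correct and follows essentially the same route as the paper. You substitute the reduced dynamics, use $\statemat_k^\top\infomat_{k+1|k}\statemat_k=\ffactor\infomat_{k|k}$, bound $\infomat_{k|k-1}\infomat_{k|k}^{-1}\infomat_{k|k-1}\preceq\infomat_{k|k-1}$ via $\infomat_{k|k}\succeq\infomat_{k|k-1}$, and conclude with the lower bound from Lemma~\ref{lemma:conditioning}.

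You also transfer the posterior bounds to the prior matrix $\infomat_{k|k-1}$ through the prediction step, at the cost of relabeling $\underline{s},\bar{s}$. This is a useful addition, because it fills a step the paper's proof glosses over: Lemma~\ref{lemma:conditioning} as stated only bounds $\infomat_{k|k}$.
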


\begin{proof}
Using \eqref{eq:reduced_system},
{\small
\begin{equation}\label{eq:Vk+1}
\begin{aligned}
V_{k+1}
&=
\estate_{k+1}^\top \infomat_{k+1|k}\estate_{k+1}
\\
&=
\left(
\statemat_k\infomat_{k|k}^{-1}\infomat_{k|k-1}\estate_k
\right)^\top
\infomat_{k+1|k}
\left(
\statemat_k\infomat_{k|k}^{-1}\infomat_{k|k-1}\estate_k
\right)
\\
&=
\estate_k^\top
\infomat_{k|k-1}\infomat_{k|k}^{-1}
\statemat_k^\top
\infomat_{k+1|k}
\statemat_k
\infomat_{k|k}^{-1}
\infomat_{k|k-1}
\estate_k.
\end{aligned}
\end{equation}
}%
Using the modified prediction \eqref{eq:forgetting_prediction_clean}, {\small$\statemat_k^\top \infomat_{k+1|k}\statemat_k = \ffactor \infomat_{k|k}$.}
Then substituting back in \eqref{eq:Vk+1}, we obtain 
{\small
\begin{equation}
    V_{k+1} \leq \ffactor  \estate_k^\top \infomat_{k|k-1}\infomat_{k|k}^{-1}
\infomat_{k|k-1} \estate_k
\end{equation}
}%
Since the correction step adds positive semidefinite information, one has {\small$\infomat_{k|k}\succeq \infomat_{k|k-1}$}. Because matrix inversion reverses the Loewner order for positive definite matrices,
{\small
$
\infomat_{k|k}^{-1}
\preceq
\infomat_{k|k-1}^{-1}.
$
}
Premultiplying and postmultiplying by $\infomat_{k|k-1}$ gives
{\small
\[
\infomat_{k|k-1}\infomat_{k|k}^{-1}\infomat_{k|k-1}
\preceq
\infomat_{k|k-1}\infomat_{k|k-1}^{-1}\infomat_{k|k-1}
=
\infomat_{k|k-1}.
\]
}%
Hence,
{\small
\[
V_{k+1}
\le
\ffactor\,
\estate_k^\top \infomat_{k|k-1}\estate_k
=
\ffactor V_k.
\]
}%
It follows that
{\small
\[
V_{k+1}-V_k
\le
-(1-\ffactor)V_k.
\]
}%
Finally, by Lemma~\ref{lemma:conditioning},
{\small
\[
V_k\ge \underline{s}_\smallgain \|\estate_k\|^2,
\]
}%
and therefore
{\small
\[
V_{k+1}-V_k
\le
-(1-\ffactor)\underline{s}_\smallgain \|\estate_k\|^2,
\]
}%
which proves the claim.

\end{proof}

In Appendix~\ref{app:matrixffuges}, we extend the above UGES result to the case of a matrix forgetting factor. The associated sufficient conditions are, however, more difficult to verify a priori and generally conservative.

\begin{remark}[Practical tuning of Matrix Forgetting Factor]
The sufficient conditions in Lemma~\ref{lem:reduced_uges_matrix} (in Appendix~\ref{app:matrixffuges}) are primarily of theoretical interest, as they depend on the conditioning of the time-varying information matrix and may be quite conservative in practice. Accordingly, a matrix forgetting factor is more naturally chosen by tuning, e.g., by assigning faster decay rates to state components expected to be more strongly affected by noise and then validating the resulting behavior numerically.
\end{remark}

\subsection{Stability of the feedback interconnection}
\label{subsec:interconnection}

We now analyze the feedback interconnection between the estimation-error dynamics \eqref{eq:new_xtildedyn} and the ADMM correction dynamics \eqref{eq:vect_qdyn} when
\(\maxiteradmm\geq 1\) ADMM iterations are performed at each observer step, as described in \eqref{eq:coupled}. The proof proceeds in four steps.
First, we derive an ISS-type estimate for the reduced dynamics with respect to the correction mismatch.
Second, we bound the drift of the time-varying equilibrium set of the ADMM subsystem.
Third, we combine this drift estimate with the \(\maxiteradmm\)-step contraction induced by Lemma~\ref{lem:bl} to obtain a perturbed estimate for the ADMM subsystem.
Fourth, we combine the perturbed estimates of the reduced and ADMM dynamics into a comparison system for the full interconnection, from which exponential stability follows by a small-gain argument.

\begin{theorem}[UGES of the interconnected system]
\label{thm:main}
Under Assumptions~\ref{ass:boundedness}, \ref{ass:compobs}, and
\ref{ass:invertibility}, suppose that
{\small
\begin{equation}
\label{eq:smallgain_bound}
\smallgain
<
\frac{(1-\sqrt{\ffactor})(1-\mu^{\maxiteradmm})}
{c_{12}c_{21}+(1-\sqrt{\ffactor})c_{22}} .
\end{equation}
}
Then the set
{\small
\[
\{(\estate_k,\boldsymbol{q}_{k,0})\mid
\estate_k=\zeros{},\ 
\boldsymbol{q}_{k,0}\in\mathcal Q_{\mathrm{eq},k}\}
\]
}%
of the coupled system is uniformly globally exponentially stable.
In particular, there exist constants \(C>0\) and \(\lambda\in(0,1)\) such that
{\small
\[
\|\estate_k\|
+
\dist{\mathcal Q_{\mathrm{eq},k}}(\boldsymbol{q}_{k,0})
\le
C\lambda^k
\bigl(
\|\estate_0\|
+
\dist{\mathcal Q_{\mathrm{eq},0}}(\boldsymbol{q}_{0,0})
\bigr).
\]
}%
\end{theorem}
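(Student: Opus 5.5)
We follow the four-step route announced before the statement and close with a $2\times 2$ comparison system in the variables $a_k\define\sqrt{V_k}$, with $V_k=\estate_k^\top\infomat_{k|k-1}\estate_k$ as in Lemma~\ref{lem:reduced_uges}, and $b_k\define\dist{\mathcal Q_{\mathrm{eq},k}}(\boldsymbol q_{k,0})$.

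\emph{Step 1: ISS estimate for the estimation error.} We use the weighted norm $\|\cdot\|_{\infomat_{k+1|k}}$ and the triangle inequality on \eqref{eq:err_dyn_compact}. This gives
$a_{k+1}\le \|\boldsymbol\Phi_k\estate_k\|_{\infomat_{k+1|k}}+\smallgain\|\statemat_k\ecorr_k\|_{\infomat_{k+1|k}}$.
The proof of Lemma~\ref{lem:reduced_uges} bounds the first term by $\sqrt{\ffactor}\,a_k$. For the second term we note that $\statemat_k^\top\infomat_{k+1|k}\statemat_k=\ffactor\infomat_{k|k}\preceq\ffactor\bar s_\smallgain\eye{}$. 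By \eqref{eq:output_fast}, $\|\ecorr_k\|\le\|\boldsymbol\Sigma\|\|\boldsymbol A_q\|/(\rho\underline h_0)\,\|\tilde{\boldsymbol q}_{k,\maxiteradmm}\|$, using \eqref{eq:Hrho_inv_bound_global}. Lemma~\ref{lem:bl} then gives $\|\tilde{\boldsymbol q}_{k,\maxiteradmm}\|\le\mu^{\maxiteradmm}b_k$. Hence
\[
a_{k+1}\le\sqrt{\ffactor}\,a_k+\smallgain c_{12}\,b_k ,
\]
where $c_{12}$ absorbs $\sqrt{\ffactor\bar s_\smallgain}$ and the constants above. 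One subtlety is that $\bar s_\smallgain=\smallgain\bar s$. We therefore keep track of the powers of $\smallgain$, or equivalently rescale $a_k$ by $1/\sqrt{\smallgain}$, so that $c_{12}$ is $\smallgain$-independent.

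\emph{Step 2: drift of the equilibrium set.} We need $b_{k+1}\le \dist{\mathcal Q_{\mathrm{eq},k}}(\boldsymbol q_{k,\maxiteradmm})+\|\boldsymbol q_{\mathrm{eq}}\text{-shift}\|$. Since $\boldsymbol q_{k+1,0}=\boldsymbol q_{k,\maxiteradmm}$ and, by Lemma~\ref{lem:Fk_nullspace}, all sets $\mathcal Q_{\mathrm{eq},k}$ are translates of the same subspace, we can write
\[
\dist{\mathcal Q_{\mathrm{eq},k+1}}(\boldsymbol q)\le \dist{\mathcal Q_{\mathrm{eq},k}}(\boldsymbol q)+\|\boldsymbol F_{k+1}^\dagger\boldsymbol c_{k+1}-\boldsymbol F_k^\dagger\boldsymbol c_k\|
\]
after choosing canonical equilibria $\boldsymbol F_k^\dagger\boldsymbol c_k$. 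This drift is the key obstacle. We would split it into $\boldsymbol F_{k+1}^\dagger(\boldsymbol c_{k+1}-\boldsymbol c_k)$ and $(\boldsymbol F_{k+1}^\dagger-\boldsymbol F_k^\dagger)\boldsymbol c_k$.

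For the first piece, $\bar{\knownterm}_k$ is linear in the innovations, so $\|\bar{\knownterm}_k\|\lesssim\smallgain\|\estate_k\|$ once we note that $\outcov_k$ enters scaled by $\smallgain$ in the pairwise and local terms. Then $\|\boldsymbol c_k\|$ and $\|\boldsymbol c_{k+1}\|$ are both $O(\|\estate_k\|+\|\estate_{k+1}\|)$. Substituting Step 1 for $\estate_{k+1}$ yields a term $O(\smallgain)(a_k+b_k)$.

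For the second piece, we use the constant-rank perturbation bound $\|\boldsymbol F_{k+1}^\dagger-\boldsymbol F_k^\dagger\|\le c(\bar F^\dagger)^2\|\boldsymbol F_{k+1}-\boldsymbol F_k\|$, which is valid by Lemma~\ref{lem:Fk_nullspace} together with Lemma~\ref{lem:Fdagger_bound}. We then bound the increment $\|\boldsymbol F_{k+1}-\boldsymbol F_k\|$ through $\boldsymbol H^1_{\cdot}=\smallgain\bar{\boldsymbol H}^1$, whose variation is bounded via Assumption~\ref{ass:boundedness} and the bounds $\bar a_\Delta,\bar h_\Delta,\bar r_\Delta$. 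Multiplied by $\|\boldsymbol c_k\|$, this again gives an $O(\smallgain)\,a_k$ term.

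If the $\smallgain$-scaling of the innovations fails to materialize in some term, I would instead exploit that the equilibrium primal is exactly $\correction_k$, whose size is controlled by $\smallgain^{-1}\bar{\infomat}^{-1}$ times the innovation. That route is the main place I expect to need care with the $\smallgain$-bookkeeping.

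\emph{Step 3: comparison system and small gain.} Combining Step 2 with the contraction $\mu^{\maxiteradmm}$ from Lemma~\ref{lem:bl} gives
\[
b_{k+1}\le(\mu^{\maxiteradmm}+\smallgain c_{22})b_k+\smallgain c_{21}a_k .
\]
Together with Step 1 we obtain $\col(a_{k+1},b_{k+1})\le \boldsymbol M\col(a_k,b_k)$ componentwise, where $\boldsymbol M$ is a nonnegative matrix; after rescaling, the $\smallgain$ entering $c_{12}$ is what appears in \eqref{eq:smallgain_bound}. A nonnegative $2\times2$ matrix is Schur if and only if its diagonal entries are below one and $(1-m_{11})(1-m_{22})>m_{12}m_{21}$. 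Here this reads
\[
(1-\sqrt{\ffactor})(1-\mu^{\maxiteradmm}-\smallgain c_{22})>\smallgain c_{12}c_{21},
\]
which is exactly \eqref{eq:smallgain_bound}. Iterating the comparison inequality gives $\|(a_k,b_k)\|\le C'\lambda^k\|(a_0,b_0)\|$ with $\lambda=\rho(\boldsymbol M)<1$. Finally, the bounds $\underline s_\smallgain\|\estate\|^2\le V\le\bar s_\smallgain\|\estate\|^2$ convert this into the stated estimate. Uniformity follows because all constants are independent of $k$.
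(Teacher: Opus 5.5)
Your architecture matches the paper's. You use the same variables $w_k=\sqrt{V_k}$ and $d_k=\dist{\mathcal Q_{\mathrm{eq},k}}(\boldsymbol q_{k,0})$. You control the drift of $\mathcal Q_{\mathrm{eq},k}$ through the $k$-independent kernel of Lemma~\ref{lem:Fk_nullspace}, use the warm-start contraction $\mu^{\maxiteradmm}$, and close with a $2\times 2$ nonnegative comparison matrix tested via $\det(\eye{}-\boldsymbol{\mathcal A})>0$.

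Step~1 is fine. The identity $\statemat_k^\top\infomat_{k+1|k}\statemat_k=\ffactor\infomat_{k|k}$ even gives a constant no larger than the paper's $\sqrt{\bar s_\smallgain}\,\bar a$. In Step~2, bounding the drift by $\|\boldsymbol F_{k+1}^\dagger\boldsymbol c_{k+1}-\boldsymbol F_k^\dagger\boldsymbol c_k\|$ with a constant-rank pseudoinverse perturbation bound is a valid substitute for the paper's estimate of $\|\boldsymbol F_{k+1}^\dagger(\boldsymbol F_{k+1}\boldsymbol q_{\mathrm{ref},k}-\boldsymbol c_{k+1})\|$.

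What fails is the $\smallgain$-bookkeeping in Step~2. The claim $\|\bar{\knownterm}_k\|\lesssim\smallgain\|\estate_k\|$ is false. The gain $\smallgain$ multiplies only the information increments, which is why $\boldsymbol H^1_{i,k}=\smallgain\bar{\boldsymbol H}^1_{i,k}$. The innovation terms in \eqref{eq:b_def_clean}, \eqref{eq:loc_info_update} and \eqref{eq:rel_info_update} carry no $\smallgain$; this is exactly why $\correction_k=\smallgain^{-1}\bar{\infomat}_{k|k}^{-1}\knownterm_k$.

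Since $\|\boldsymbol H_{\rho,k}^{-1}\|\le 1/(\rho\underline{h}_0)$ uniformly in $\smallgain$, $\|\boldsymbol c_k\|$ is of order $\|\estate_k\|$. Moreover, $\estate_{k+1}-\estate_k$ contains the $\smallgain$-free term $(\boldsymbol\Phi_k-\eye{})\estate_k$. Hence $\boldsymbol F_{k+1}^\dagger(\boldsymbol c_{k+1}-\boldsymbol c_k)$ couples $d_{k+1}$ to $\|\estate_k\|$ with no factor $\smallgain$ in front. The correct second row is $d_{k+1}\le c_{21}w_k+(\mu^{\maxiteradmm}+\smallgain c_{22})d_k$, as in Lemmas~\ref{lem:set_drift} and~\ref{lem:fast_iss}.

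This also makes your Step~3 internally inconsistent. With off-diagonal entries $\smallgain c_{12}$ and $\smallgain c_{21}$, the determinant test would involve $\smallgain^2c_{12}c_{21}$. The inequality you actually write, with $\smallgain c_{12}c_{21}$, is the one produced by the correct matrix
\[
\boldsymbol{\mathcal A}=
\begin{bmatrix}
\sqrt{\ffactor} & \smallgain c_{12}\\
c_{21} & \mu^{\maxiteradmm}+\smallgain c_{22}
\end{bmatrix},
\]
which is exactly the paper's. Your fallback via $\correction_k=\smallgain^{-1}\bar{\infomat}_{k|k}^{-1}\knownterm_k$ goes the wrong way, since it introduces $\smallgain^{-1}$ rather than $\smallgain$.

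The repair is simply to drop the claimed $\smallgain$-scaling. The rescaling of $a_k$ by $1/\sqrt{\smallgain}$ is a harmless diagonal similarity that leaves the determinant test unchanged. No $\smallgain$-independence of the constants is needed, because \eqref{eq:smallgain_bound} is a hypothesis on the constants evaluated at the given $\smallgain$.

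Such independence is not available along this route anyway. $\bar F^\dagger$ depends on $\smallgain$ through $\boldsymbol H_{\rho,i,k}=\rho\boldsymbol H_i^0+\smallgain\bar{\boldsymbol H}_{i,k}^1$, and it blows up as $\smallgain\to0$. At $\smallgain=0$, every ``consensus'' dual vector with $\boldsymbol q_{ij,i}=\rho\correction_i$ and $\boldsymbol q_{ij,j}=\rho\correction_j$ lies in the kernel of $\boldsymbol F_k$ but not in $\ker(\boldsymbol A_q^\top)$. With the Step~2 correction, your argument coincides with the paper's proof.
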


We next establish the auxiliary estimates needed in the proof of
Theorem~\ref{thm:main}.

\begin{lemma}[Perturbed estimate for the reduced dynamics]
\label{lem:slow_iss}
Define
{\small
\[
w_k \define \sqrt{V_k}
=
\sqrt{\estate_k^\top \infomat_{k|k-1}\estate_k}.
\]
}%
Then there exists a constant \(c_w>0\) such that
{\small
\begin{equation}
\label{eq:w_iss_final_rewritten}
w_{k+1}
\le
\sqrt{\ffactor}\,w_k
+
\smallgain c_w\,
\dist{\mathcal Q_{\mathrm{eq},k}}(\boldsymbol{q}_{k,\maxiteradmm}),
\qquad \forall k,
\end{equation}
}%
where
{\small
\[
c_w \define \sqrt{\bar{s}_\smallgain}\,\bar a\,c_h.
\]
}%
\end{lemma}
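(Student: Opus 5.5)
The plan is to treat $w_k$ as the norm $\|\estate_k\|_{\infomat_{k|k-1}}$ induced by the prior information matrix and apply the triangle inequality to the perturbed error dynamics \eqref{eq:err_dyn_compact}. Since $\infomat_{k+1|k}\succ 0$, the map $\boldsymbol{v}\mapsto\|\boldsymbol{v}\|_{\infomat_{k+1|k}}$ is a norm. Applying it to $\estate_{k+1}=\boldsymbol{\Phi}_k\estate_k+\smallgain\statemat_k\ecorr_k$ gives
\[
w_{k+1}\le \|\boldsymbol{\Phi}_k\estate_k\|_{\infomat_{k+1|k}}+\smallgain\|\statemat_k\ecorr_k\|_{\infomat_{k+1|k}} .
\]
The two terms are bounded separately.

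For the first term, I reuse the computation in the proof of Lemma~\ref{lem:reduced_uges}. Because $\boldsymbol{\Phi}_k\estate_k$ is exactly the right-hand side of the reduced system \eqref{eq:reduced_system}, that proof yields $\|\boldsymbol{\Phi}_k\estate_k\|^2_{\infomat_{k+1|k}}\le\ffactor\,\estate_k^\top\infomat_{k|k-1}\estate_k=\ffactor w_k^2$. It uses the identity $\statemat_k^\top\infomat_{k+1|k}\statemat_k=\ffactor\infomat_{k|k}$ together with $\infomat_{k|k}\succeq\infomat_{k|k-1}$. Taking square roots gives the contraction term $\sqrt{\ffactor}\,w_k$.

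For the second term, I would use $\infomat_{k+1|k}\preceq\bar{s}_\smallgain\eye{}$ and $\|\statemat_k\|\le\bar a$ to get $\smallgain\|\statemat_k\ecorr_k\|_{\infomat_{k+1|k}}\le\smallgain\sqrt{\bar s_\smallgain}\,\bar a\,\|\ecorr_k\|$. The bound $\infomat_{k+1|k}\preceq\bar{s}_\smallgain\eye{}$ holds for the prior information matrix as well, by Lemma~\ref{lemma:conditioning} together with the forgetting prediction and the bounds on $\statemat_k^{-1}$. Possibly a different constant appears here, but I will absorb it into the same $\bar s_\smallgain$ as the statement does for $V_k$. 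Next I use the output map \eqref{eq:output_fast}, $\ecorr_k=\boldsymbol{\Sigma}\boldsymbol{H}_{\rho,k}^{-1}\boldsymbol{A}_q^\top\tilde{\boldsymbol q}_{k,\maxiteradmm}$, which gives
\[
\|\ecorr_k\|\le\|\boldsymbol\Sigma\|\,\|\boldsymbol H_{\rho,k}^{-1}\|\,\|\boldsymbol A_q\|\,\|\tilde{\boldsymbol q}_{k,\maxiteradmm}\|\le \frac{\|\boldsymbol\Sigma\|\,\|\boldsymbol A_q\|}{\rho\underline h_0}\,\dist{\mathcal Q_{\mathrm{eq},k}}(\boldsymbol q_{k,\maxiteradmm}).
\]
The second inequality uses \eqref{eq:Hrho_inv_bound_global} from the proof of Lemma~\ref{lem:Fdagger_bound} and the identity \eqref{eq:dist_eqset}. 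I therefore set $c_h\define\|\boldsymbol\Sigma\|\,\|\boldsymbol A_q\|/(\rho\underline h_0)$, which is independent of $k$ and of $\smallgain$, and the claimed bound follows with $c_w=\sqrt{\bar s_\smallgain}\,\bar a\,c_h$.

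The main point needing care is the uniformity of the output gain $c_h$. It must hold for all $k$ and must not depend on $\smallgain$; otherwise the small-gain condition \eqref{eq:smallgain_bound} would be circular. This is why I bound $\boldsymbol H_{\rho,k}^{-1}$ through $\rho\boldsymbol H^0_i$ alone, discarding the positive semidefinite $\smallgain$-dependent part, rather than through $\infomat_{k|k}$. A secondary check is that the distance in \eqref{eq:output_fast} is measured to $\mathcal Q_{\mathrm{eq},k}$ at the same frozen index $k$ at which the ADMM iterates were run. This holds by construction of \eqref{eq:coupled}.
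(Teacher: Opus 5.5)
Your proposal is correct and follows the paper's proof essentially step for step. It uses the triangle inequality in the $\infomat_{k+1|k}$-weighted norm, the contraction $\sqrt{\ffactor}\,w_k$ inherited from Lemma~\ref{lem:reduced_uges}, and the bound $\|\ecorr_k\|\le c_h\,\dist{\mathcal Q_{\mathrm{eq},k}}(\boldsymbol q_{k,\maxiteradmm})$ from \eqref{eq:output_fast}; your explicit $c_h=\|\boldsymbol\Sigma\|\,\|\boldsymbol A_q\|/(\rho\underline h_0)$ just makes concrete the paper's appeal to uniform boundedness of $\boldsymbol H_{\rho,k}^{-1}$. Your hedge about the upper bound on the prior information matrix is unnecessary: $\infomat_{k+1|k}=\ffactor\statemat_k^{-\top}\infomat_{k|k}\statemat_k^{-1}\preceq\ffactor\bar a_{\mathrm{inv}}^2\bar s_\smallgain\eye{}\preceq\bar s_\smallgain\eye{}$ under the choice $\ffactor\le 1/\bar a_{\mathrm{inv}}^2$ of Lemma~\ref{lemma:conditioning}, so no constant needs absorbing and $c_w$ holds exactly as stated.
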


\begin{proof}
By Lemma~\ref{lem:reduced_uges}, the unperturbed reduced system satisfies
{\small
\[
w_{k+1}\le \sqrt{\ffactor}\,w_k.
\]
}%
For the full interconnected system, \eqref{eq:err_dyn_compact} and the
triangle inequality yield
{\small
\[
w_{k+1}
\le
\|\statemat_k\infomat_{k|k}^{-1}\infomat_{k|k-1}\estate_k\|_{\infomat_{k+1|k}}
+
\smallgain\|\statemat_k\ecorr_k\|_{\infomat_{k+1|k}}.
\]
}%
The first term is bounded by \(\sqrt{\ffactor}\,w_k\).
For the second term, since
{\small
\[
\infomat_{k+1|k}\preceq \bar s_\smallgain \eye{},
\qquad
\|\statemat_k\|\le \bar a,
\]
}%
one has
{\small
\[
\|\statemat_k\ecorr_k\|_{\infomat_{k+1|k}}
\le
\sqrt{\bar s_\smallgain}\,\bar a\,\|\ecorr_k\|.
\]
}%
Moreover, by \eqref{eq:output_fast} and the uniform boundedness of
\(\boldsymbol H_{\rho,k}^{-1}\), there exists \(c_h>0\) such that
{\small
\[
\|\ecorr_k\|
\le
c_h\|\tilde{\boldsymbol q}_{k,\maxiteradmm}\|.
\]
}%
Using \eqref{eq:dist_eqset} gives
{\small
\[
w_{k+1}
\le
\sqrt{\ffactor}\,w_k
+
\smallgain \sqrt{\bar s_\smallgain}\,\bar a\,c_h\,
\dist{\mathcal Q_{\mathrm{eq},k}}(\boldsymbol{q}_{k,\maxiteradmm}),
\]
}%
which is \eqref{eq:w_iss_final_rewritten}.
\end{proof}

We now turn to the ADMM subsystem.
Because the distributed optimization layer solves a time-varying problem,
the natural quantity to track is the distance to the equilibrium set
\(\mathcal Q_{\mathrm{eq},k}\), rather than the error with respect to a
single equilibrium point. Since the observer interacts with the ADMM layer
only once per sampling step, we measure this distance at the beginning of
each observer step, namely at \(\boldsymbol q_{k,0}\).

Let \(\boldsymbol{w}_k\in\mathcal{Q}_{\mathrm{eq},k}\) satisfy
{\small
\[
\dist{\mathcal{Q}_{\mathrm{eq},k}}(\boldsymbol{q}_{k+1,0})
=
\|\boldsymbol{q}_{k+1,0}-\boldsymbol{w}_k\|.
\]
}%
Then, by the triangle inequality,
{\small
\[
\dist{\mathcal{Q}_{\mathrm{eq},k+1}}(\boldsymbol{q}_{k+1,0})
\le
\|\boldsymbol{q}_{k+1,0}-\boldsymbol{w}_k\|
+
\dist{\mathcal{Q}_{\mathrm{eq},k+1}}(\boldsymbol{w}_k).
\]
}%
Hence
{\small
\begin{equation}
\label{eq:dk_triangle}
\dist{\mathcal{Q}_{\mathrm{eq},k+1}}(\boldsymbol{q}_{k+1,0})
\le
\dist{\mathcal{Q}_{\mathrm{eq},k}}(\boldsymbol{q}_{k+1,0})
+
\Delta_{\mathcal{Q},k},
\end{equation}
}%
where
{\small
\[
\Delta_{\mathcal{Q},k}
\define
\sup_{\boldsymbol{w}\in\mathcal{Q}_{\mathrm{eq},k}}
\dist{\mathcal{Q}_{\mathrm{eq},k+1}}(\boldsymbol{w})
\]
}%
denotes the one-sided drift of the equilibrium set.

The next result bounds this drift in terms of the reduced-state error and
the current distance from equilibrium.

\begin{lemma}[Drift of the equilibrium set]
\label{lem:set_drift}
There exist constants \(c_{Q,w}>0\) and \(c_{Q,q}>0\) such that
{\small
\begin{equation}
\label{eq:DeltaQ_bound_final_rewritten}
\Delta_{\mathcal Q,k}
\le
c_{Q,w}\,w_k
+
\smallgain c_{Q,q}\,
\dist{\mathcal Q_{\mathrm{eq},k}}(\boldsymbol{q}_{k,0}),
\qquad \forall k.
\end{equation}
}%
In particular, one may choose
{\small
\[
\begin{aligned}
c_{Q,w}
&\define
\bar{F}^\dagger
\left(
\frac{\bar{F}_\Delta \bar{F}^\dagger \bar{M}}{\sqrt{\underline{s}_\smallgain}}
+
\frac{\bar{M}(1+c_\Phi)+\bar{M}_\Delta}{\sqrt{\underline{s}_\smallgain}}
\right),
\\[1mm]
c_{Q,q}
&\define
\bar{F}^\dagger \bar{M}\bar a c_h.
\end{aligned}
\]
}%
where
{\small
\[
\bar F^\dagger \define \sup_k \|\boldsymbol F_k^\dagger\| < \infty,
\qquad
\bar F_\Delta \define \sup_k \|\boldsymbol F_{k+1}-\boldsymbol F_k\| < \infty.
\]
}%
\end{lemma}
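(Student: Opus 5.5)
The plan is to bound the drift through the explicit projection formula \eqref{eq:qeq}. Fix any $\boldsymbol w\in\mathcal Q_{\mathrm{eq},k}$, so that $\boldsymbol F_k\boldsymbol w=\boldsymbol c_k$. Applying \eqref{eq:qeq} and \eqref{eq:dist_eqset} at time $k+1$ gives
\[
\dist{\mathcal Q_{\mathrm{eq},k+1}}(\boldsymbol w)=\|\boldsymbol F_{k+1}^\dagger(\boldsymbol F_{k+1}\boldsymbol w-\boldsymbol c_{k+1})\|
\le \bar F^\dagger\bigl(\|(\boldsymbol F_{k+1}-\boldsymbol F_k)\boldsymbol w\|+\|\boldsymbol c_{k+1}-\boldsymbol c_k\|\bigr),
\]
where we used $\boldsymbol F_k\boldsymbol w=\boldsymbol c_k$ and Lemma~\ref{lem:Fdagger_bound}.

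The first difficulty is that $\boldsymbol w$ is an arbitrary element of an affine set, and is therefore unbounded. We remove this by noting that, by Lemma~\ref{lem:Fk_nullspace}, $\ker\boldsymbol F_{k+1}=\ker\boldsymbol F_k$. Components of $\boldsymbol w$ in this common kernel therefore do not affect $\boldsymbol F_{k+1}\boldsymbol w-\boldsymbol c_{k+1}$. We may thus replace $\boldsymbol w$ by its minimum-norm representative $\boldsymbol F_k^\dagger\boldsymbol c_k$, whose norm is at most $\bar F^\dagger\|\boldsymbol c_k\|$. With $\bar F_\Delta$ bounded (via Assumption~\ref{ass:boundedness} and the uniform bound \eqref{eq:Hrho_inv_bound_global} on $\boldsymbol H_{\rho,k}^{-1}$, together with the bounded variation of $\boldsymbol H_{\rho,k}$), this gives
\[
\Delta_{\mathcal Q,k}\le \bar F^\dagger\bigl(\bar F_\Delta\bar F^\dagger\|\boldsymbol c_k\|+\|\boldsymbol c_{k+1}-\boldsymbol c_k\|\bigr).
\]
This already produces the first term of $c_{Q,w}$.

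The remaining step is to bound $\boldsymbol c_k$ and its increment in terms of the slow error. Since $\boldsymbol c_k=2\rho\boldsymbol A_q\boldsymbol H_{\rho,k}^{-1}\bar{\knownterm}_k$, and $\bar{\knownterm}_k$ is a fixed linear rearrangement of the innovation terms $\measmat_k^\top\outcov_k\measmat_k\estate_k$, we can write $\boldsymbol c_k=\boldsymbol M_k\estate_k$ with $\|\boldsymbol M_k\|\le\bar M$ and $\|\boldsymbol M_{k+1}-\boldsymbol M_k\|\le\bar M_\Delta$. Both bounds follow from the bounds on $\measmat_k$, $\outcov_k$, their variations, and the bounds on $\boldsymbol H_{\rho,k}^{-1}$ and its variation. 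Using $\|\estate_k\|\le w_k/\sqrt{\underline s_\smallgain}$ (Lemma~\ref{lemma:conditioning}), we get $\|\boldsymbol c_k\|\le\bar M w_k/\sqrt{\underline s_\smallgain}$. For the increment, we write
\[
\boldsymbol c_{k+1}-\boldsymbol c_k=(\boldsymbol M_{k+1}-\boldsymbol M_k)\estate_{k+1}+\boldsymbol M_k(\estate_{k+1}-\estate_k).
\]
We then substitute \eqref{eq:err_dyn_compact}, namely $\estate_{k+1}=\boldsymbol\Phi_k\estate_k+\smallgain\statemat_k\ecorr_k$, with $\|\boldsymbol\Phi_k\|\le c_\Phi$ (finite by Assumption~\ref{ass:boundedness} and Lemma~\ref{lemma:conditioning}). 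This yields terms of the form $(\bar M(1+c_\Phi)+\bar M_\Delta)\|\estate_k\|$ plus $\smallgain\bar M\bar a\|\ecorr_k\|$, up to how the $\bar M_\Delta\smallgain\bar a\|\ecorr_k\|$ cross term is absorbed, for instance by a mild enlargement of constants.

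Finally, $\|\ecorr_k\|\le c_h\,\dist{\mathcal Q_{\mathrm{eq},k}}(\boldsymbol q_{k,\maxiteradmm})$ as in Lemma~\ref{lem:slow_iss}. By the contraction of Lemma~\ref{lem:bl}, this is at most $c_h\,\dist{\mathcal Q_{\mathrm{eq},k}}(\boldsymbol q_{k,0})$ since $\mu<1$, which gives the $\smallgain c_{Q,q}$ term. The main obstacle I expect is the careful justification that $\bar F_\Delta$ and $\bar M_\Delta$ are finite, because this requires bounded variation of $\boldsymbol H_{\rho,k}^{-1}$. That in turn follows from the bounded variation of $\infomat_{k|k}$: the prediction \eqref{eq:forgetting_prediction_clean} and the update have bounded increments given bounded $\statemat_k$, $\statemat_k^{-1}$, $\measmat_k$, $\outcov_k$ and their differences. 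One also uses $\boldsymbol H_{\rho,k+1}^{-1}-\boldsymbol H_{\rho,k}^{-1}=\boldsymbol H_{\rho,k+1}^{-1}(\boldsymbol H_{\rho,k}-\boldsymbol H_{\rho,k+1})\boldsymbol H_{\rho,k}^{-1}$.
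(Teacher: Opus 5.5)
Your proposal follows essentially the same route as the paper. Like the paper, you use the projection formula for the distance, then kernel invariance (Lemma~\ref{lem:Fk_nullspace}) to reduce to the minimum-norm equilibrium $\boldsymbol F_k^\dagger\boldsymbol c_k$, then the factorization $\boldsymbol c_k=\boldsymbol M_k\estate_k$ combined with \eqref{eq:err_dyn_compact}, and finally the contraction of Lemma~\ref{lem:bl}.

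Two minor points. First, splitting the increment as $\boldsymbol c_{k+1}-\boldsymbol c_k=\boldsymbol M_{k+1}(\estate_{k+1}-\estate_k)+(\boldsymbol M_{k+1}-\boldsymbol M_k)\estate_k$, as the paper does, avoids your $\bar M_\Delta\smallgain\bar a\|\ecorr_k\|$ cross term and yields exactly the stated $c_{Q,w},c_{Q,q}$; your split instead gives $\bar M_\Delta c_\Phi$ and $\bar M+\bar M_\Delta$ in place of $\bar M_\Delta$ and $\bar M$. Second, finiteness of $\bar F_\Delta,\bar M_\Delta$ is not an obstacle, because $\|\boldsymbol X_{k+1}-\boldsymbol X_k\|\le 2\sup_k\|\boldsymbol X_k\|$ follows from uniform boundedness alone, with no need for bounded variation of $\infomat_{k|k}$ or $\boldsymbol H_{\rho,k}^{-1}$.
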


\begin{proof}
Define the minimum-norm equilibrium point
{\small
\[
\boldsymbol{q}_{\mathrm{ref},k}
\define
\boldsymbol{F}_k^\dagger \boldsymbol{c}_k.
\]
}%
Since \(\ker(\boldsymbol{F}_k)\) is independent of \(k\) by
Lemma~\ref{lem:Fk_nullspace}, the equilibrium sets satisfy
{\small
\[
\mathcal{Q}_{\mathrm{eq},k}
=
\{\boldsymbol{q}_{\mathrm{ref},k}+\boldsymbol{v}\mid \boldsymbol{v}\in\ker(\boldsymbol{F}_k)\},
\]
}%
that is, they are affine translates of the same linear subspace.
Therefore, for every \(\boldsymbol{w}\in\mathcal{Q}_{\mathrm{eq},k}\),
there exists \(\boldsymbol{v}\in\ker(\boldsymbol{F}_k)=\ker(\boldsymbol{F}_{k+1})\)
such that
{\small
\[
\boldsymbol{w}=\boldsymbol{q}_{\mathrm{ref},k}+\boldsymbol{v}.
\]
}%
Since \(\mathcal{Q}_{\mathrm{eq},k+1}\) is invariant under translations by
\(\ker(\boldsymbol{F}_{k+1})\), it follows that
{\small
\[
\dist{\mathcal{Q}_{\mathrm{eq},k+1}}(\boldsymbol{w})
=
\dist{\mathcal{Q}_{\mathrm{eq},k+1}}(\boldsymbol{q}_{\mathrm{ref},k}).
\]
}%
Hence
{\small
\[
\Delta_{\mathcal{Q},k}
=
\dist{\mathcal{Q}_{\mathrm{eq},k+1}}(\boldsymbol{q}_{\mathrm{ref},k}).
\]
}%

Now, by \eqref{eq:dist_eqset} applied to the set
{\small
\[
\mathcal{Q}_{\mathrm{eq},k+1}
=
\{\boldsymbol{q}\mid \boldsymbol{F}_{k+1}\boldsymbol{q}=\boldsymbol{c}_{k+1}\},
\]
}%
we have
{\small
\[
\Delta_{\mathcal{Q},k}
=
\left\|
\boldsymbol{F}_{k+1}^\dagger
\bigl(
\boldsymbol{F}_{k+1}\boldsymbol{q}_{\mathrm{ref},k}-\boldsymbol{c}_{k+1}
\bigr)
\right\|.
\]
}%
Using \(\boldsymbol{F}_k\boldsymbol{q}_{\mathrm{ref},k}=\boldsymbol{c}_k\),
we obtain
{\small
\[
\boldsymbol{F}_{k+1}\boldsymbol{q}_{\mathrm{ref},k}-\boldsymbol{c}_{k+1}
=
(\boldsymbol{F}_{k+1}-\boldsymbol{F}_k)\boldsymbol{q}_{\mathrm{ref},k}
-
(\boldsymbol{c}_{k+1}-\boldsymbol{c}_k),
\]
}%
and therefore
{\small
\begin{equation}
\label{eq:DeltaQ_pre}
\Delta_{\mathcal{Q},k}
\le
\|\boldsymbol{F}_{k+1}^\dagger\|
\Bigl(
\|\boldsymbol{F}_{k+1}-\boldsymbol{F}_k\|\,\|\boldsymbol{q}_{\mathrm{ref},k}\|
+
\|\boldsymbol{c}_{k+1}-\boldsymbol{c}_k\|
\Bigr).
\end{equation}
}%

Since \(\bar{\knownterm}_k\) depends linearly on \(\estate_k\), there exists a
matrix \(\boldsymbol{M}_k\) such that
{\small
\[
\boldsymbol{c}_k=\boldsymbol{M}_k\estate_k,
\]
}%
with \(\|\boldsymbol{M}_k\|\le \bar{M}\) uniformly in \(k\).
Hence
{\small
\begin{equation}
\label{eq:ck_bound}
\|\boldsymbol{c}_k\|
\le
\bar{M}\,\|\estate_k\|
\le
\frac{\bar{M}}{\sqrt{\underline{s}_\smallgain}}\,w_k.
\end{equation}
}%
It follows that
{\small
\begin{equation}
\label{eq:qref_bound}
\|\boldsymbol{q}_{\mathrm{ref},k}\|
\le
\|\boldsymbol{F}_k^\dagger\|\,\|\boldsymbol{c}_k\|
\le
\frac{\bar{F}^\dagger \bar{M}}{\sqrt{\underline{s}_\smallgain}}\,w_k.
\end{equation}
}%

Moreover,
{\small
\[
\boldsymbol{c}_{k+1}-\boldsymbol{c}_k
=
\boldsymbol{M}_{k+1}(\estate_{k+1}-\estate_k)
+
(\boldsymbol{M}_{k+1}-\boldsymbol{M}_k)\estate_k.
\]
}%
Since the matrices \(\boldsymbol{M}_k\) are uniformly bounded and have
uniformly bounded variation, there exists \(\bar{M}_\Delta>0\) such that
{\small
\begin{equation}
\label{eq:ckdiff_pre}
\|\boldsymbol{c}_{k+1}-\boldsymbol{c}_k\|
\le
\bar{M}\,\|\estate_{k+1}-\estate_k\|
+
\bar{M}_\Delta\,\|\estate_k\|.
\end{equation}
}%

From \eqref{eq:err_dyn_compact},
{\small
\[
\estate_{k+1}-\estate_k
=
(\boldsymbol{\Phi}_k-\eye{\Nrob d})\estate_k
+
\smallgain\statemat_k\ecorr_k,
\]
}%
hence
{\small
\[
\|\estate_{k+1}-\estate_k\|
\le
(1+\|\boldsymbol{\Phi}_k\|)\,\|\estate_k\|
+
\smallgain\|\statemat_k\|\,\|\ecorr_k\|.
\]
}%
Using the uniform bounds on \(\boldsymbol{\Phi}_k\) and \(\statemat_k\),
together with
{\small
\[
\|\ecorr_k\|
\le
c_h\,\dist{\mathcal Q_{\mathrm{eq},k}}(\boldsymbol{q}_{k,\maxiteradmm}),
\]
}%
there exists \(c_\Phi>0\) such that
{\small
\begin{equation}
\label{eq:xdiff_bound}
\|\estate_{k+1}-\estate_k\|
\le
\frac{1+c_\Phi}{\sqrt{\underline{s}_\smallgain}}\,w_k
+
\smallgain\,\bar{a}\,c_h\,\dist{\mathcal Q_{\mathrm{eq},k}}(\boldsymbol{q}_{k,\maxiteradmm}).
\end{equation}
}%

By \eqref{eq:qexpstab},
{\small
\[
\dist{\mathcal Q_{\mathrm{eq},k}}(\boldsymbol{q}_{k,\maxiteradmm})
\le
\mu^{\maxiteradmm}
\dist{\mathcal Q_{\mathrm{eq},k}}(\boldsymbol{q}_{k,0}),
\]
}%
hence this term is bounded by
{\small
$
\dist{\mathcal Q_{\mathrm{eq},k}}(\boldsymbol{q}_{k,0}),
$
}%
up to a multiplicative factor strictly smaller than one. Absorbing this factor into the constant \(c_{Q,q}\), we obtain from \eqref{eq:xdiff_bound}
{\small
\begin{equation}
\label{eq:ckdiff_bound}
\|\boldsymbol{c}_{k+1}-\boldsymbol{c}_k\|
\le
\left(
\frac{\bar{M}(1+c_\Phi)+\bar{M}_\Delta}{\sqrt{\underline{s}_\smallgain}}
\right) w_k
+
\smallgain\,\bar{M}\bar{a}c_h\,
\dist{\mathcal Q_{\mathrm{eq},k}}(\boldsymbol{q}_{k,0}).
\end{equation}
}%

Finally, since \(\boldsymbol{F}_k\) is uniformly bounded, there exists
\(\bar{F}_\Delta>0\) such that
{\small
\[
\|\boldsymbol{F}_{k+1}-\boldsymbol{F}_k\|\le \bar{F}_\Delta,
\qquad \forall k.
\]
}%
Using \eqref{eq:qref_bound} and \eqref{eq:ckdiff_bound} in
\eqref{eq:DeltaQ_pre} yields
{\small
\[
\Delta_{\mathcal Q,k}
\le
c_{Q,w}\,w_k
+
\smallgain\,c_{Q,q}\,\dist{\mathcal Q_{\mathrm{eq},k}}(\boldsymbol{q}_{k,0}),
\]
}%
which proves \eqref{eq:DeltaQ_bound_final_rewritten}.
\end{proof}

We can now combine the set-drift estimate with the contraction of the
frozen ADMM subsystem to derive a perturbed estimate for the fast
dynamics.

\begin{lemma}[Perturbed estimate for the ADMM subsystem]
\label{lem:fast_iss}
Define
{\small
\[
d_k \define \dist{\mathcal Q_{\mathrm{eq},k}}(\boldsymbol{q}_{k,0}).
\]
}%
Then
{\small
\begin{equation}
\label{eq:dq_iss_rewritten}
d_{k+1}
\le
(\mu^{\maxiteradmm}+\smallgain c_{Q,q})\,d_k
+
c_{Q,w}\,w_k,
\qquad \forall k.
\end{equation}
}%
\end{lemma}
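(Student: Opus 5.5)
The plan is to chain three facts already available: the triangle inequality \eqref{eq:dk_triangle}, the frozen $\maxiteradmm$-step contraction of Lemma~\ref{lem:bl}, and the drift bound of Lemma~\ref{lem:set_drift}. Start from \eqref{eq:dk_triangle}:
\[
d_{k+1}=\dist{\mathcal Q_{\mathrm{eq},k+1}}(\boldsymbol q_{k+1,0})\le \dist{\mathcal Q_{\mathrm{eq},k}}(\boldsymbol q_{k+1,0})+\Delta_{\mathcal Q,k}.
\]
The initialization rule in \eqref{eq:coupled} gives $\boldsymbol q_{k+1,0}=\boldsymbol q_{k,\maxiteradmm}$. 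Hence the first term equals $\dist{\mathcal Q_{\mathrm{eq},k}}(\boldsymbol q_{k,\maxiteradmm})$, i.e., the distance to the equilibrium set of the \emph{same} observer step $k$ at the end of its ADMM block.

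Within observer step $k$, the quantities $\bar{\knownterm}_k$ and $\infomat_{k|k}$ are held fixed during the $\maxiteradmm$ iterations. These iterations are therefore exactly the frozen iteration \eqref{eq:fast_subsys} with $\frozeniteradmm=k$. Applying \eqref{eq:qexpstab} then yields
\[
\dist{\mathcal Q_{\mathrm{eq},k}}(\boldsymbol q_{k,\maxiteradmm})\le\mu^{\maxiteradmm}d_k .
\]
The rate $\mu$ is uniform in $k$, because the subregularity constant $\sigma$ was chosen uniformly via Lemma~\ref{lem:Fdagger_bound}. This uniformity is what makes a single constant $\mu^{\maxiteradmm}$ valid for all $k$.

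Next, substitute \eqref{eq:DeltaQ_bound_final_rewritten}, namely $\Delta_{\mathcal Q,k}\le c_{Q,w}w_k+\smallgain c_{Q,q}d_k$. Adding the two bounds gives
\[
d_{k+1}\le(\mu^{\maxiteradmm}+\smallgain c_{Q,q})d_k+c_{Q,w}w_k,
\]
which is \eqref{eq:dq_iss_rewritten}.

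The main point needing care is the consistency between $\Delta_{\mathcal Q,k}$ as defined (a supremum over $\mathcal Q_{\mathrm{eq},k}$) and its use in \eqref{eq:dk_triangle}. There, the point $\boldsymbol w_k$ is the projection of $\boldsymbol q_{k+1,0}$ onto $\mathcal Q_{\mathrm{eq},k}$, which exists because the set is closed and affine. It therefore lies in $\mathcal Q_{\mathrm{eq},k}$, and its distance to $\mathcal Q_{\mathrm{eq},k+1}$ is bounded by the supremum. Lemma~\ref{lem:Fk_nullspace} shows the supremum is in fact attained and finite, since all points of $\mathcal Q_{\mathrm{eq},k}$ are equidistant from $\mathcal Q_{\mathrm{eq},k+1}$. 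A second check is that the $\smallgain c_{Q,q}$ term in Lemma~\ref{lem:set_drift} is stated in terms of $d_k=\dist{\mathcal Q_{\mathrm{eq},k}}(\boldsymbol q_{k,0})$, already incorporating the $\mu^{\maxiteradmm}\le 1$ absorption. So no further conversion between start-of-block and end-of-block distances is required.
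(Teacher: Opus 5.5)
Your proposal is correct and follows the paper's proof step for step. It uses the triangle inequality \eqref{eq:dk_triangle}, then the warm start $\boldsymbol q_{k+1,0}=\boldsymbol q_{k,\maxiteradmm}$ combined with the frozen $\maxiteradmm$-step contraction of Lemma~\ref{lem:bl}, and finally the drift bound of Lemma~\ref{lem:set_drift}. Your additional checks (existence of the projection $\boldsymbol w_k$, uniformity of $\mu$ via Lemma~\ref{lem:Fdagger_bound}, and Lemma~\ref{lem:set_drift} already being stated in terms of $d_k$) are sound and simply make explicit what the paper leaves implicit.
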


\begin{proof}
By \eqref{eq:dk_triangle},
{\small
\[
d_{k+1}
\le
\dist{\mathcal Q_{\mathrm{eq},k}}(\boldsymbol q_{k+1,0})
+
\Delta_{\mathcal Q,k}.
\]
}%
Using the warm-start relation \(\boldsymbol q_{k+1,0}=\boldsymbol q_{k,\maxiteradmm}\),
and then Lemma~\ref{lem:bl}, we obtain
{\small
\[
\dist{\mathcal Q_{\mathrm{eq},k}}(\boldsymbol q_{k+1,0})
=
\dist{\mathcal Q_{\mathrm{eq},k}}(\boldsymbol q_{k,\maxiteradmm})
\le
\mu^{\maxiteradmm} d_k.
\]
}%
Combining the two inequalities with Lemma~\ref{lem:set_drift} yields
{\small
\[
d_{k+1}
\le
\mu^{\maxiteradmm}d_k
+
c_{Q,w}\,w_k
+
\smallgain c_{Q,q}\,d_k,
\]
}%
which proves \eqref{eq:dq_iss_rewritten}.
\end{proof}

We are now in a position to combine the perturbed estimates of the
reduced and ADMM dynamics into a comparison system for the full
interconnection.

\begin{proof}[Proof of Theorem~\ref{thm:main}]
By Lemmas~\ref{lem:slow_iss} and \ref{lem:fast_iss}, the interconnected
system satisfies
{\small
\[
\begin{aligned}
w_{k+1}
&\le
\sqrt{\ffactor}\,w_k
+
\smallgain c_{12}\,d_k,
\\
d_{k+1}
&\le
c_{21}\,w_k
+
(\mu^{\maxiteradmm}+\smallgain c_{22})\,d_k,
\end{aligned}
\]
}%
where, for compactness,
{\small
\[
c_{12}\define c_w,
\qquad
c_{21}\define c_{Q,w},
\qquad
c_{22}\define c_{Q,q}.
\]
}%
Equivalently,
{\small
\[
\begin{bmatrix}
w_{k+1}\\[1mm]
d_{k+1}
\end{bmatrix}
\le
\boldsymbol{\mathcal A}
\begin{bmatrix}
w_k\\[1mm]
d_k
\end{bmatrix},
\qquad
\boldsymbol{\mathcal A}
=
\begin{bmatrix}
\sqrt{\ffactor} & \smallgain c_{12}\\[1mm]
c_{21} & \mu^{\maxiteradmm}+\smallgain c_{22}
\end{bmatrix}.
\]
}%

Since \(\boldsymbol{\mathcal A}\in\mathbb R_{\ge 0}^{2\times 2}\), it is
enough to verify that
{\small
\[
\mathcal A_{11}<1,
\qquad
\mathcal A_{22}<1,
\qquad
\det(\eye{}-\boldsymbol{\mathcal A})>0.
\]
}%
Under these conditions, \(\eye{}-\boldsymbol{\mathcal A}\) is a
nonsingular \(M\)-matrix
\cite[Theorem~6.2.3, condition~(E17)]{berman1994nonnegative}, and
therefore \(\rho(\boldsymbol{\mathcal A})<1\)
\cite[Def.~6.1.2]{berman1994nonnegative}.

First,
{\small
\[
\mathcal A_{11}=\sqrt{\ffactor}<1.
\]
}%
Next, \eqref{eq:smallgain_bound} implies
{\small
\[
\smallgain
\bigl(
c_{12}c_{21}+(1-\sqrt{\ffactor})c_{22}
\bigr)
<
(1-\sqrt{\ffactor})(1-\mu^{\maxiteradmm}),
\]
}%
and therefore
{\small
\[
\smallgain c_{22}<1-\mu^{\maxiteradmm},
\]
}%
which yields
{\small
\[
\mathcal A_{22}=\mu^{\maxiteradmm}+\smallgain c_{22}<1.
\]
}%
Finally,
{\small
\[
\det(\eye{}-\boldsymbol{\mathcal A})
=
(1-\sqrt{\ffactor})(1-\mu^{\maxiteradmm})
-
\smallgain\bigl(
c_{12}c_{21}+(1-\sqrt{\ffactor})c_{22}
\bigr),
\]
}%
which is positive by \eqref{eq:smallgain_bound}.
Hence \(\boldsymbol{\mathcal A}\) is Schur.

Therefore, there exist constants \(C>0\) and \(\lambda\in(0,1)\) such that
{\small
\[
\|\boldsymbol{\mathcal A}^k\|\le C\lambda^k,
\qquad \forall k\ge 0.
\]
}%
Iterating the comparison inequality gives
{\small
\[
\begin{bmatrix}
w_k\\[1mm]
d_k
\end{bmatrix}
\le
\boldsymbol{\mathcal A}^k
\begin{bmatrix}
w_0\\[1mm]
d_0
\end{bmatrix},
\]
}%
and hence
{\small
\[
w_k+d_k
\le
C\lambda^k\bigl(w_0+d_0\bigr).
\]
}%

Finally, by Lemma~\ref{lemma:conditioning},
{\small
\[
\sqrt{\underline s_\smallgain}\,\|\estate_k\|
\le
w_k
\le
\sqrt{\bar s_\smallgain}\,\|\estate_k\|,
\]
}%
so \(w_k\) is uniformly equivalent to \(\|\estate_k\|\).
Recalling that
\(d_k=\dist{\mathcal Q_{\mathrm{eq},k}}(\boldsymbol{q}_{k,0})\),
we conclude that
{\small
\[
\|\estate_k\|
+
\dist{\mathcal Q_{\mathrm{eq},k}}(\boldsymbol{q}_{k,0})
\le
C\lambda^k
\bigl(
\|\estate_0\|
+
\dist{\mathcal Q_{\mathrm{eq},0}}(\boldsymbol q_{0,0})
\bigr),
\]
}%
which proves the claim.
\end{proof}

\begin{remark}[Non-circularity of condition~\eqref{eq:smallgain_bound}]
\label{rem:noncircular}
Although \(c_{12}\) depends on \(\smallgain\) through \(\bar{s}_\smallgain\),
the coefficient \(c_{21}=c_{Q,w}\) contains the reciprocal factor
\((\underline{s}_\smallgain)^{-1/2}\), so the product \(c_{12}c_{21}\) is
independent of \(\smallgain\). Indeed, by Lemma~\ref{lemma:conditioning},
\[
\bar{s}_\smallgain=\smallgain\,\bar s,
\qquad
\underline{s}_\smallgain=\smallgain\,\underline s.
\]
Hence
{\small
\[
c_{12}c_{21}=C_0,
\]
}%
for a constant \(C_0\) independent of \(\smallgain\). Condition
\eqref{eq:smallgain_bound} is therefore equivalent to a
\(\smallgain\)-independent small-gain requirement.
\end{remark}

\begin{remark}
The final Schur condition on \(\boldsymbol{\mathcal A}\) captures the
required separation between the slow and fast dynamics. In practice,
this condition can be improved by decreasing \(\smallgain\), tuning the
ADMM parameters \((\rho,\alpha)\), or by increasing the number
\(\maxiteradmm\) of ADMM iterations performed per observer step.
\end{remark}


\section{Simulation results}
We evaluate the proposed distributed observer in a cooperative localization task with
$N=10$ agents connected by a randomly generated proximity communication graph.
A number $M=3$ of agents are anchors with access to absolute position measurements, while all other agents only use relative measurements.
The robots state is described by
$
\state_{i,k}
=
\begin{bmatrix}
\pos_{i,k}^\top & \vel_{i,k}^\top
\end{bmatrix}^\top
\in\nR{4}
$, where $\pos_{i,k} \in \nR{2}$ denotes the position of the $i$-th robot, $\vel_{i,k} \in \nR{2}$ its velocity, the input is given by the acceleration $\sysinput_{i,k}\in\nR{2}$, and
{\small\[
\statemat_i =
\begin{bmatrix}
1 & T_s \\
0 & 1 
\end{bmatrix}\otimes \eye{2}
,
\qquad
\inputmat_i =
\begin{bmatrix}
\frac{1}{2}T_s^2 \\
T_s  
\end{bmatrix}\otimes \eye{2}.
\]}%
where $T_s=0.05$ is the sampling time. 

Measurements are
\[
\meas_{i,k}^{\ell}=\pos_{i,k}\quad (i\in\vertices_p),\qquad
\meas_{ij,k}^r=\pos_{i,k}-\pos_{j,k}\quad (i,j)\in\mathcal{E}.
\]
Initial positions are drawn uniformly in the workspace and initial velocities are random.
The initial state estimate is perturbed by
$\stateest_{0|0}=\state_0+\boldsymbol{\nu}$, with
$\boldsymbol{\nu}\sim\mathcal{N}(0,\boldsymbol{P}_0),\;
\boldsymbol{P}_0=\diag(1,1,0.1,0.1)\otimes I_{\Nrob}$.
While, for simplicity, a scalar $\ffactor$ is used in the proof, a diagonal forgetting matrix $\ffmatrix$ can help improve performance. Here we use a diagonal forgetting matrix with larger decay on velocity than position:
$\gamma_p=e^{-5T_s}\approx0.779$,
$\gamma_v=e^{-50T_s}\approx0.082$,
applied as $\ffmatrix_i=\diag(\gamma_p,\gamma_p,\gamma_v,\gamma_v)$. Moreover, we use a correction gain $\smallgain = 1$, and
the information-weight gains are chosen as
$(\outcov_i^{\ell})^{-1}=5I_2$, $(\outcov_{ij}^r)^{-1}=0.5I_2$.

We compare the centralized Kalman-like observer with three distributed correction schemes, namely a Richardson-based implementation and the two proposed ADMM-based implementation, i.e. the one in Algorithm.~\ref{alg:dist_observer_short} simply indicated as ADMM in the legend and the modification described in Sec.~\ref{subsec:private_split}, indicated as ADMM-direct. {The number of inner loop iterations for both Richardson and ADMM is fixed to the same value $H=1$ and we use $\smallgain=1$.} For the Richardson method, we use the fixed gain $\alpha_R=0.05$. Because the information matrix $\infomat_{k|k}$ is time-varying, the optimal Richardson step size
{\small
\[
\alpha_R^\ast=\frac{2}{\lambda_{\min}(\infomat_{k|k})+\lambda_{\max}(\infomat_{k|k})}
\]
}%
also varies with time. The value $\alpha_R=0.05$ was thus chosen as a fixed compromise, based on simulation data: it remains close to the optimal value over most of the time horizon, as estimated from the average values of $\lambda_{\min}(\infomat_{k|k})$ and $\lambda_{\max}(\infomat_{k|k})$, while preserving stability of the iteration.
For the ADMM-based method, we use the fixed penalty and relaxation parameters $(\rho,\alpha)=(1,0.95)$, selected empirically through simulation. In all methods, the prediction and measurement-update steps coincide; only the correction step is implemented differently.

Figure~\ref{fig:init_config} shows the initial sensing graph, together with the true and estimated states of the agents.
\begin{figure}[h]
    \centering
\includegraphics[width=0.6\linewidth]{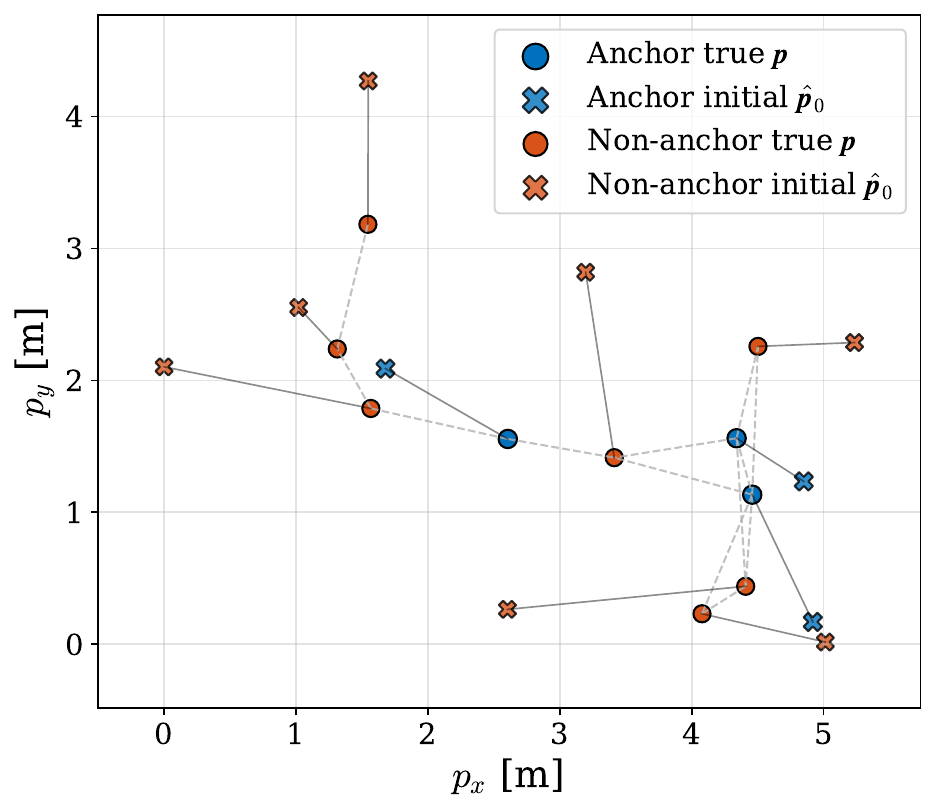}
    \caption{Initial robot configuration and sensing graph. Circle markers denote the true robot positions, while cross markers denote the corresponding initial estimated positions. The anchor robots are in blue, and the dashed green lines represent the graph edges, while solid lines connect a robot true position to the corresponding estimate.}
    \label{fig:init_config}
\end{figure}
Fig.~\ref{fig:est_corr_error}a and the zoomed in Fig.~\ref{fig:est_corr_error1000} report the global estimation error norm, showing that both distributed schemes track the centralized estimate closely, with ADMM achieving closer performance to the centralized scheme, as visible in Fig.~\ref{fig:est_corr_error1000}.

\begin{figure}[h]
    \centering
\includegraphics[width=1.0\linewidth]{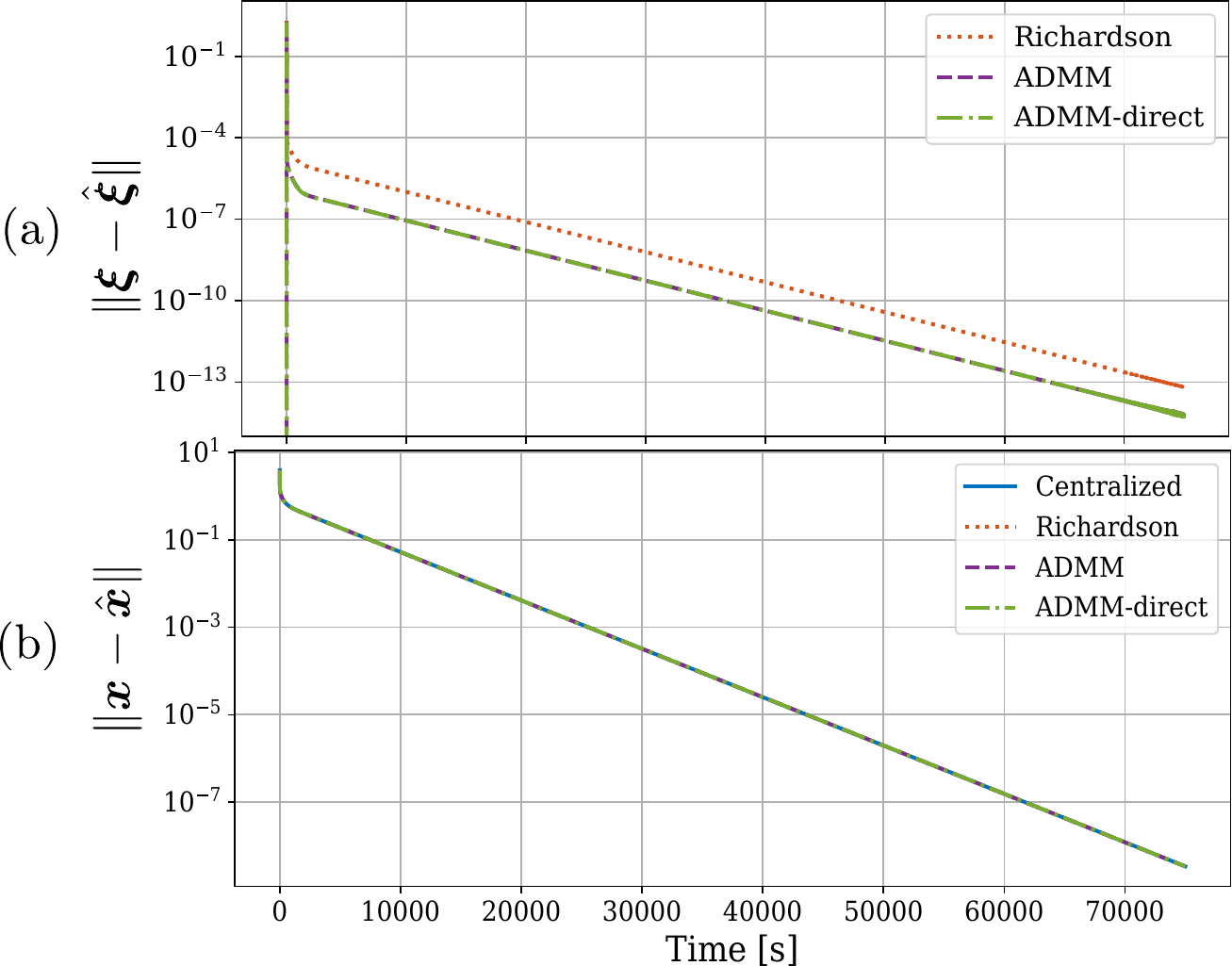}
    \caption{Results over the full simulation. (a) Norm of the global estimation error, i.e. $\|\estate_k\|$. (b) Norm of the global correction error, i.e. $\|\ecorr_k\|$.}
    \label{fig:est_corr_error}
\end{figure}

\begin{figure}[h]
    \centering
\includegraphics[width=1.0\linewidth]{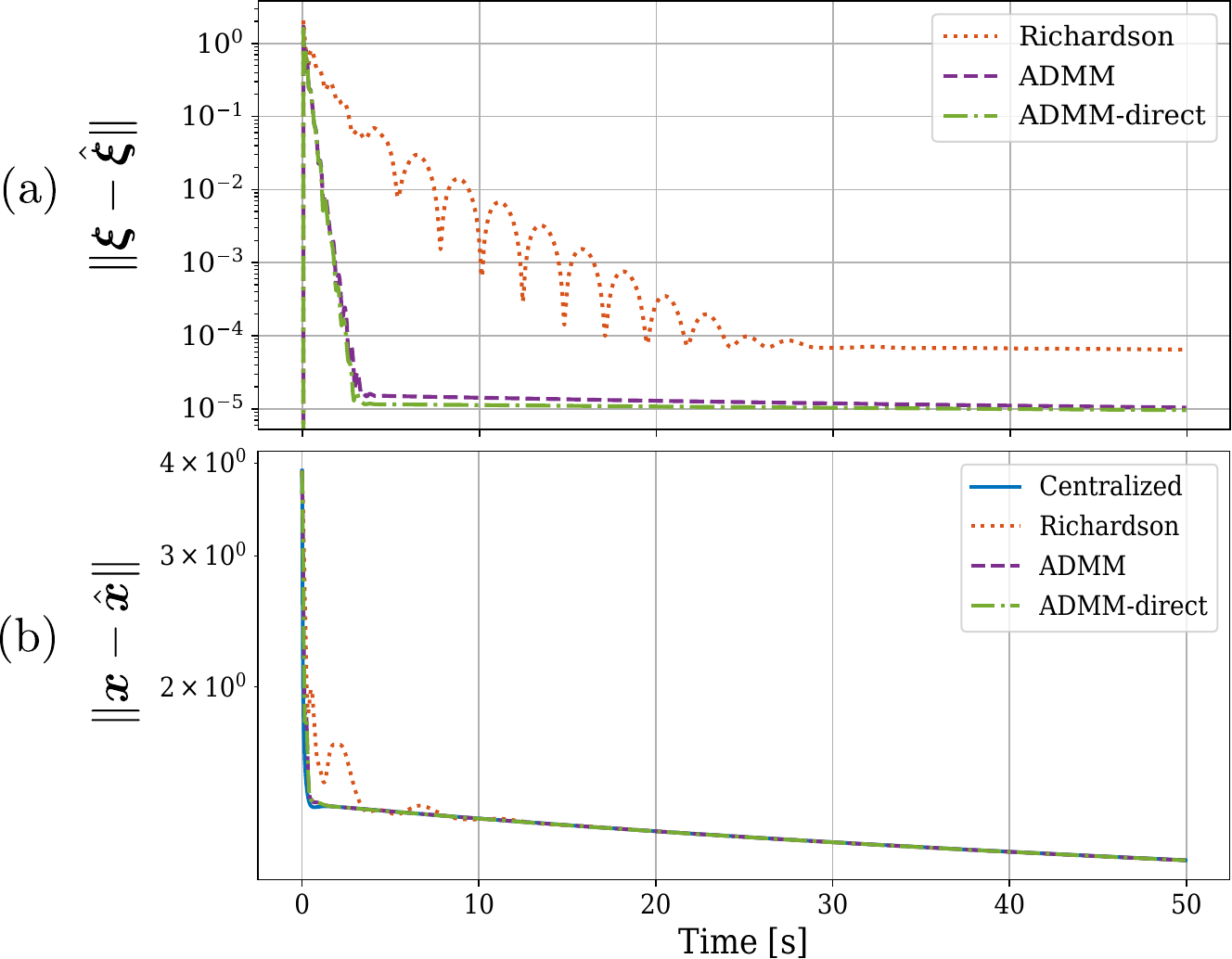}
    \caption{Zoom of the results over the first 1000 steps. (a) Norm of the global estimation error, i.e. $\|\estate_k\|$. (b) Norm of the global correction error, i.e. $\|\ecorr_k\|$.}
    \label{fig:est_corr_error1000}
\end{figure}

Fig.~\ref{fig:est_corr_error}b and the zoomed in  Fig.~\ref{fig:est_corr_error} show the global correction error norm with respect to the distributed Richardson, indicating that ADMM achieves considerably faster convergence to the centralized correction term under the selected tuning. A further improvement is visible when using the modification described in Sec.~\ref{subsec:private_split}.

Overall, the proposed ADMM-based distributed observer closely reproduces centralized performance while preserving sparse local communication.


\section{Conclusions}\label{sec:conclusions}

This work proposed a distributed Kalman-like observer for multi-agent systems with local and relative measurements. The approach combines a sparsity-preserving information prediction step based on exponential forgetting with a distributed correction step obtained by solving the resulting sparse linear system through partition-based ADMM. In contrast to centralized information-form filtering, the proposed construction preserves the graph-induced sparsity of the information matrix and avoids both dense Riccati-type prediction updates and centralized matrix inversion.

A  two-time-scale separation was used to separate the reduced estimation dynamics from the distributed optimization layer, and small gain arguments established uniform global exponential stability of the interconnected observer.

The numerical results in a cooperative localization scenario showed that the proposed ADMM-based correction closely reproduces the centralized observer while converging significantly faster than the previously considered Richardson-based dynamic inversion scheme. 

Future work will consider asynchronous and event-triggered implementations, switching graphs, robustness with respect to communication losses, and packet drops, and extensions to nonlinear estimation problems.


\appendix


\subsection{Derivation of the correction form}\label{app:kalupdate}

From the standard information-form update
{\small
\begin{equation}
\label{eq:app_info_update}
\infomat_{k|k}
=
\infomat_{k|k-1}
+
\smallgain\,\measmat_k^\top \outcov_k \measmat_k,
\qquad
\infovec_{k|k}
=
\infovec_{k|k-1}
+
\smallgain\,\measmat_k^\top \outcov_k \meas_k,
\end{equation}
}%
together with
{\small
$
\infovec_{k|k-1}=\infomat_{k|k-1}\stateest_{k|k-1}
$
}%
and
{\small
$
\stateest_{k|k}=\infomat_{k|k}^{-1}\infovec_{k|k},
$
}%
it follows that
{\small
\begin{equation}\label{eq:xhatkkapp}
\stateest_{k|k}
=
\infomat_{k|k}^{-1}
\Bigl(
\infomat_{k|k-1}\stateest_{k|k-1}
+
\smallgain\,\measmat_k^\top \outcov_k \meas_k
\Bigr).
\end{equation}
}%
Expanding {\small$\infomat_{k|k}$} as in \eqref{eq:app_info_update}, we also have
{\small
\[
\infomat_{k|k}\stateest_{k|k-1}
=
\infomat_{k|k-1}\stateest_{k|k-1}
+
\smallgain\,\measmat_k^\top \outcov_k \measmat_k\stateest_{k|k-1}.
\]
}%
After pre-multiplying both sides by {\small$\infomat_{k|k}^{-1}$}, subtracting this expression from the \eqref{eq:xhatkkapp} yields
{\small
\[
\stateest_{k|k}-\stateest_{k|k-1}
=
\smallgain\infomat_{k|k}^{-1}
\,\measmat_k^\top \outcov_k
\bigl(
\meas_k-\measmat_k\stateest_{k|k-1}
\bigr).
\]
}%
Therefore, defining
{\small
\[
\correction_k
:=
\frac{1}{\smallgain}
\bigl(\stateest_{k|k}-\stateest_{k|k-1}\bigr),
\qquad
\knownterm_k
:=
\measmat_k^\top \outcov_k
\bigl(\meas_k-\measmat_k\stateest_{k|k-1}\bigr),
\]
}%
we obtain the correction form
{\small
\[
\stateest_{k|k}
=
\stateest_{k|k-1}
+
\smallgain\,\correction_k,
\qquad
\infomat_{k|k}\correction_k=\knownterm_k.
\]
}%


\subsection{UGES of the reduced system with matrix forgetting factor}\label{app:matrixffuges}

In the case of a matrix forgetting factor $\ffmatrix$, a corresponding UGES result can still be established. However, unlike the scalar case, the resulting contraction condition is difficult to verify in practice and generally very conservative.

\begin{lemma}[UGES of the reduced system with matrix forgetting factor]
\label{lem:reduced_uges_matrix}
Under Assumptions~\ref{ass:boundedness}, \ref{ass:compobs}, and
\ref{ass:invertibility}, consider the reduced system~\eqref{eq:reduced_system}
with the matrix-forgetting prediction \eqref{eq:forgetting_prediction_matrix}.
Suppose that there exists $\bar\gamma\in(0,1)$ such that
{\small
\begin{equation}
\label{eq:matrix_contraction_assumption}
    \ffmatrix^\top \infomat_{k|k}\ffmatrix
    \preceq
    \bar\gamma\,\infomat_{k|k},
    \qquad \forall\,k.
\end{equation}
}
Then the reduced system is uniformly globally exponentially stable, with
Lyapunov function {\small
$V_k \define \estate_k^\top \infomat_{k|k-1}\estate_k$}
satisfying
{\small
\[
    \underline{s}^\smallgain\|\estate_k\|^2
    \le
    V_k
    \le
    \bar{s}^\smallgain\|\estate_k\|^2,
    \qquad
    V_{k+1}-V_k
    \le
    -(1-\bar\gamma)\underline{s}^\smallgain\|\estate_k\|^2.
\]
}
 
Moreover, condition~\eqref{eq:matrix_contraction_assumption} admits the
following equivalent and sufficient characterizations, in decreasing order
of tightness:
\begin{enumerate}[(\roman*)]
\item \emph{(Equivalent condition)}
    \eqref{eq:matrix_contraction_assumption} holds if and only if
    {\small
    \[
        \bigl\|\infomat_{k|k}^{1/2}\ffmatrix\infomat_{k|k}^{-1/2}\bigr\|_2^2
        \le \bar\gamma,
        \qquad \forall\,k.
    \]
    }
\item \emph{(Sufficient condition)}
    A sufficient condition is
    {\small
    \[
        \|\ffmatrix\|_2^2\,\kappa(\infomat_{k|k})
        \le
        \bar\gamma
        < 1,
        \qquad \forall\,k,
    \]
    }
    where {\small$\kappa(\infomat_{k|k}) \define
    \lambda_{\max}(\infomat_{k|k})/\lambda_{\min}(\infomat_{k|k})$}
    is the condition number of $\infomat_{k|k}$.
\item \emph{(Uniform sufficient condition)}
    Using the uniform bounds of Lemma~\ref{lemma:conditioning}, a
    condition that is uniform in $k$ and independent of the trajectory is
    {\small
    \[
        \|\ffmatrix\|_2^2\,\frac{\bar{s}^\smallgain}{\underline{s}^\smallgain}
        < 1.
    \]
    }
\end{enumerate}
\end{lemma}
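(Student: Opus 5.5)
The plan is to repeat the Lyapunov argument of Lemma~\ref{lem:reduced_uges}, replacing the scalar identity $\statemat_k^\top\infomat_{k+1|k}\statemat_k=\ffactor\infomat_{k|k}$ with its matrix analogue. From \eqref{eq:forgetting_prediction_matrix}, and since $\ffmatrix$ is diagonal (hence symmetric), we have $\statemat_k^\top\infomat_{k+1|k}\statemat_k=\ffmatrix\infomat_{k|k}\ffmatrix=\ffmatrix^\top\infomat_{k|k}\ffmatrix$.

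\emph{Lyapunov decrease.} Substituting \eqref{eq:reduced_system} into $V_{k+1}$ exactly as in \eqref{eq:Vk+1} gives
\[
V_{k+1}=\estate_k^\top\infomat_{k|k-1}\infomat_{k|k}^{-1}\,\ffmatrix^\top\infomat_{k|k}\ffmatrix\,\infomat_{k|k}^{-1}\infomat_{k|k-1}\estate_k .
\]
Assumption \eqref{eq:matrix_contraction_assumption} bounds the middle factor by $\bar\gamma\infomat_{k|k}$. This yields $V_{k+1}\le\bar\gamma\,\estate_k^\top\infomat_{k|k-1}\infomat_{k|k}^{-1}\infomat_{k|k-1}\estate_k$. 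From here the argument is verbatim the scalar one: $\infomat_{k|k}\succeq\infomat_{k|k-1}$ gives $\infomat_{k|k-1}\infomat_{k|k}^{-1}\infomat_{k|k-1}\preceq\infomat_{k|k-1}$. Hence $V_{k+1}\le\bar\gamma V_k$, and $V_{k+1}-V_k\le-(1-\bar\gamma)\underline{s}^\smallgain\|\estate_k\|^2$ once the lower sandwich bound is available.

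\emph{Sandwich bounds.} These are the main obstacle, because Lemma~\ref{lemma:conditioning} was stated only for scalar forgetting. I would obtain them by comparison with scalar recursions.

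For the lower bound, $\ffmatrix\infomat\ffmatrix\succeq\gamma_{\min}^2\infomat$ with $\gamma_{\min}=\min_{i,n}\ffactor_{i,n}$. By monotonicity of the additive recursion, $\infomat_{k|k}$ therefore dominates the information matrix produced by scalar forgetting $\gamma_{\min}^2$. Lemma~\ref{lemma:conditioning} and \cite[Lemma~1]{cticlea2013exponential} then give a uniform lower bound, provided $\gamma_{\min}^2$ satisfies the same admissibility condition as there.

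For the upper bound, I would use \eqref{eq:matrix_contraction_assumption} itself, namely $\ffmatrix\infomat_{k|k}\ffmatrix\preceq\bar\gamma\infomat_{k|k}$. By induction, $\infomat_{k|k}$ is then dominated by the scalar-forgetting recursion with factor $\bar\gamma<1$, whose boundedness follows from Assumptions~\ref{ass:boundedness}--\ref{ass:invertibility} via the geometric series in \eqref{eq:S_expansion}. Both bounds are finally transferred to $\infomat_{k|k-1}$ through the prediction step, using the bounds on $\statemat_k^{-1}$ and $\ffmatrix$, and scaled by $\smallgain$ as in Lemma~\ref{lemma:conditioning}.

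\emph{Characterizations.}
\begin{enumerate}[(\roman*)]
\item Congruence by $\infomat_{k|k}^{-1/2}$ turns \eqref{eq:matrix_contraction_assumption} into $\boldsymbol M_k^\top\boldsymbol M_k\preceq\bar\gamma\eye{}$, where $\boldsymbol M_k=\infomat_{k|k}^{1/2}\ffmatrix\infomat_{k|k}^{-1/2}$. This is equivalent to $\|\boldsymbol M_k\|_2^2\le\bar\gamma$.
\item Submultiplicativity gives $\|\boldsymbol M_k\|_2\le\|\infomat_{k|k}^{1/2}\|\,\|\ffmatrix\|\,\|\infomat_{k|k}^{-1/2}\|=\|\ffmatrix\|\sqrt{\kappa(\infomat_{k|k})}$, which yields (ii).
\item Inserting $\kappa(\infomat_{k|k})\le\bar{s}^\smallgain/\underline{s}^\smallgain$ yields (iii).
\end{enumerate}
In (iii) some care is needed to avoid circularity, since the upper bound above was derived from \eqref{eq:matrix_contraction_assumption}. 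I would therefore use there an a priori upper bound that does not rely on the contraction assumption, for instance the one obtained from $\ffmatrix\infomat\ffmatrix\preceq\|\ffmatrix\|^2\cdot\kappa\cdot\infomat$ iterated, or simply one in which $\|\ffmatrix\|^2<1$ enters. Otherwise I would state (iii) as a consistency condition on those bounds.
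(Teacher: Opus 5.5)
Your Lyapunov-decrease computation and your proofs of (i)--(iii) are the paper's argument. The paper phrases (ii) in Loewner form, $\ffmatrix^\top\infomat_{k|k}\ffmatrix\preceq\|\ffmatrix\|_2^2\lambda_{\max}(\infomat_{k|k})\eye{}\preceq\|\ffmatrix\|_2^2\kappa(\infomat_{k|k})\infomat_{k|k}$, which is equivalent to your submultiplicativity bound. The paper takes the sandwich bounds straight from Lemma~\ref{lemma:conditioning} without comment. You are right that that lemma only covers scalar forgetting. However, your repair fails at the lower bound: $\ffmatrix\infomat\ffmatrix\succeq\gamma_{\min}^2\infomat$ is false for non-scalar diagonal $\ffmatrix$ and non-diagonal $\infomat$. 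Take $\infomat=\begin{bmatrix}1&a\\a&1\end{bmatrix}$ with $0<|a|<1$ and $\ffmatrix=\diag(g_1,g_2)$ with $g_1>g_2$. Then $\ffmatrix\infomat\ffmatrix-g_2^2\infomat$ has a zero $(2,2)$ entry and off-diagonal entry $a g_2(g_1-g_2)\neq0$, so it is indefinite. This is not only a defect of the comparison: Assumptions~\ref{ass:boundedness}--\ref{ass:invertibility} alone give no lower bound in the matrix case. Take $\Nrob=2$, $d=1$, $\statemat=\diag(1,2)$, one relative measurement $\measmat=[1\;\;-1]$ (so $(\statemat,\measmat)$ is observable), and $\ffmatrix=\diag(0.4,0.8)$. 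Then $\ffmatrix\statemat^{-1}=0.4\,\eye{2}$, hence $\infomat_{k|k}=0.16^k\infomat_{0|0}+(\text{bounded})\,\measmat^\top\measmat$ and $\lambda_{\min}(\infomat_{k|k})\to0$.

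The lower bound must therefore use \eqref{eq:matrix_contraction_assumption} itself. Let $\boldsymbol U_k\define\infomat_{k|k}^{1/2}\ffmatrix\infomat_{k|k}^{-1/2}$. It satisfies $\|\boldsymbol U_k\|_2^2\le\bar\gamma<1$ and $\det\boldsymbol U_k=\det\ffmatrix$. So its smallest singular value is at least $\det\ffmatrix/\|\boldsymbol U_k\|_2^{\Nrob d-1}\ge\det\ffmatrix$, i.e. $\ffmatrix\infomat_{k|k}\ffmatrix\succeq(\det\ffmatrix)^2\infomat_{k|k}$. Your monotone comparison then works with scalar factor $(\det\ffmatrix)^2$, and the observability-based lower bound needs no admissibility condition on that factor. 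The admissibility condition you attached to the lower bound actually belongs to your upper bound. Comparing with the scalar recursion with factor $\bar\gamma$ gives a convergent geometric series only if $\bar\gamma\,\bar a_{\mathrm{inv}}^2<1$, which $\bar\gamma<1$ does not imply. For $\Nrob d=1$, $\ffmatrix=\gamma$, $\statemat=a$ with $|a|<\gamma$, \eqref{eq:matrix_contraction_assumption} holds with $\bar\gamma=\gamma^2$, yet $\infomat_{k|k}$ grows like $(\gamma/a)^{2k}$. Finally, since any lower bound must invoke \eqref{eq:matrix_contraction_assumption}, your suspicion about (iii) is justified. It can only be read as a consistency condition on bounds obtained by other means, not as an independent sufficient condition; the paper's proof does not address this.
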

 
\begin{proof}
\textit{Lyapunov bound.}
From the reduced dynamics~\eqref{eq:reduced_system} and the matrix-forgetting prediction \eqref{eq:forgetting_prediction_matrix},
{\small
\[
    V_{k+1}
    =
    \estate_k^\top
    \infomat_{k|k-1}\infomat_{k|k}^{-1}
    \ffmatrix^\top \infomat_{k|k}\ffmatrix
    \infomat_{k|k}^{-1}\infomat_{k|k-1}
    \estate_k.
\]
}
Applying~\eqref{eq:matrix_contraction_assumption} and using
{\small
$\infomat_{k|k}\succeq\infomat_{k|k-1}$} (so that
{\small
$\infomat_{k|k-1}\infomat_{k|k}^{-1}\infomat_{k|k-1}\preceq\infomat_{k|k-1}$}),
{\small
\[
    V_{k+1} \le \bar\gamma\,V_k,
\]
}
and the stated decay follows from Lemma~\ref{lemma:conditioning}.
 
\textit{Equivalence of~(i).}
Pre- and post-multiplying~\eqref{eq:matrix_contraction_assumption}
by $\infomat_{k|k}^{-1/2}$ yields
{\small
\[
    \boldsymbol{\Upsilon}_k^\top\boldsymbol{\Upsilon}_k
    \preceq \bar\gamma I,
    \qquad
    \boldsymbol{\Upsilon}_k
    \define
    \infomat_{k|k}^{1/2}\ffmatrix\infomat_{k|k}^{-1/2},
\]
}
which is equivalent to $\|\boldsymbol{\Upsilon}_k\|_2^2\le\bar\gamma$.
 
\textit{Sufficiency of~(ii).}
Since {\small
$\ffmatrix^\top\infomat_{k|k}\ffmatrix
\preceq \|\ffmatrix\|_2^2\lambda_{\max}(\infomat_{k|k})I
\preceq \|\ffmatrix\|_2^2\kappa(\infomat_{k|k})\infomat_{k|k}$},
condition~\eqref{eq:matrix_contraction_assumption} holds whenever
{\small
$\|\ffmatrix\|_2^2\kappa(\infomat_{k|k})\le\bar\gamma$}.
 
\textit{Sufficiency of~(iii).}
The uniform bounds give
$\kappa(\infomat_{k|k})\le\bar{s}^\smallgain/\underline{s}^\smallgain$
for all $k$, so~(ii) is satisfied uniformly whenever
$\|\ffmatrix\|_2^2\,\bar{s}^\smallgain/\underline{s}^\smallgain<1$.
\end{proof}



\bibliographystyle{IEEEtran}
\bibliography{references}

@article{de2025distributed,
  title={A Distributed Kalman-like Observer with Dynamic Inversion-Based Correction for Multi-Agent Estimation},
  author={De Carli, Nicola and Dimarogonas, Dimos V},
  journal={IEEE Control Systems Letters},
  year={2025},
  publisher={IEEE}
}

@article{neal2011distributed,
  title={Distributed optimization and statistical learning via the alternating direction method of multipliers},
  author={Neal, Parikh and Eric, Chu and Borja, Peleato and Jonathan, Eckstein},
  journal={Foundations and Trends{\textregistered} in Machine learning},
  volume={3},
  number={1},
  pages={1--122},
  year={2011},
  publisher={Emerald Publishing Limited}
}

@article{bastianello2020asynchronous,
  title={Asynchronous distributed optimization over lossy networks via relaxed ADMM: Stability and linear convergence},
  author={Bastianello, Nicola and Carli, Ruggero and Schenato, Luca and Todescato, Marco},
  journal={IEEE Transactions on Automatic Control},
  volume={66},
  number={6},
  pages={2620--2635},
  year={2020},
  publisher={IEEE}
}

@article{bastianello2024robust,
  title={Robust online learning over networks},
  author={Bastianello, Nicola and Deplano, Diego and Franceschelli, Mauro and Johansson, Karl H},
  journal={IEEE Transactions on Automatic Control},
  volume={70},
  number={2},
  pages={933--946},
  year={2024},
  publisher={IEEE}
}

@inproceedings{bastianello2018partition,
  title={A partition-based implementation of the relaxed ADMM for distributed convex optimization over lossy networks},
  author={Bastianello, Nicola and Carli, Ruggero and Schenato, Luca and Todescato, Marco},
  booktitle={2018 IEEE Conference on Decision and Control (CDC)},
  pages={3379--3384},
  year={2018},
  organization={IEEE}
}

@inproceedings{shorinwa2020scalable,
  title={Scalable distributed optimization with separable variables in multi-agent networks},
  author={Shorinwa, Olaoluwa and Halsted, Trevor and Schwager, Mac},
  booktitle={2020 American Control Conference (ACC)},
  pages={3619--3626},
  year={2020},
  organization={IEEE}
}

@article{carnevale2025admm,
  title={ADMM-tracking gradient for distributed optimization over asynchronous and unreliable networks},
  author={Carnevale, Guido and Bastianello, Nicola and Notarstefano, Giuseppe and Carli, Ruggero},
  journal={IEEE Transactions on Automatic Control},
  volume={70},
  number={8},
  pages={5160--5175},
  year={2025},
  publisher={IEEE}
}

@article{carnevale2025unifying,
  title={A unifying system theory framework for distributed optimization and games},
  author={Carnevale, Guido and Mimmo, Nicola and Notarstefano, Giuseppe},
  journal={IEEE Transactions on Automatic Control},
  year={2025},
  publisher={IEEE}
}

@article{cticlea2013exponential,
  title={Exponential forgetting factor observer in discrete time},
  author={{\c{T}}iclea, Alexandru and Besan{\c{c}}on, Gildas},
  journal={Systems \& Control Letters},
  volume={62},
  number={9},
  pages={756--763},
  year={2013},
  publisher={Elsevier}
}

@article{bernard2020semi,
  title={On the semi-global stability of an ek-like filter},
  author={Bernard, Pauline and Mimmo, Nicola and Marconi, Lorenzo},
  journal={IEEE Control Systems Letters},
  volume={5},
  number={5},
  pages={1771--1776},
  year={2020},
  publisher={IEEE}
}

@article{gomez2011taxonomy,
  title={A taxonomy of multi-area state estimation methods},
  author={G{\'o}mez-Exp{\'o}sito, Antonio and De La Villa Ja{\'e}n, Antonio and G{\'o}mez-Quiles, Catalina and Rousseaux, Patricia and Van Cutsem, Thierry},
  journal={Electric Power Systems Research},
  volume={81},
  number={4},
  pages={1060--1069},
  year={2011},
  publisher={Elsevier}
}

@article{yang2021distributed,
  title={Distributed Kalman-like filtering and bad data detection in the large-scale power system},
  author={Yang, Jun and Zhang, Wen-An and Guo, Fanghong},
  journal={IEEE transactions on industrial informatics},
  volume={18},
  number={8},
  pages={5096--5104},
  year={2021},
  publisher={IEEE}
}

@article{luft2018recursive,
  title={Recursive decentralized localization for multi-robot systems with asynchronous pairwise communication},
  author={Luft, Lukas and Schubert, Tobias and Roumeliotis, Stergios I and Burgard, Wolfram},
  journal={The International Journal of Robotics Research},
  volume={37},
  number={10},
  pages={1152--1167},
  year={2018},
  publisher={SAGE Publications Sage UK: London, England}
}

@article{de2024adaptive,
  title={Adaptive observer from body-frame relative position measurements for cooperative localization},
  author={De Carli, Nicola and Restrepo, Esteban and Giordano, Paolo Robuffo},
  journal={IEEE Control Systems Letters},
  volume={8},
  pages={1337--1342},
  year={2024},
  publisher={IEEE}
}

@article{chang2021resilient,
  title={Resilient and consistent multirobot cooperative localization with covariance intersection},
  author={Chang, Tsang-Kai and Chen, Kenny and Mehta, Ankur},
  journal={IEEE Transactions on Robotics},
  volume={38},
  number={1},
  pages={197--208},
  year={2021},
  publisher={IEEE}
}

@article{kia2016cooperative,
  title={Cooperative localization for mobile agents: A recursive decentralized algorithm based on Kalman-filter decoupling},
  author={Kia, Solmaz S and Rounds, Stephen and Martinez, Sonia},
  journal={IEEE Control Systems Magazine},
  volume={36},
  number={2},
  pages={86--101},
  year={2016},
  publisher={IEEE}
}

@inproceedings{jung2020decentralized,
  title={Decentralized collaborative state estimation for aided inertial navigation},
  author={Jung, Roland and Brommer, Christian and Weiss, Stephan},
  booktitle={2020 IEEE International Conference on Robotics and Automation (ICRA)},
  pages={4673--4679},
  year={2020},
  organization={IEEE}
}

@inproceedings{de2021online,
  title={Online decentralized perception-aware path planning for multi-robot systems},
  author={De Carli, Nicola and Salaris, Paolo and Giordano, Paolo Robuffo},
  booktitle={2021 International Symposium on Multi-Robot and Multi-Agent Systems (MRS)},
  pages={128--136},
  year={2021},
  organization={IEEE}
}

@article{roumeliotis2002distributed,
  title={Distributed multirobot localization},
  author={Roumeliotis, Stergios I and Bekey, George A},
  journal={IEEE transactions on robotics and automation},
  volume={18},
  number={5},
  pages={781--795},
  year={2002},
  publisher={IEEE}
}

@inproceedings{arambel2001covariance,
  title={Covariance intersection algorithm for distributed spacecraft state estimation},
  author={Arambel, Pablo O and Rago, Constantino and Mehra, Raman K},
  booktitle={Proceedings of the 2001 American Control Conference.(Cat. No. 01CH37148)},
  volume={6},
  pages={4398--4403},
  year={2001},
  organization={IEEE}
}

@inproceedings{carrillo2013decentralized,
  title={Decentralized multi-robot cooperative localization using covariance intersection},
  author={Carrillo-Arce, Luis C and Nerurkar, Esha D and Gordillo, Jos{\'e} L and Roumeliotis, Stergios I},
  booktitle={2013 IEEE/RSJ international conference on intelligent robots and systems},
  pages={1412--1417},
  year={2013},
  organization={IEEE}
}

@article{klingner2019fault,
  title={Fault-tolerant covariance intersection for localizing robot swarms},
  author={Klingner, John and Ahmed, Nisar and Correll, Nikolaus},
  journal={Robotics and Autonomous Systems},
  volume={122},
  pages={103306},
  year={2019},
  publisher={Elsevier}
}

@article{battistelli2014consensus,
  title={Consensus-based linear and nonlinear filtering},
  author={Battistelli, Giorgio and Chisci, Luigi and Mugnai, Giovanni and Farina, Alfonso and Graziano, Antonio},
  journal={IEEE Transactions on Automatic Control},
  volume={60},
  number={5},
  pages={1410--1415},
  year={2014},
  publisher={IEEE}
}

@article{kamal2013information,
  title={Information weighted consensus filters and their application in distributed camera networks},
  author={Kamal, Ahmed T and Farrell, Jay A and Roy-Chowdhury, Amit K},
  journal={IEEE Transactions on Automatic Control},
  volume={58},
  number={12},
  pages={3112--3125},
  year={2013},
  publisher={IEEE}
}

@article{sebastian2023eco,
  title={ECO-DKF: Event-triggered and certifiable optimal distributed Kalman filter under unknown correlations},
  author={Sebasti{\'a}n, Eduardo and Montijano, Eduardo and Sag{\"u}{\'e}s, Carlos},
  journal={IEEE Transactions on Automatic Control},
  volume={69},
  number={4},
  pages={2613--2620},
  year={2023},
  publisher={IEEE}
}

@book{bertsekas2015parallel,
  title={Parallel and distributed computation: numerical methods},
  author={Bertsekas, Dimitri and Tsitsiklis, John},
  year={2015},
  publisher={Athena Scientific}
}

@inproceedings{de2024distributed,
  title={Distributed control barrier functions for global connectivity maintenance},
  author={De Carli, Nicola and Salaris, Paolo and Giordano, Paolo Robuffo},
  booktitle={2024 IEEE International Conference on Robotics and Automation (ICRA)},
  pages={12048--12054},
  year={2024},
  organization={IEEE}
}

@article{parikh2014proximal,
  title={Proximal algorithms},
  author={Parikh, Neal and Boyd, Stephen},
  journal={Foundations and Trends in optimization},
  volume={1},
  number={3},
  pages={127--239},
  year={2014},
  publisher={Emerald Publishing Limited}
}

@book{bauschke2011convex,
  author    = {Bauschke, Heinz H. and Combettes, Patrick L.},
  title     = {Convex Analysis and Monotone Operator Theory in Hilbert Spaces},
  publisher = {Springer},
  year      = {2011},
  series    = {CMS Books in Mathematics},
  address   = {New York, NY},
  doi       = {10.1007/978-1-4419-9467-7},
  isbn      = {978-1-4419-9467-7},
  edition   = {1st}
}

@article{ryu2016primer,
  title={Primer on monotone operator methods},
  author={Ryu, Ernest K and Boyd, Stephen},
  journal={Appl. comput. math},
  volume={15},
  number={1},
  pages={3--43},
  year={2016}
}

@book{berman1994nonnegative,
  title={Nonnegative matrices in the mathematical sciences},
  author={Berman, Abraham and Plemmons, Robert J},
  year={1994},
  publisher={SIAM}
}

\ifTAC
\begin{IEEEbiography}[{\includegraphics[width=1in,height=1.25in,clip,keepaspectratio]{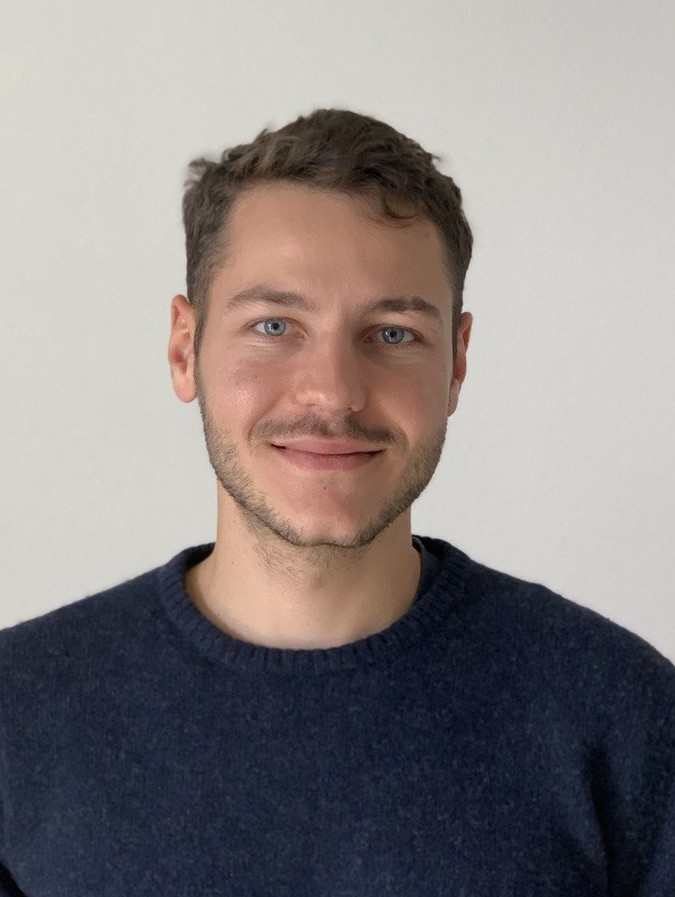}}]{Nicola De Carli} (Member, IEEE) is a post-doc at the Department of Decision and Control
Systems, KTH Royal Institute of Technology, Sweden. He received his Ph.D in Automatic Production and Robotics at the University of Rennes, France in 2024. During the Ph.D. he was a visiting student at the University of Pisa, Italy.  
He received the master degree in Robotics Engineering (2020) and the bachelor degree in Information Engineering (2018) from the University of Genova, Italy. His current research interests include multi-robot systems control and cooperative localization.
\end{IEEEbiography}

\begin{IEEEbiography}[{\includegraphics[width=1in,height=1.25in,clip,keepaspectratio]{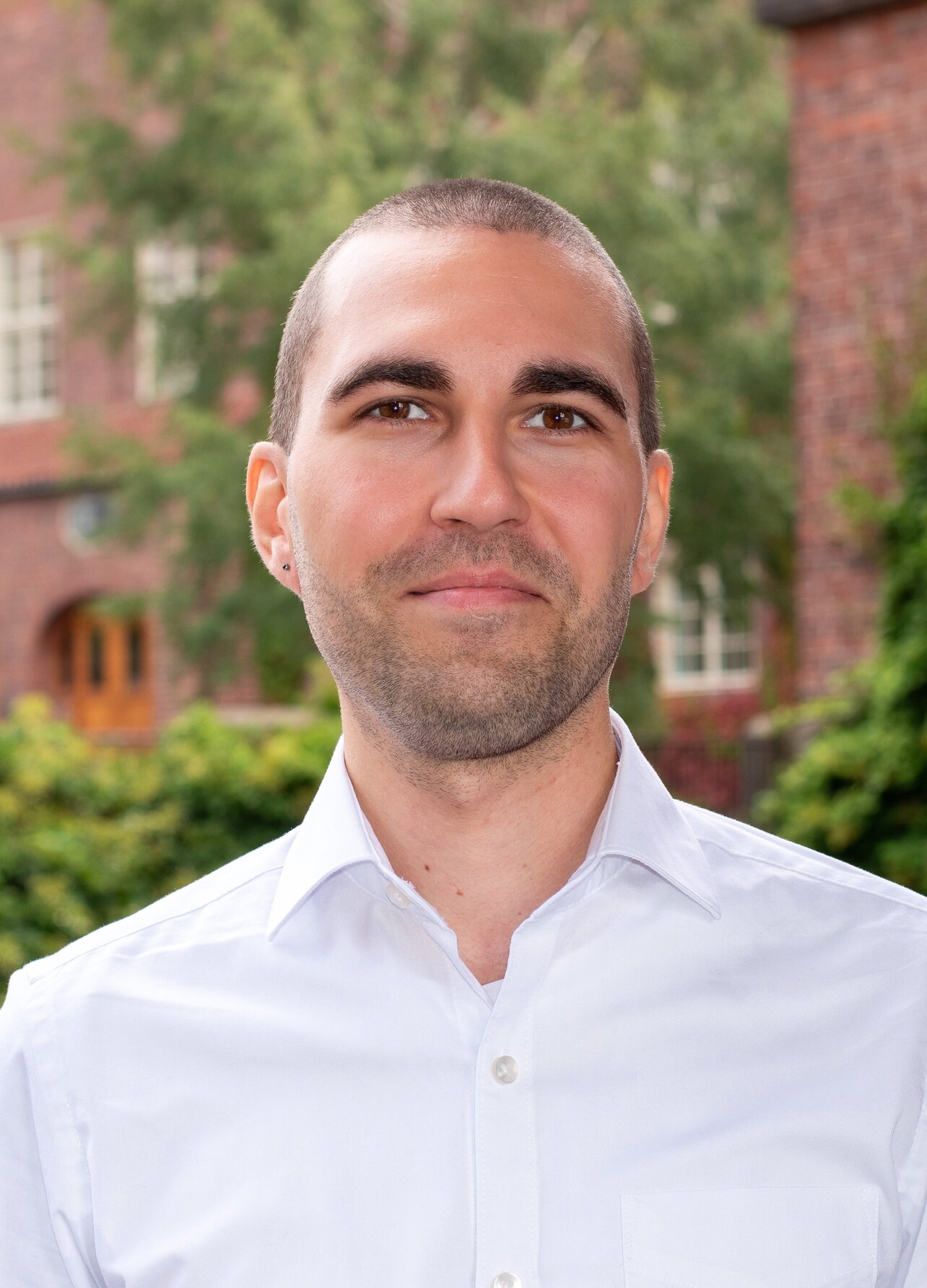}}]{Nicola Bastianello} (Member, IEEE) is a post-doc at the School of Electrical Engineering and Computer Science, and Digital Futures, KTH Royal Institute of Technology, Sweden. From 2021 to 2022 he was a post-doc at the Department of Information Engineering (DEI), University of Padova, Italy. He received the Ph.D. in Information Engineering at the University of Padova, Italy in 2021. During the Ph.D. he was a visiting student at the Department of Electrical, Computer, and Energy Engineering (ECEE), University of Colorado Boulder, Colorado, USA. He received the master degree in Automation Engineering (2018) and the bachelor degree in Information Engineering (2015) from the University of Padova, Italy. He currently serves in the IEEE CSS and EUCA Conference Editorial Boards. His research lies at the intersection of optimization and learning, with a focus on multi-agent systems.
\end{IEEEbiography}

\begin{IEEEbiography}[{\includegraphics[width=1in,height=1.25in,clip,keepaspectratio]{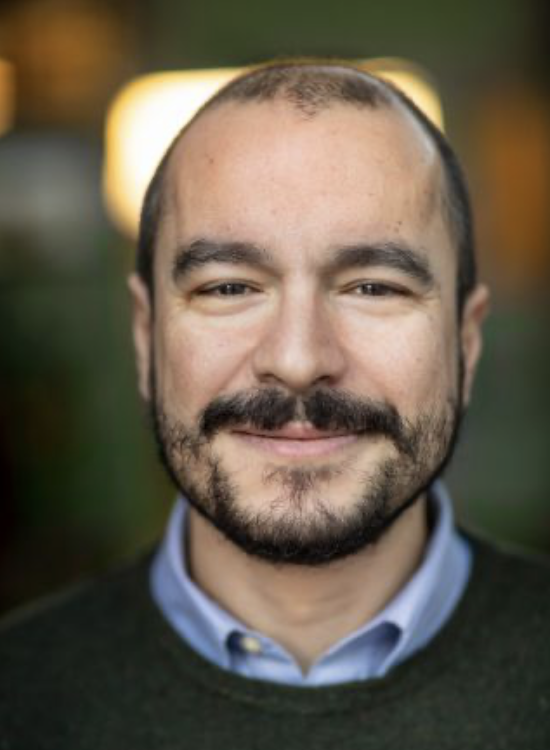}}]{Dimos V. Dimarogonas} (Fellow Member, IEEE) received the Diploma degree in electrical and computer engineering and the Ph.D. degree in mechanical engineering from the National Technical University of Athens, Athens, Greece, in 2001 and 2007, respectively. Between 2007 and 2010, he held Postdoctoral positions with the Department of Automatic Control, KTH Royal Institute of Technology and the Laboratory for Information and Decision Systems, Massachusetts Institute of Technology, Cambridge, MA, USA. He is currently a Professor with the Department of Decision and Control Systems, School of Electrical Engineering and Computer Science, KTH Royal Institute of Technology. His current research interests include multi-agent systems, hybrid systems and control, robot navigation and manipulation, human–robot interaction, and networked control. Prof. Dimarogonas serves as an Associate Editor of Automatica and a Senior Editor of IEEE Transactions on Control of Network Systems. He was a recipient of the ERC starting Grant in 2014, the ERC Consolidator Grant in 2019, and the Knut och Alice Wallenberg Academy Fellowship in 2015. Prof. Dimarogonas is a Fellow of the IEEE.
\end{IEEEbiography}
\fi

\end{document}